\newcommand{\rmv}[1]{}
\newcommand{\MP}[1]{}
\newcommand{\RMP}[1]{}
\newcommand{\XMP}[1]{}
\newcommand{\VMP}[1]{}
\newcommand{\REG}{\textbf{R}}
\newcommand{\nts}{\textit{new\_ts}}
\newcommand{\nsq}{\textit{new\_sq}}
\newcommand{\str}{\textrm{write strong-linearization}} 
\newcommand{\sly}{\textrm{write strongly-linearizable}}
\newcommand{\RS}{\textit{S}_R}
\newcommand{\W}[1]{w_{#1}}
\newcommand{\RD}[1]{r_{#1}}
\newcommand{\lS}{\textit{S}}
\newcommand{\ts}[1]{\textit{ts}_{#1}}
\newcommand{\seth}{\mathcal{H}}
\newcommand{\ex}{\textrm{reach}}
\newcommand{\exs}{\textrm{reaches}}
\newcommand{\twkp}{t_w^{k+1}}
\newcommand{\trk}{t_r^{k}}
\newcommand{\two}{t_w^{0}}
\newcommand{\swA}{\mathcal{A}}
\newcommand{\seq}[1]{S_{#1}}
\newcommand{\wseq}[1]{WS_{#1}}
\newcommand{\sC}[1]{\mathcal{C}_{#1}}
\newcommand{\B}[1]{\mathcal{B}_{#1}}
\newcommand{\pts}[2]{\textit{ts}_{#1}^{#2}}
\newcommand{\SR}{S_\mathcal{R}}
\newcommand{\Awsl}{\textrm{Algorithm}~\ref{wsl}}
\newcommand{\f}{\textrm{Algorithm}~\ref{f}}
\newcommand{\Al}{\textrm{Algorithm}~\ref{Al}}
\newcommand{\procs}{p_2, p_3, \ldots, p_{n-1}}
\newcommand{\Procs}{\textrm{processes } p_2, p_3, \ldots, p_{n-1}}
\newcounter{theorem}
\newcounter{NewCounter}
\newcounter{claimcount}[NewCounter]
\renewcommand{\p@claimcount}{\theNewCounter.}
\newcounter{cclaimcount}[claimcount]
\renewcommand{\p@cclaimcount}{\theNewCounter.\theclaimcount.}
\newenvironment{claim}{
\par\refstepcounter{claimcount}\vspace{5pt}\noindent\textbf{Claim \arabic{NewCounter}.\arabic{claimcount}}\begin{itshape}
}{\end{itshape}}
\newenvironment{cclaim}{ 
\par\refstepcounter{cclaimcount}\vspace{5pt}\noindent\textbf{Claim \arabic{NewCounter}.\arabic{claimcount}.\arabic{cclaimcount}}\begin{itshape}}{\end{itshape}}
\newtheorem{theorem}[NewCounter]{Theorem}
\newtheorem{corollary}[NewCounter]{Corollary}
\newtheorem{lemma}[NewCounter]{Lemma}
\newtheorem{observation}[NewCounter]{Observation}
\newtheorem*{notation}{Notation}
\newtheorem{definition}[NewCounter]{Definition}
\title{On Register Linearizability and Termination}
\author{Vassos Hadzilacos \qquad Xing Hu \qquad Sam Toueg\\~\\Department of Computer Science\\University of Toronto\\Canada}
\begin{document}
\maketitle

\begin{abstract}
 It is well-known that, for deterministic algorithms,
 	\emph{linearizable} objects can be used
	as if they were atomic objects.
As pointed out by Golab, Higham, and Woelfel, however,
	a \emph{randomized} algorithm that works with atomic objects may lose some of its properties
	if we replace the atomic objects that it uses with objects that are only linearizable.
It was not known whether the properties that can be lost include the all-important property of termination (with probability 1).
In this paper, we first show that a randomized algorithm
	can indeed lose its termination property if we replace the atomic registers that it uses
	with linearizable ones.

Golab \emph{et al.} also introduced \emph{strong linearizability}, 
	and proved that strongly linearizable objects can be used
	as if they were atomic objects, \emph{even for randomized algorithms}:
	they can replace atomic objects while preserving the algorithm's correctness properties,
	including termination.
Unfortunately, there are important cases where strong linearizability is impossible to achieve.
In particular, Helmi, Higham, and Woelfel showed a large class of ``non-trivial'' objects, including MWMR registers,
	do not have strongly linearizable implementations from SWMR registers. 

Thus we propose a new type of register linearizability, called \emph{write strong-linearizability},
	that is strictly stronger than (plain) linearizability
	but strictly weaker than strong linearizability.
This intermediate type of linearizability has some desirable properties.
We prove that some randomized algorithms that
	fail to terminate with linearizable registers,	
	work with $\sly$ ones.
In other words, 
	there are cases where linearizability is not sufficient but
	write strong-linearizability is.
In contrast to the impossibility result mentioned above, 
	we prove that $\sly$ MWMR registers \emph{are} implementable from SWMR registers.
	Achieving write strong-linearizability, however, is harder than achieving just linearizability:
	we give a simple implementation of  MWMR registers
	from SWMR registers
	and we prove that this implementation is linearizable but \emph{not} $\sly$.
Finally, 
	we prove that \emph{any} linearizable implementation of SWMR registers is necessarily $\sly$;
	this holds for shared-memory, message-passing, and hybrid systems.
\end{abstract}


\section{Introduction}

 \emph{Linearizability} is a well-known and very useful property of shared object implementations~\cite{linearizability}.
Intuitively, with a linearizable (implementation of) object each operation must appear as if it takes effect instantaneously
	at some point during the time interval that it actually spans;
	for deterministic algorithms linearizable objects can be used as if they were atomic.\footnote{Throughout the paper we consider only implementations that are \emph{wait free}~\cite{herlihy91}.}
As pointed out by the seminal work of Golab \emph{et al.} \cite{sl11}, however,
	linearizable objects are not as strong as atomic objects in the following sense:
	a randomized algorithm that works with atomic objects may lose some of its properties
	if we replace the atomic objects that it uses with objects that are only linearizable.
In particular, they present a randomized algorithm
	that guarantees that
	some random variable has \emph{expected value} 1,
	but if we replace the algorithm's atomic registers with linearizable registers,
	a \mbox{\emph{strong adversary}} can manipulate schedules to
	ensure that this random variable
	has expected value $\frac{1}{2}$.

\subsection{Linearizability and termination}
A natural question is whether termination is one of the properties that can be lost
	with the use of linearizable objects.
More precisely:
	is there a randomized algorithm that 
	(a) terminates with probability 1 against a strong adversary when the objects that it uses are atomic, but
	(b) when these objects are replaced with
	linearizable objects, 
 	a strong adversary can 
	ensure that
	the algorithm never terminates?
This question is particularly interesting because
	achieving termination is
	one of the main uses of randomized algorithms
	(e.g., to ``circumvent'' the famous FLP impossibility result~\cite{flp})~\cite{abrahamson1988,aspnes1993,aspnes1998,
	aspnes2003,aspnes1990,aspnes1992,attiya08,bracha1991,chandra1996},
	but it
	is not answered by the results in~\cite{sl11},
	and to the best of our knowledge,
	it is also not addressed in subsequent papers on this subject~\cite{sl19,sl15,sl12}.


In this paper, among other things, we show that termination can be lost.
Specifically, we give a randomized algorithm 
	that uses registers such that:
	(1) if these registers are atomic then the algorithm terminates,
	i.e., all the processes 
	halt, with probability~1, even against a strong adversary,
	and
	(2) if the registers are ``only'' linearizable,
	a strong adversary can prevent termination:
	it can always manipulate schedules so that processes never halt. 
%

\subsection{Strong linearizability}
Golab \emph{et al.} also introduced a stronger version of linearizability called \emph{strong linearizability}~\cite{sl11}.
Intuitively, while in linearizability the order of all operations can be determined ``off-line'' given the entire execution,
	in strong linearizability the order of all operations has to be fixed irrevocably ``on-line'' without knowing the rest of the execution.
	Golab \emph{et al.} proved that strongly linearizable (implementations of) objects are ``as good'' as atomic objects, even for randomized algorithms:
	they can replace atomic objects while preserving the algorithm's correctness properties
	including termination with probability~1.
Unfortunately, there are important cases where strong linearizability is impossible to achieve.

For example, Helmi \emph{et al.} proved that
	a large class of so-called \emph{non-trivial} objects, including multi-writer multi-reader (MWMR) registers,
	do not
	have strongly linearizable implementations from single-writer multi-reader (SWMR) registers~\cite{sl12}.
This impossibility result may affect many existing randomized algorithms (e.g.,\cite{renaming2014,renaming2011,CRenaming2011,renaming2010,aspnes1993,appcount10}):
,	these algorithms use atomic MWMR registers,
	so if we want to run them in systems with SWMR registers
	we cannot hope to \emph{automatically} do so
	just by replacing their atomic MWMR registers
	with strongly linearizable implementations from SWMR registers.
%

Similarly, consider the well-known ABD algorithm that implements linearizable SWMR
	registers in message-passing systems~\cite{abd}.\footnote{This implementation works under the assumption that fewer than half of the processes may crash.}
One important use of this algorithm is to relate message-passing and shared-memory systems as follows:
	any algorithm that works with atomic shared registers can be automatically
	transformed into an algorithm for message-passing systems
	by replacing its atomic registers with the ABD register implementation.
It has been recently shown, however, 
	that the ABD register implementation is \emph{not} strongly linearizable~\cite{abdnotsl}.
Thus one cannot use the ABD implementation to automatically transform any shared-memory randomized algorithm
	that terminates with probability~1
	into an algorithm that works in message-passing systems:
	using the ABD register implementation instead of atomic registers
	in a randomized algorithm may prevent termination.


\subsection{Write strong-linearizability}

Motivated by the impossibility of implementing strongly linearizable registers mentioned above,
	we propose a new type of register linearizability, called \emph{write strong-linearizability},
	that is strictly stronger than (plain) linearizability
	but strictly weaker than strong linearizability.
Intuitively, while in strong linearizability the order of \emph{all} operations has to be fixed irrevocably ``on-line''
	without knowing the rest of the execution,
	in write strong-linearizability only the write operations must be ordered ``on-line''.
This intermediate type of linearizability has some desirable properties,
	as described below:
	
%
%
%

\begin{itemize}

\item In some cases where linearizability is not sufficient to achieve termination,
	write strong-linearizability is.
	To show this, we describe a randomized algorithm such that
	if the registers of this algorithm are only \emph{linearizable},
    then a strong adversary can prevent its termination;
	but if they are \emph{$\sly$}, 
    then the algorithm terminates with probability 1 (Section~\ref{sectiontoyalgo}).
   	
   We then generalize this result as follows:
	for every randomized algorithm $\mathcal{A}$ that solves a task $T$
	and terminates with probability~1 against a strong adversary,
	there is a corresponding randomized algorithm $\mathcal{A}'$ for task $T$ such that:
	(a)~if the registers that $\mathcal{A}'$ uses are only linearizable,
		$\mathcal{A}'$ does not terminate, but
	(b) if they are $\sly$, $\mathcal{A}'$ terminates.



\item In contrast to the impossibility result proved in~\cite{sl12},
	$\sly$ MWMR registers \emph{are} implementable from SWMR registers.
	To prove this we modify a known implementation of MWMR registers~\cite{swlamport},
	and we linearize the write operations ``on-line'' by using vector timestamps
	that may be \emph{only partially formed} (Section~\ref{sectionwslalgo}).
%
	
	Achieving write strong-linearizability, however, is harder than achieving just linearizability.
	We give a simpler implementation of  MWMR registers
	from SWMR registers
	that uses Lamport clocks to timestamp writes~\cite{lam86},
	and we prove that this implementation is linearizable but \emph{not} $\sly$ (Section~\ref{sectionlamport}).

\item	Although the ABD implementation of SWMR registers is not strongly linearizable,
	we show that it is actually $\sly$.
	In fact, we prove that \emph{any} linearizable implementation of SWMR registers is necessarily $\sly$;
	this holds for message-passing, shared-memory, and hybrid systems (Section~\ref{swsection}).
\end{itemize}

Finally, it is worth noting that even though we focus on registers here,
	our intermediate notion of linearizability
	can be extended to other types of objects
	and operations.
Intuitively, an implementation of an object is \emph{strongly linearizable with respect to a subset of operations $O$}
	if the order of all operations \emph{in $O$} must be fixed irrevocably ``on-line'' without knowing the rest of the execution.

All the results presented are proven in this paper, but due to the space limitation
	several proofs are relegated to optional appendices.



\section{Model sketch}
We consider a standard distributed system where asynchronous processes that may fail by crashing
communicate via registers and other shared objects.
In such systems, shared objects can be used to \emph{implement} other shared objects
such that the implemented objects are \emph{linearizable} and \emph{wait-free}~\cite{herlihy91,linearizability}.

\subsection{Atomic registers}
A register $R$ is \emph{atomic} if its read and write operations are \emph{instantaneous} (i.e., indivisible);
	each read 
	must return the value of the last write 
	that precedes it, or the initial value of $R$ if no such write exists.
A SWMR register $R$ is shared by a set $S$ of processes such that it can be written
	(sequentially) by exactly one process $w \in S$
	and can be read by all processes~in~$S$; we say that $w$ is the \emph{writer} of $R$~\cite{lam86}.	
A MWMR register $R$ is shared by a set $S$ of processes such that it can be written
	and read by all processes~in~$S$.

\subsection{Linearizable implementations of registers}
In an object implementation,
	each operation spans an interval that starts
	with an \emph{invocation} and terminates with a \emph{response}.

\begin{definition}
Let $o$ and $o'$ be any two operations.
\begin{itemize}

\item $o$ \emph{precedes} $o'$ if the response of $o$ occurs 
	before \mbox{the invocation of $o'$.}\XMP{make terms consistent}

\item $o$ \emph{is concurrent with} $o'$ if neither precedes the other.

\end{itemize}
\end{definition}


Roughly speaking,
	an object implementation is \emph{linearizable}~\cite{linearizability}
	if, although operations can be concurrent,
	operations behave as if they occur in a \emph{sequential} order (called ``linearization order'')
	that is consistent with the order in which operations actually occur:
	if an operation $o$ precedes an operation $o'$, then $o$ is before $o'$ in the linearization order
	(the precise definition is given in~\cite{linearizability}).

Let $\seth$ be the set of histories of a register implementation.
An operation $o$ is \emph{complete} in a history $H\in\seth$ if $H$ contains both the invocation and response of $o$,
	otherwise $o$ is \emph{pending}.
\XMP{define set of history H, we have it before?}

\begin{definition}\label{defl}
A function $f$ is \emph{a linearization function for $\seth$} (with respect to the type register)
	if it maps each history $H \in \seth$ to a sequential history $f(H)$ such that:\MP{sequential history def?}
\begin{enumerate}
	\item \label{p1} $f(H)$ contains all completed operations of $H$ and possibly some non-completed ones (with matching responses added).

	\item \label{p2} If operation $o$ precedes $o'$ in $H$,
		then $o$ occurs before $o'$ in $f(H)$.

	\item \label{p3} For any read operation $r$ in $f(H)$,
		if no write operation occurs before $r$ in $f(H)$, then $r$ reads the initial value of the register; 
		otherwise, $r$ reads the value written by the last write operation that occurs before $r$ in $f(H)$.
\end{enumerate}
\end{definition}

\begin{definition}\cite{sl11}
A function $f$ is \emph{a strong linearization function for $\seth$} if:

(L) $f$ is a linearization function for $\seth$, and
	
(P) for any histories $G,H \in \seth$, if $G$ is a prefix of $H$,
	then $f(G)$ is a prefix of $f(H)$.
\end{definition}

By restricting the strong linearization requirement, i.e., property (P)
	 to write operations only,
	 we define the following:
	 
\begin{definition}\label{defwsl}
A function $f$ is \emph{a $\str$ function for $\seth$} if:

(L) $f$ is a linearization function for $\seth$, and

(P) for any histories $G,H \in \seth$, if $G$ is a prefix of $H$, 
	then the sequence of write operations in $f(G)$ is a prefix of the sequence of write operations in $f(H)$.
\end{definition}

\begin{definition}\label{L-I}
An algorithm $\mathcal{A}$ that implements a \emph{register}
	is linearizable,
	$\sly$, or strong linearizable,
	if there is a linearization, $\str$, or strong linearization function (with respect to the type register)
	for the set of histories $\seth$ of $\mathcal{A}$.
\end{definition}

\section{Termination under linearizability and write strong-linearizability}\label{sectiontoyalgo}
In this section, 
	we show that in some cases linearizability is not sufficient for termination but
	write strong-linearizability~is.
To do so,
	we present a randomized algorithm, namely Algorithm~\ref{toyalgo}, and prove that
	(a) it fails to terminate if its registers are only linearizable	
	but (b) it terminates if they are $\sly$.
We then use Algorithm~\ref{toyalgo} to show that
	every randomized algorithm $\mathcal{A}$ that solves a task
	and terminates with probability~1 against a strong adversary,
	has a corresponding randomized algorithm $\mathcal{A}'$ for the same task such that:
	(a) if the registers that $\mathcal{A}'$ uses are linearizable,
		$\mathcal{A}'$ does not terminate, but
	(b) if they are $\sly$, $\mathcal{A}'$ terminates.

\begin{algorithm}[!ht]
\caption{A game for $n\ge3$ processes}
\begin{flushleft}
\textsc{Shared MWMR registers} $R_1, R_2, C$
\vspace*{-2mm}
\end{flushleft}
\begin{multicols}{2}
\label{toyalgo}
\begin{algorithmic}[1]

\STATEx \textsc{Code of process $p_i$, $i\in \{0, 1\}$}:
\vspace{1mm}
\FOR  {rounds $j=1,2,...$}\label{enter1}
\vspace{.7mm}
\STATE \{* \textbf{Phase 1} *\}
\STATE $R_1 \gets [i,j]$ \label{pwrite1}
\IF{$i=0$}\label{cointosser1} 
\STATE \{* $p_0$ flips a coin and writes it into $C$ *\}
\STATE $c \gets$ coin flip \label{pcoin1-1}
\STATE $C \gets c$  \label{pcoin1-2}
\ENDIF
\vspace{.7mm}
\STATE \{* \textbf{Phase 2} *\}
\STATE $R_2 \gets \textsc{0}$ \label{cleanr2}
\STATE $v\gets R_2$ \label{v1}
\IF{$v < n-2$}\label{guard2}
     \STATE \textbf{exit for loop} \label{exit2}
\ENDIF
\vspace{.7mm}
\ENDFOR
\STATE \textbf{return}\label{halt0}

\columnbreak

\STATEx \textsc{Code of process $p_i$, \mbox{$i\in \{2, 3, \ldots, n-1\}$}:}
\vspace{1mm}

\FOR  {rounds $j=1,2,...$}\label{enter2}
\vspace{.7mm}
\STATE \{* \textbf{Phase 1} *\}
\STATE $R_1 \gets \bot$\label{cleanr1}
\STATE $C \gets \bot$\label{cleanc}
\STATE $u_1 \gets R_1$ \label{u1}
\STATE $u_2 \gets R_1$ \label{u2}
\STATE $c \gets C$ \label{rcoin1}
\IF{($u_1 = \bot$ \textbf{or} $u_2 = \bot$ \textbf{or} $c = \bot$)}\label{guard0}
     \STATE \textbf{exit for loop} \label{exit0}
\ENDIF
%
\IF{($u_1 \neq [c,j]$ \textbf{or} $u_2 \neq [1-c,j]$)}\label{guard1}
     \STATE \textbf{exit for loop} \label{exit1}
\ENDIF\label{endexit}
\vspace{.7mm}
\STATE \{* \textbf{Phase 2} *\}
\STATE $R_2 \gets \textsc{0}$ \label{cleanr2-2}
\STATE $v \gets R_2$ \label{pread2}
\STATE $v \gets v+1$ \label{inc2}
\STATE $R_2 \gets v$ \label{pwrite2}
\vspace{.7mm}
\ENDFOR
\STATE \textbf{return}\label{halt1}
\end{algorithmic}
\end{multicols}
\vspace*{-3mm}
\end{algorithm}

Algorithm~\ref{toyalgo} uses three MWMR registers $R_1$, $R_2$, and $C$.
It can be viewed as a game executed by $n \ge 3$ processes, which are partitioned into
	two groups: the ``hosts'' $p_0$ and $p_1$ and the ``players'' $p_2,...,p_n$.
The game proceeds in asynchronous rounds, each round consisting of two phases.
In Phase 1 of a round $j$, process $p_1$ writes $[1,j]$ into $R_1$;
	while $p_0$ first writes $[0,j]$ in $R_1$,
	 and then it writes the result
	of a 0-1 coin flip into $C$ (lines~\ref{pwrite1}-\ref{pcoin1-2}).
After doing so, each of $p_0$ and $p_1$ proceeds to Phase 2.
	
In Phase 1, each player $p_i$ ($2 \le i \le n-1)$ reads $R_1$ twice (lines~\ref{u1}--\ref{u2}),
	and then it reads $C$ (line~\ref{rcoin1}).
If $p_i$ reads $c \in \{0,1\}$ from $C$ and the sequence of two values that it read from $R_1$ is
	$[c,j]$ and then $[1-c,j]$, $p_i$ proceeds to Phase~2,
	otherwise it exits the game (lines~\ref{guard0}-\ref{endexit}).
Every player $p_i$ that stays in the game
	resets $R_2$ to $0$ (line~\ref{cleanr2-2})
	and tries to increment it by 1 (lines~\ref{pread2}-\ref{pwrite2});
	thus $R_2$ holds a lower bound on the number of players that enter Phase 2.
After doing so $p_i$ proceeds to the next round.

In Phase 2, each host $p_0$ and $p_1$ first resets $R_2$ to $0$ (line~\ref{cleanr2}) 
	and then reads $R_2$ (line~\ref{v1}).
If a host sees that $R_2 \ge n-2$
	then it is certain that \emph{all} the players remained in the game,
	and so it also remains in the game by proceeding to the next round;
	otherwise it exits the game.

We will show that if the registers are only linearizable,
 	then a strong adversary $\mathcal{S}$ can manipulate schedules such
	that the game represented by Algorithm~\ref{toyalgo} continues forever;
	more precisely, regardless of the coin flip results, $\mathcal{S}$ can construct a run of Algorithm~\ref{toyalgo}
	in which all the processes loop forever (Theorem~\ref{LinearizableIsWeak} in Section~\ref{ltoysection}).
We then show that if the registers are $\sly$, then all the correct processes\footnote{A process is correct
	if it takes infinitely many steps. We assume that processes continue to take steps (forever)
	even after returning from the algorithm in lines~\ref{halt0} or line~\ref{halt1}.}
	 return from the algorithm with probability~$1$ (Theorem~\ref{WSLinearizableIsStrong} in Section~\ref{wsltoysection}).

At high-level, the main idea of the proof is as follows.
Assume the register $R_1$ is not atomic,
	so each of its operations spans an interval of time, and operations on $R_1$ can be concurrent.
Consider the time $t$ after $p_0$ flipped the coin (line~\ref{pcoin1-1}).
Suppose at that time $t$, the write of $[1,j]$ into $R_1$ by $p_1$ is still pending and concurrent
	with the completed write of $[0,j]$ into $R_1$ by $p_0$.

If $R_1$ is linearizable,
	then adversary has the power to linearize the two writes in either order:
	$[0,j]$ before $[1,j]$, or $[1,j]$ before $[0,j]$.
So based on the outcome $c$ of the coin flip,
	the adversary can ensure that all the players
	read $[c,j]$ and then $[1-c,j]$ from $R_1$
	which forces them to stay in the game.
	
If, on the other hand, $R_1$ is write strongly-linearizable,
	the adversary does not have this power:
	at the time $p_0$ completes its write of $[0,j]$ into $R_1$
	(and therefore before the adversary can see the result of the coin flip)
	the adversary must decide whether the concurrent write of $[1,j]$ by $p_1$
	is linearized before $[0,j]$ or not.
With probability at least 1/2,
	the result of the coin flip will not ``match'' this decision. 
So in each round, with probability at least 1/2, the players
	will \emph{not} read $[c,j]$ and then $[1-c,j]$ from $R_1$
	and so they will exit the game.
(Note that if $R_1$ is atomic, operations are instantaneous, and so of course the adversary has no power to continue the game forever.)

In Algorithm~\ref{toyalgo} only register $R_1$ is unbounded, but we can easily make
	$R_1$ bounded (see Appendix~\ref{btoysection}).

\subsection{Linearizability does not ensure termination}\label{ltoysection}

\begin{theorem}\label{LinearizableIsWeak}
If registers $R_1$, $R_2$, and $C$ are only linearizable,
	a~strong adversary $\mathcal{S}$ can construct a run of Algorithm~\ref{toyalgo} where all the processes execute
	infinitely many rounds.
\end{theorem}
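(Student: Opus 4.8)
The plan is to describe, explicitly, an adversarial schedule that --- no matter which values $p_0$'s coin flips produce --- drives $R_1$, $R_2$, $C$ through linearizable histories while never firing a guard, so that every process executes every round. I would build the run round by round, maintaining the invariant that at the start of round $j$ all $n$ processes are poised to execute the first line of the loop body, and I would construct the run directly as a history $H$ on $R_1,R_2,C$, verifying that each register's projection of $H$ satisfies Definition~\ref{defl}; that is exactly what makes $H$ a legitimate run of Algorithm~\ref{toyalgo} with (only) linearizable registers. The substantive part is Phase~1 of a round, and within it $R_1$; the histories of $R_2$ and $C$ will be essentially sequential, hence linearizable for trivial reasons.

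For Phase~1 of round $j$ I would have $\mathcal{S}$ proceed as follows. First let every player $p_i$ ($2\le i\le n-1$) run lines~\ref{cleanr1}--\ref{cleanc} to completion (writing $\bot$ into $R_1$ and into $C$). Next have every player invoke its first read of $R_1$ (line~\ref{u1}) but hold it pending, and have $p_1$ invoke its write of $[1,j]$ into $R_1$ (line~\ref{pwrite1}) but hold it pending too. Only then let $p_0$ run line~\ref{pwrite1} to completion, writing $[0,j]$ into $R_1$, and flip its coin (line~\ref{pcoin1-1}); let $c_j$ be the outcome, which the strong adversary now observes, and let $p_0$ complete line~\ref{pcoin1-2}, writing $c_j$ into $C$. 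Finally, with $c_j$ in hand, respond to every pending first read of $R_1$ with the value $[c_j,j]$, let $p_1$'s write of $[1,j]$ respond, and let every player complete lines~\ref{u2}--\ref{rcoin1}, so that its second read of $R_1$ returns $[1-c_j,j]$ and its read of $C$ returns $c_j$. The claim to prove is that this round-$j$ fragment of the $R_1$-history is linearizable via the order
\[
\bigl[\,\bot\text{-writes}\,\bigr],\ \ \text{write of }[c_j,j],\ \ \bigl[\,\text{the first reads}\,\bigr],\ \ \text{write of }[1-c_j,j],\ \ \bigl[\,\text{the second reads}\,\bigr],
\]
where the write of $[0,j]$ is $p_0$'s and the write of $[1,j]$ is $p_1$'s. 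When $c_j=0$ this is the ``natural'' order. When $c_j=1$ it places $p_1$'s (still pending) write and all the first reads ahead of $p_0$'s (already completed) write; this is legal precisely because --- thanks to the two ``hold pending'' steps --- $p_1$'s write, $p_0$'s write, and all the first reads are pairwise concurrent. One then checks (routinely) that in both cases this order respects real-time precedence (the $\bot$-writes end before everything else; $p_0$'s write ends before the second reads and the $C$-reads begin; each player's first read ends before its second read begins) and is value-correct, and that concatenating the per-round orders yields a linearization of the whole $R_1$-history (the fresh $\bot$-writes at the top of round $j{+}1$ erase any dependence on round $j$, and on the initial value when $j=1$). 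The $C$-history is linearized the same way, trivially. Hence every player sees $u_1=[c_j,j]=[c,j]$, $u_2=[1-c_j,j]=[1-c,j]$, $c=c_j\neq\bot$, so the guards at lines~\ref{guard0} and~\ref{guard1} both fail and the player enters Phase~2; $p_0$ and $p_1$ have no Phase-1 guard, so they do too.

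For Phase~2 of round $j$ all $n$ processes are present, and I would schedule $R_2$ sequentially: let every player (line~\ref{cleanr2-2}) and every host (line~\ref{cleanr2}) write $0$ into $R_2$ and complete, holding each host just before its read on line~\ref{v1}; then run $p_2$ through lines~\ref{pread2}--\ref{pwrite2} (read $0$, write $1$), then $p_3$ (read $1$, write $2$), and so on up to $p_{n-1}$ (read $n-3$, write $n-2$); finally let $p_0$ and then $p_1$ perform line~\ref{v1}, each reading $v=n-2$. Since every $R_2$-operation runs in isolation, this (over all rounds) is a sequential, hence linearizable, history; and since $v=n-2\not<n-2$ the guard at line~\ref{guard2} fails for both hosts, while the players have no Phase-2 guard, so all $n$ processes reach the top of the loop body for round $j{+}1$, restoring the invariant. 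Repeating this forever yields a run in which all processes execute infinitely many rounds.

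The step I expect to be the real obstacle --- and the only place that genuinely uses linearizability rather than atomicity --- is making the round-$j$ fragment of the $R_1$-history linearizable when $c_j=1$: there the players must read $[1,j]$ and only afterwards $[0,j]$, yet $p_0$'s write of $[0,j]$ has necessarily already returned (the code forces it to return before the coin flip), so its linearization point lies before $\mathcal{S}$ even learns $c_j$. This is exactly what dictates the two ``hold pending'' steps: the players' first reads of $R_1$ and $p_1$'s write of $[1,j]$ must both be invoked \emph{before} $p_0$'s write returns and kept pending across the coin flip, so that the adversary --- upon seeing $c_j$ --- can still exhibit a valid linearization in which $p_1$'s write and the first reads precede $p_0$'s write. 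One should also double-check that $p_1$'s write may be linearized (ahead of the first reads) while still pending, which clause~\ref{p1} of Definition~\ref{defl} permits. It is precisely this freedom that Definition~\ref{defwsl} removes, which is why --- as Theorem~\ref{WSLinearizableIsStrong} will show --- the same schedule cannot succeed against \sly{} registers.
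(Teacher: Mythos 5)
Your proposal is correct and follows essentially the same strategy as the paper's proof: invoke $p_1$'s write of $[1,j]$ and the players' first reads of $R_1$ so that they are pending (hence concurrent with $p_0$'s completed write) when the coin is flipped, then choose the linearization order of the two $R_1$-writes according to the observed outcome so the guards at lines~\ref{guard0} and~\ref{guard1} never fire, and run Phase~2 sequentially so $R_2$ reaches $n-2$ and the hosts pass line~\ref{guard2}. The only differences are cosmetic (you let the first reads respond before $p_1$'s write responds, relying on clause~\ref{p1} of Definition~\ref{defl}, and you verify the cross-round concatenation explicitly), so no further changes are needed.
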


\begin{proof}
Assume the registers of Algorithm~\ref{toyalgo} are only linearizable but \emph{not} $\sly$.
A strong adversary $\mathcal{S}$ can construct an infinite run of Algorithm~\ref{toyalgo} as follows (Figure~\ref{toy}):

\begin{figure}[!htb]
   		 \centering 
    		\includegraphics[width=0.8\textwidth]{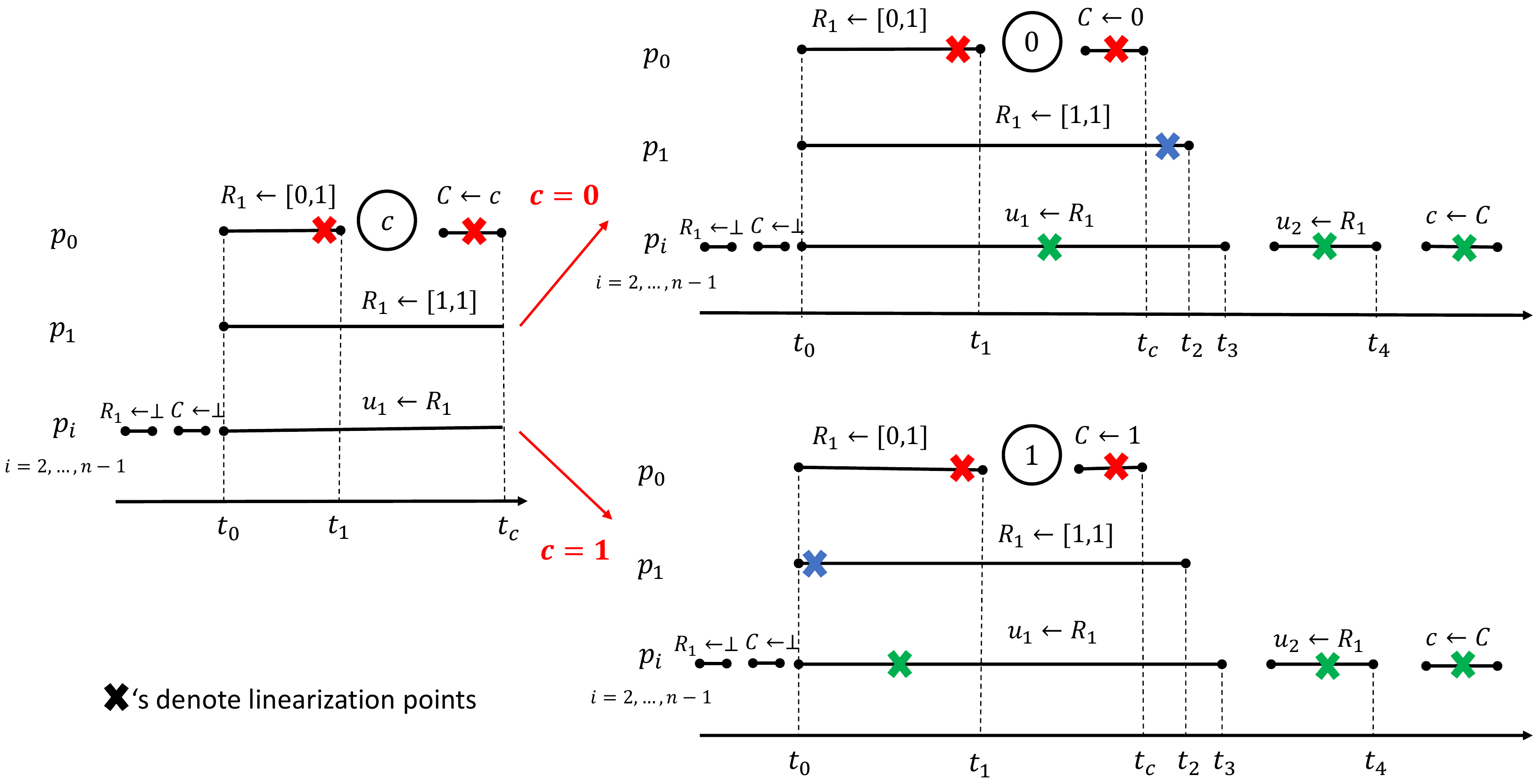}
   		 \caption{Phase 1 in round $j=1$ of an infinite execution} 
   		 \label{toy}
	\end{figure}

\medskip
\noindent
\textbf{Phase 1} (of round $j=1$):
\begin{enumerate}

\item Processes $\procs$ write $\bot$ into $R_1$ and $C$ in lines~\ref{cleanr1} and~\ref{cleanc}.

\item At some time $t_0$ after all the above write operations complete,
	process $p_0$ starts writing~$[0,1]$ into $R_1$ in line~\ref{pwrite1},
	process $p_1$ starts writing~$[1,1]$ into $R_1$ in line~\ref{pwrite1},
	and $\Procs$ start reading $R_1$ in line~\ref{u1}.

\item At time $t_1 > t_0$, process $p_0$ completes its writing of $[0,1]$ into $R_1$ in line~\ref{pwrite1}.

\item After time $t_1$,
	process $p_0$ flips a coin in line~\ref{pcoin1-1} and writes the result into the shared register $C$ in line~\ref{pcoin1-2}.
	Let $t_c > t_1$ be the time when $p_0$ completes this write of $C$.

Depending on the result of $p_0$'s coin flip (and therefore the content of $C$),
	the adversary $\mathcal{S}$ continues the run it is constructing in one of the following two ways:

\textbf{Case 1}: $C=0$ at time $t_c$.

The continuation of the run in this case is shown at the top right of Figure~\ref{toy}.

\begin{enumerate}

\item At time $t_2 > t_c$,
		$p_1$ completes its writing of [1,1] into $R_1$ in line~\ref{pwrite1}.
		
		Note that \emph{both} $p_0$ and $p_1$ have now completed Phase 1 of round $j=1$.

\item The adversary $\mathcal{S}$ linearizes the write of $[1,1]$ into $R_1$ by $p_1$
	\emph{after} the write of $[0,1]$ into $R_1$ by~$p_0$.

\item Note that $\procs$ are still reading $R_1$ in line~\ref{u1}.
Now the adversary linearizes these read operations \emph{between} the above write of $[0,1]$ by $p_0$ and the write of $[1,1]$ by~$p_1$.

\item At time $t_3 > t_2$,
		$\Procs$ complete their read of $R_1$ in line~\ref{u1}.
By the above linearization, they read $[0,1]$, and so they set (their local variable) $u_1 = [0,1]$ in line~\ref{u1}.
		
\item Then $\Procs$ start and complete their read of $R_1$ in line~\ref{u2}.
Since (1)~these reads start \emph{after} the time $t_2$ when $p_1$ completed its write of $[1,1]$ into $R_1$,
	and (2)~this write is linearized \emph{after} the write of $[0,1]$ by $p_0$ into $R_1$,
	$\Procs$ read~$[1,1]$.
So they all set (their local variable) $u_2 = [1,1]$ in line~\ref{u2}.
Let $t_4 > t_3$ be the time when every process in $\{\procs\}$ has set $u_2 = [1,1]$ in line~\ref{u2}.

\item After time $t_4$, $\Procs$ start reading $C$ in line~\ref{rcoin1}.
		Since $C =0$ at time~$t_c$ and it is not modified thereafter,
		$\procs$ read $0$ and set (their local variable) $c = 0$ in line~\ref{rcoin1}.

	So at this point, $\Procs$ have $u_1 = [0,1]$, $u_2 = [1,1]$ and $c = 0$.
	
\item Then $\procs$ execute line~\ref{guard0}, and find that
	the condition ($u_1 = \bot$ \textbf{or} $u_2 = \bot$ \textbf{or} $c = \bot$)
	of this line does not hold,
	and so they proceed to execute line~~\ref{guard1}.
	
\item When $\procs$ execute line~\ref{guard1}, they find that the condition ($u_1 \neq [c,j]$ \textbf{or}  $u_2 \neq [1-c,j]$)
	of this line does \emph{not} hold, because they have $u_1 = [c,1] = [0,1]$ and $u_2 = [1- c,1] = [1,1]$.

	   So $\procs$ complete Phase 1 of round $j=1$
	   without exiting in line~\ref{exit1}.
	Recall that both $p_0$ and $p_1$ also completed Phase 1 of round $j=1$ without exiting.

\end{enumerate}

\textbf{Case 2}: $C=1$ at time $t_c$.

The continuation of the run in this case is shown at the bottom right of Figure~\ref{toy}.
This continuation is symmetric to the one for Case 1:
	the key difference is that the adversary $\mathcal{S}$ now linearizes
	$p_1$'s write of $[1,1]$ into $R_1$ \emph{before} $p_0$'s write of $[0,1]$ into $R_1$,
	and so processes $\procs$ have $u_1 = [c,1]=[1,1]$ and $u_2 = [1-c,1]=[0,1]$ and so they will also complete Phase 1 without existing in line~\ref{exit1}. 
	
\rmv{
as we describe below.
	
\begin{enumerate}

\item At time $t_2 > t_c$,
		$p_1$ completes its writing of [1,1] into $R_1$ in line~\ref{pwrite1}.
		
		Note that \emph{both} $p_0$ and $p_1$ have now completed Phase 1 of round $j=1$.

\item The adversary $\mathcal{S}$ linearizes the write of $[1,1]$ into $R_1$ by $p_1$
	\emph{before} the write of $[0,1]$ into $R_1$ by~$p_0$.

\item Note that $\procs$ are still reading $R_1$ in line~\ref{u1}.
Now the adversary linearizes these read operations \emph{between}
	the above write of $[1,1]$ by~$p_1$
	and the write of $[0,1]$ by~$p_0$.

\item At time $t_3 > t_2$,
		$\Procs$ complete their read of $R_1$ in line~\ref{u1}.
By the above linearization, they read $[1,1]$, and so they set (their local variable) $u_1 = [1,1]$ in line~\ref{u1}.
		
\item Then $\Procs$ start and complete their read of $R_1$ in line~\ref{u2}.
Since (1)~these reads start \emph{after} the time $t_1$ when $p_0$ completed its write of $[0,1]$ into $R_1$,
	and (2)~this write is linearized \emph{after} the write of $[1,1]$ by $p_1$ into $R_1$,
	$\Procs$ read~$[0,1]$.
So they all set (their local variable) $u_2 = [0,1]$ in line~\ref{u2}.
Let $t_4 > t_3$ be the time when every process in $\{\procs\}$ has set $u_2 = [0,1]$ in line~\ref{u2}.

\item After time $t_4$, $\Procs$ start reading $C$ in line~\ref{rcoin1}.
		Since $C =1$ at time~$t_c$ and it is not modified thereafter,
		$\procs$ read $1$ and set (their local variable) $c = 1$ in line~\ref{rcoin1}.

	So at this point, $\Procs$ have $u_1 = [1,1]$, $u_2 = [0,1]$ and $c = 1$.

\item Then $\procs$ execute line~\ref{guard0}, and find that
	the condition ($u_1 = \bot$ \textbf{or} $u_2 = \bot$ \textbf{or} $c = \bot$)
	of this line does not hold,
	and so they proceed to execute line~~\ref{guard1}.
	
\item When $\procs$ execute line~\ref{guard1}, they find that the condition ($u_1 \neq [c,j]$ \textbf{or}  $u_2 \neq [1-c,j]$)
	of this line does \emph{not} hold, because they have $u_1 = [c,1] = [1,1]$ and $u_2 = [1- c,1] = [0,1]$.

	   So $\procs$ complete Phase 1 of round $j=1$
	   without exiting in line~\ref{exit1}.
	Recall that both $p_0$ and $p_1$ also completed Phase 1 of round $j=1$ without exiting.

\end{enumerate}
}

Thus in both Case 1 and Case 2, all the $n$ processes complete Phase 1 of round $j=1$ without exiting,
	and are now poised to execute Phase 2 of this round.
The adversary $\mathcal{S}$ extends the run that it built so far as follows (Figure~\ref{toy2}).

\end{enumerate}

\begin{figure}[!htb]
   		 \centering 
    		\includegraphics[width=0.5\textwidth]{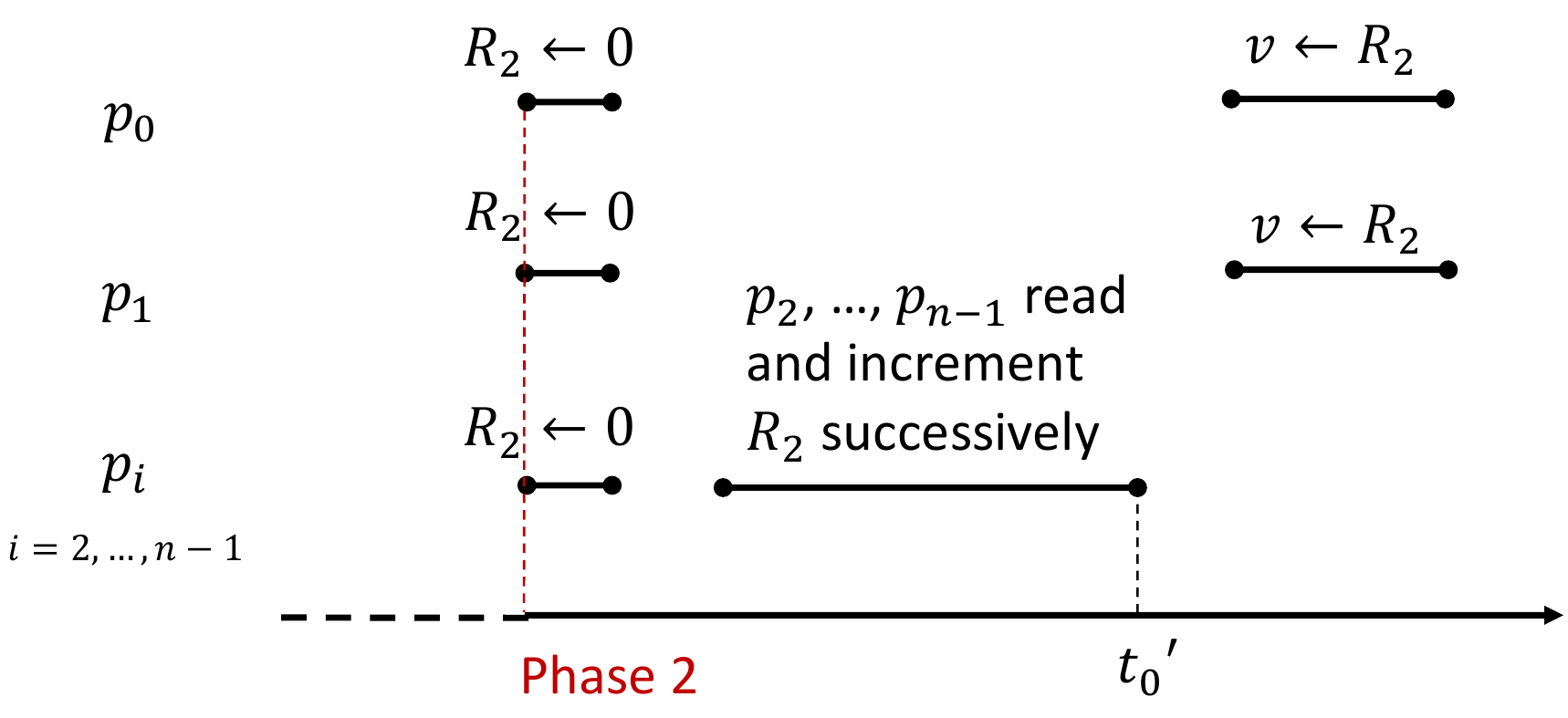}
   		 \caption{Phase 2 in round $j=1$ of an infinite execution} 
   		 \label{toy2}
\end{figure}

\noindent
\textbf{Phase 2} (of round $j=1$):

\begin{enumerate}

\item Processes $p_0$ and $p_1$ write $0$ into $R_2$ in line~\ref{cleanr2},
	and processes $\procs$ write $0$ into $R_2$ in line~\ref{cleanr2-2}.

\item After all the above write operations complete,
	$\Procs$ successively read and increment $R_2$
	by executing lines~\ref{pread2}--\ref{pwrite2}
	in the following order:
	$p_2$ executes lines~\ref{pread2}--\ref{pwrite2},
	and then, for  each  \mbox{$i\in \{3, \ldots, n-2\}$},
	process $p_{i+1}$ starts reading $R_2$ in line~\ref{pread2} \emph{after}
	$p_i$ completes its write of $R_2$ in line~\ref{pwrite2}.

Let $t'_0$ be the time when the above $n-2$ write operations by $\procs$ have completed.
	
	Note that at time $t'_0$: (i) register $R_2$ contains $n-2$, and
	(ii) all $\Procs$ have completed~Phase~2 of round $j=1$.
	
\item After time $t'_0$, $p_0$ and $p_1$ read $R_2$ into $v$ in line~\ref{v1},
	and so they set $v =  n-2$ in that line.

\item Then $p_0$ and $p_1$ execute line~\ref{guard2}
	and find that the condition ``$v <  n-2$'' of this line does \emph{not} hold.
         So $p_0$ and $p_1$ complete Phase 2 of round $j=1$
	 without exiting in line~\ref{exit2}.

	
Thus all the $n$ processes $p_0, p_1, \ldots, p_{n-1}$,
	have completed Phase 2 of round $1$ without exiting;
	they are now poised to execute round $j=2$.

\end{enumerate}

\noindent
The adversary $\mathcal{S}$ continues to build the run
	by repeating the above scheduling of $p_0, p_1, \ldots, p_{n-1}$ for rounds $j=2,3 \ldots$.
This gives a non-terminating run of Algorithm~\ref{toyalgo} with \mbox{probability~1:}
	in this run,
	all processes are correct, i.e., they take an infinite number of steps, but they loop forever
	and never reach the return statement in lines~\ref{halt0} or line~\ref{halt1}.
\end{proof}

\subsection{Write strong-linearizability ensures termination}\label{wsltoysection}

\rmv{
\noindent
We now show that if registers $R_1$, $R_2$, and $C$ are $\sly$,
	Algorithm~\ref{toyalgo} terminates with \mbox{probability~1} even against a strong adversary.
To prove this, we first show four lemmas (Lemmas~\ref{prev-reads}--\ref{C-non-bot}) about some safety properties of Algorithm~\ref{toyalgo}.
Specifically, Lemma~\ref{bigarestuck} and~\ref{smallarestuck}
	state that processes $p_0$ and $p_1$ on one side,
	and processes $\procs$ on the other side,
	remain within one round of each other.
Lemma~\ref{prev-reads} and~\ref{C-non-bot}
	state that the non-$\bot$ values that $\procs$ read from registers $R_1$ and $C$ in any round $j$
	were also written in~round~$j$.
The proofs of Lemmas~\ref{prev-reads}--\ref{C-non-bot} require only linearizability and they are shown in in Appendix~\ref{a1}.
	
In the following lemmas, we say that a process ``enters round $r \ge 1$'' if it executes line~\ref{enter1} or \ref{enter2} with $j =r$.

\begin{restatable}{lemma}{safetyone}\label{prev-reads}
For all $j\ge1$,
	if $p_i \in \{ \procs \}$ $\exs$ line~\ref{cleanr2-2} in round $j$,
	then
	$p_i$ previously read both $[b,j]$ and $[1-b,j]$ for some
	$b \in \{0, 1\}$ from register $R_1$ in lines~\ref{u1} and~\ref{u2} in round $j$.
\end{restatable}

\begin{restatable}{lemma}{safetytwo}\label{bigarestuck}
For all $j\ge1$,
	if $p_i \in \{ \procs \}$ $\exs$ line~\ref{cleanr2-2} in round $j$, then $p_0$ and $p_1$ previously entered round $j$.
\end{restatable}

\begin{restatable}{lemma}{safetythree}\label{smallarestuck}
For all $j\ge1$,
	if $p_0$ or $p_1$ enters round $j+1$, then every $p_i \in \{ \procs \}$ previously reached line 37 in round $j$.
\end{restatable}

\begin{restatable}{lemma}{safetyfour}\label{C-non-bot}
For all $j\ge1$,
	if $p_i \in \{ \procs \}$ $\exs$ line~\ref{guard1} in round $j$,
	then in that line $p_i$ has $c=b$ such that $b \in \{0,1\}$ and
	$p_0$ wrote $b$ into register $C$ in line~\ref{pcoin1-2} in round $j$.
\end{restatable}

We now prove that if registers $R_1$, $R_2$, and $C$ are $\sly$,
	then Algorithm~\ref{toyalgo} terminates with \mbox{probability~1}
	even against a strong adversary (Theorem~\ref{WSLinearizableIsStrong}).
Intuitively,
	this is because if $R_1$, $R_2$, and $C$ are $\sly$,
	then the order in which $[0,j]$ and $[1,j]$ are written into $R_1$ in line~\ref{pwrite1} in round $j$ is already \emph{fixed}
	before the adversary $\mathcal{S}$ can see the result of the coin flip in line~\ref{pcoin1-1} in round $j$.
So for every round $j\ge 1$,
	the adversary can\emph{not} ``retroactively'' decide on this linearization order of write operations\MP{write-linearization?}
	according to the coin flip result
	(like it did when $R_1$ was merely linearizable)
	to ensure that
	processes $p_1, p_2,...,p_{n-1}$ do not exit by the condition of line~\ref{guard1}.
Thus, with probability 1/2, all these processes will exit in line~\ref{exit1}.
And if they all exit there, then no process will increment register $R_2$ in lines~\ref{pread2}-\ref{pwrite2},
	and so $p_0$ and $p_1$ will also exit.

The proof of Theorem~\ref{WSLinearizableIsStrong} is based on the following:

\begin{lemma}\label{exitchance}
For all rounds $j \ge1$, 
	with probability at least $1/2$,
	no process enters round $j+1$.
\end{lemma}

\begin{proof}
Consider any round $j\ge1$.
There are two cases:

\begin{enumerate}[(I)]
\item\label{noRwrite} \textbf{Process $p_0$ does not complete its write of $[0,j]$ into register $R_1$ in line~\ref{pwrite1} in round $j$.}

Thus, $p_0$ does not invoke the write of any value into $C$ in line~\ref{pcoin1-2} in round $j$ (*).
 
\begin{claim}\label{allarestuck-det}
No process enters round $j+1$.
\end{claim}

\begin{proof}
We first show that no process in $\{ \procs \}$ $\exs$ line~\ref{pwrite2} in round $j$.
To see why, suppose, for contradiction,
	some process $p_i$ with $i\in\{2,3,...,n-1\}$
	$\exs$ line~\ref{pwrite2} in round $j$.
By Lemma~\ref{C-non-bot},
	in that line $p_i$ has $c=b \in \{0,1\}$ such that
	$p_0$ invoked the write of $b$ into $C$ in line~\ref{pcoin1-2} in round $j$ --- a contradiction to (*).

Thus no process in $\{ \procs \}$ $\exs$ line~\ref{pwrite2} in round $j$.
By Lemma~\ref{smallarestuck}, neither $p_0$ nor $p_1$ enters round~$j+1$.
\end{proof}

\item\label{Rwrite} \textbf{Process $p_0$ completes its write of $[0,j]$ into register $R_1$ in line~\ref{pwrite1} in round $j$.}

\begin{claim}\label{allarestuck-rndm}
With probability at least 1/2, no process enters round $j+1$.
\end{claim}

\begin{proof}
Consider the set of histories $\seth$ of Algorithm~\ref{toyalgo}; this is a set of histories over
	the registers $R_1$, $R_2$, $C$.
Since these registers are $\sly$,
	by Lemma 4.8 of~\cite{sl11}, $\seth$~is~$\sly$, i.e., it has
	at least one $\str$ function
	that satisfies properties (L) and (P) of Definition~\ref{defwsl}.
Let $f$ be the $\str$ function that
	the~adversary~$\mathcal{S}$~uses.

Let $g$ be an arbitrary history of the algorithm up to
	and including the completion of the write of $[0,j]$ into $R_1$
	by $p_0$ in line~\ref{pwrite1} in round $j$.
Since $p_0$ completes its write of $[0,j]$ into $R_1$ in $g$,  
	this write operation appears in the $\str$ $f(g)$.
Now there are two cases:

\begin{itemize}
\item\label{Casino1} \textbf{Case A}:
In $f(g)$, the write of $[1,j]$ into $R_1$ by $p_1$ in line~\ref{pwrite1} in round $j$
	occurs \emph{before}
	the write of $[0,j]$ into $R_1$ by $p_0$ in line~\ref{pwrite1} in round $j$.

Since $f$ is a $\str$ function,
	for every extension $h$ of the history $g$
	(i.e., for every history $h$ such that $g$ is a prefix of $h$),
	the write of $[1,j]$ into $R_1$
	occurs before
	the write of $[0,j]$ into $R_1$
	in the linearization $f(h)$
	(note that for all $j\ge1$, each of $[1,j]$ and $[0,j]$ is written \emph{at most once} in $R_1$, so it appears at most once in $h$ and $f(h)$).
Thus, in $g$ and every extension $h$ of $g$,
	no process can first read $[0,j]$ from $R_1$ and then read $[1,j]$ from $R_1$ ($\star$).

Let $\mathcal{P}$ be the subset of processes in $\{p_2 , p_3, \ldots , p_{n-1} \}$
	that
	evaluate the condition
	($u_1 \neq [c,j]$ \textbf{or}  $u_2 \neq [1-c,j]$)
	in line~\ref{guard1} in round $j$.
Note that for each process $p_i$ in $\mathcal{P}$,
	$u_1$ and $u_2$ are the values that $p_i$ read from $R_1$
	consecutively in lines~\ref{u1} and \ref{u2} in round $j$.
By~($\star$), $p_i$ cannot first read $u_1=[0,j]$ and then read $u_2=[1,j]$ from $R_1$.
Thus, no process $p_i$ in $\mathcal{P}$ can have both $u_1=[0,j]$ and $u_2=[1,j]$ in line~\ref{guard1} in round $j$ ($\star \star$).

Let $\mathcal{P' \subseteq P}$ be the subset of processes in $\mathcal{P}$
	that have $c = 0$
	in line~\ref{guard1} in round $j$.

\begin{cclaim}\label{Mannaggia1}
~
\begin{enumerate}[(a)]
\item\label{SC1} No process in $\mathcal{P'}$ $\exs$ line~\ref{pwrite2} in round $j$.
\item\label{SC2} If $\mathcal{P'} = \mathcal{P}$ then neither $p_0$ nor $p_1$ enters round $j+1$.
\end{enumerate}	
\end{cclaim}

\begin{proof}
To see why Part~\ref{SC1} holds, note that no process $p_i$ in $\mathcal{P'}$
	can find the condition~($u_1 \neq [c,j]$ \textbf{or}  $u_2 \neq [1-c,j]$) in line~\ref{guard1} in round $j$
	to be false: otherwise $p_i$ would have both $u_1 = [c,j] = [0,j]$ \textbf{and} $u_2 = [1-c,j] = [1,j]$ in that line,
	but this is not possible by ($\star \star$).
Thus, no process in $\mathcal{P'}$ $\exs$ line~\ref{pwrite2} in round $j$
	(it would exit in line~\ref{exit1} before reaching that line).
	
To see why Part~\ref{SC2} holds, suppose $\mathcal{P'} = \mathcal{P}$
	and consider any process $p_i$ in $\{ \procs \}$.
If $p_i \not \in \mathcal{P}$
	then $p_i$ never evaluates the condition
	in line~\ref{guard1} in round $j$;
	and if $p_i \in \mathcal{P}$, then $p_i \in \mathcal{P'}$, and so from Part~\ref{SC1},
	$p_i$ does not $\ex$ line~\ref{pwrite2} in round $j$.
So \emph{in both cases},
	$p_i$ does not $\ex$ line~\ref{pwrite2} in round $j$.
Thus, by Lemma~\ref{smallarestuck}, neither $p_0$ nor $p_1$ enters round $j+1$.
\end{proof}

Now recall that $g$ is the history of the algorithm up to
	and including the completion of the write of $[0,j]$ into $R_1$
	by $p_0$ in line~\ref{pwrite1} in round $j$.
After the completion of this write, i.e., in any extension $h$ of $g$,
	$p_0$ is supposed to flip a coin and write the result into $C$ in line~\ref{pcoin1-2} in round~$j$.
Thus, with probability \emph{at least}~$1/2$,
	$p_0$~will \emph{not} invoke
	the operation to write $1$ into $C$ in line~\ref{pcoin1-2} in round~$j$.\MP{The adversary may decide to kill $p_0$
	before he flips a coin or before the write of $C$ in round $j$, or even after $p_0$ invokes the write  of $C$ in round $j$
	but before completing the write.}
So, from Lemma~\ref{C-non-bot},
	with probability \emph{at least}~$1/2$,
	every process in $\mathcal{P}$ has $c = 0$
	in line~\ref{guard1} in round $j$;
	this means that
        with probability at least~$1/2$,
	$\mathcal{P' = P}$.
Therefore, from Claim~\ref{Mannaggia1}, with probability at least~$1/2$:
\begin{enumerate}[(a)]
\item No process in $\mathcal{P}$ $\exs$ line~\ref{pwrite2} in round $j$.
\item Neither $p_0$ nor $p_1$ enters round $j+1$.
\end{enumerate}
This implies that in Case~A, with probability at least~$1/2$,
	no process enters round $j+1$.

\item\label{Casino2} \textbf{Case B}:
In $f(g)$, the write of $[1,j]$ into $R_1$ by $p_1$ in line~\ref{pwrite1} in round $j$
	does \emph{not} occur before
	the write of $[0,j]$ into $R_1$ by $p_0$ in line~\ref{pwrite1} in round $j$.
This case is essentially symmetric to the one for Case A, we include it below for completeness.

Since $f$ is a $\str$ function,
	for every extension $h$ of the history $g$,
	the write of $[1,j]$ into $R_1$
	does not occur before
	the write of $[0,j]$ into $R_1$
	in the linearization $f(h)$.
Thus, in $g$ and every extension $h$ of $g$,
	no process can first read $[1,j]$ from $R_1$ and then read $[0,j]$ from $R_1$ ($\dagger$).

Let $\mathcal{P}$ be the subset of processes in $\{p_2 , p_3, \ldots , p_{n-1} \}$
	that
	evaluate the condition
	($u_1 \neq [c,j]$ \textbf{or}  $u_2 \neq [1-c,j]$)
	in line~\ref{guard1} in round $j$.
Note that for each process $p_i$ in $\mathcal{P}$,
	$u_1$ and $u_2$ are the values that $p_i$ read from $R_1$
	consecutively in lines~\ref{u1} and \ref{u2} in round $j$.
By~($\dagger$), $p_i$ cannot first read $u_1=[1,j]$ and then read $u_2=[0,j]$ from $R_1$.
Thus, no process $p_i$ in $\mathcal{P}$ can have both $u_1=[1,j]$ and $u_2=[0,j]$ in line~\ref{guard1} in round $j$ ($\dagger \dagger$).

Let $\mathcal{P' \subseteq P}$ be the subset of processes in $\mathcal{P}$
	that have $c = 1$
	in line~\ref{guard1} in round $j$.

\begin{cclaim}\label{Mannaggia2}
~
\begin{enumerate}[(a)]
\item\label{SC21} No process in $\mathcal{P'}$ $\exs$ line~\ref{pwrite2} in round $j$.
\item\label{SC22} If $\mathcal{P'} = \mathcal{P}$ then neither $p_0$ nor $p_1$ enters round $j+1$.
\end{enumerate}	
\end{cclaim}

\begin{proof}
To see why \ref{SC21} holds, note that no process $p_i$ in $\mathcal{P'}$
	can find the condition ($u_1 \neq [c,j]$ \textbf{or}  $u_2 \neq [1-c,j]$) in line~\ref{guard1} in round $j$
	to be false: otherwise $p_i$ would have both $u_1 = [c,j] = [1,j]$ \textbf{and} $u_2 = [1-c,j] = [0,j]$ in that line,
	but this is not possible by ($\dagger \dagger$).
Thus, no process in $\mathcal{P'}$ $\exs$ line~\ref{pwrite2} in round $j$
	(it would exit in line~\ref{exit1} before reaching that line).
	
To see why \ref{SC2} holds, suppose $\mathcal{P'} = \mathcal{P}$
	and consider any process $p_i$ in $\{ \procs \}$.
If $p_i \not \in \mathcal{P}$
	then $p_i$ never evaluates the condition
	in line~\ref{guard1} in round $j$;
	and if $p_i \in \mathcal{P}$, then $p_i \in \mathcal{P'}$, and so from \ref{SC21},
	$p_i$ does not $\ex$ line~\ref{pwrite2} in round $j$.
So \emph{in both cases},
	$p_i$ does not $\ex$ line~\ref{pwrite2} in round $j$.
Thus, by Lemma~\ref{smallarestuck}, neither $p_0$ nor $p_1$ enters round $j+1$.
\end{proof}

Now recall that $g$ is the history of the algorithm up to
	and including the completion of the write of $[0,j]$ into $R_1$
	by $p_0$ in line~\ref{pwrite1} in round $j$.
After the completion of this write, i.e., in any extension $h$ of $g$,
	$p_0$ is supposed to flip a coin and write the result into $C$ in line~\ref{pcoin1-2} in round~$j$.
Thus, with probability \emph{at least}~$1/2$,
	$p_0$~will \emph{not} invoke
	the operation to write $0$ into $C$ in line~\ref{pcoin1-2} in round~$j$.
So, from Lemma~\ref{C-non-bot},
	with probability \emph{at least}~$1/2$,
	every process in $\mathcal{P}$ has $c = 1$
	in line~\ref{guard1} in round $j$;
	this means that
        with probability at least~$1/2$,
	$\mathcal{P' = P}$.
Therefore, from Claim~\ref{Mannaggia2}, with probability at least~$1/2$:
\begin{enumerate}[(a)]
\item No process in $\mathcal{P}$ $\exs$ line~\ref{pwrite2} in round $j$.
\item Neither $p_0$ nor $p_1$ enters round $j+1$.
\end{enumerate}
This implies that in Case~B, with probability at least~$1/2$,
	no process enters round $j+1$.
\end{itemize}

So in both Cases A and B, with probability at least 1/2, no process enters round $j+1$.
\end{proof}
\end{enumerate}

\noindent
Therefore, from Claims~\ref{allarestuck-det} and~\ref{allarestuck-rndm} of Cases~\ref{noRwrite} and~\ref{Rwrite},
	with probability at least $1/2$,
	no process enters round $j+1$.
\end{proof}

\noindent
We can now complete the proof of Theorem~\ref{WSLinearizableIsStrong},
	namely, that with $\sly$ registers,
	Algorithm~\ref{toyalgo} terminates with probability 1 in expected $2$ rounds, even against a strong adversary.

}

In Appendix~\ref{toyterminatewithwsl}, we prove:

\begin{restatable}{theorem}{toyterminate}\label{WSLinearizableIsStrong} 
If registers $R_1$, $R_2$, and $C$ are $\sly$,
	 then Algorithm~\ref{toyalgo} terminates with probability 1 against a strong adversary.
	 \MP{Expected 2 rounds?}
\end{restatable}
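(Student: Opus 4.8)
The plan is to derive the theorem from three ingredients: (i) a handful of safety invariants of Algorithm~\ref{toyalgo} that hold already under plain linearizability; (ii) a per-round ``escape'' lemma stating that, conditioned on everything that happened up to and including round $j$'s start, with probability at least $1/2$ no process enters round $j+1$; and (iii) a geometric-series argument turning (ii) into termination with probability $1$ (in expectation at most $2$ rounds).

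First I would establish the safety invariants. The ones I need are: (a) if a player $p_i \in \{\procs\}$ reaches Phase~2 of round $j$ (line~\ref{pwrite2}), then in round $j$ it previously read the two values $[b,j]$ and $[1-b,j]$ from $R_1$ for some $b\in\{0,1\}$, and it read from $C$ a value that $p_0$ actually wrote into $C$ in round $j$ via line~\ref{pcoin1-2}; and (b) the two groups stay within one round of each other --- if some player reaches Phase~2 of round $j$ then $p_0$ and $p_1$ already entered round $j$, and if $p_0$ or $p_1$ enters round $j+1$ then every player previously reached line~\ref{pwrite2} in round $j$. Invariant~(a) follows from the guards in lines~\ref{guard0}--\ref{guard1} together with a linearization-bookkeeping argument on $R_1$ and $C$: the $\bot$-resets in lines~\ref{cleanr1}--\ref{cleanc} force any non-$\bot$ value a player reads in round $j$ to have been written in round $j$. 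Invariant~(b) follows because $p_0,p_1$ leave round $j$ unless they read a value $\ge n-2$ from $R_2$ (which, again via the $R_2$-resets, requires all $n-2$ players to have incremented $R_2$ in round $j$), while players reach Phase~2 of round $j$ only after $p_0$'s Phase-1 writes of round $j$ are visible. These are routine invariant arguments using only linearizability, and I would collect them in a lemma block (relegated to the appendix).

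The heart of the proof is the escape lemma. Fix round $j$. If $p_0$ never completes its write of $[0,j]$ into $R_1$ in line~\ref{pwrite1}, then $p_0$ never writes into $C$ in round $j$, so by invariant~(a) no player can pass the guard of line~\ref{guard1} in round $j$, hence none reaches line~\ref{pwrite2}, hence by invariant~(b) neither $p_0$ nor $p_1$ enters round $j+1$; this sub-case is deterministic. Otherwise, let $g$ be the history of the run up to and including the completion of $p_0$'s write of $[0,j]$. By Lemma~4.8 of~\cite{sl11}, since $R_1,R_2,C$ are $\sly$ the history set of Algorithm~\ref{toyalgo} has a $\str$ function; let $f$ be the one the adversary uses. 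The write of $[0,j]$ already appears in $f(g)$, and exactly one of the following holds: in $f(g)$ the write of $[1,j]$ precedes the write of $[0,j]$, or it does not. By property~(P) of Definition~\ref{defwsl} (and because each of $[0,j]$, $[1,j]$ is written at most once), the relative order of these two writes is the same in $f(h)$ for \emph{every} extension $h$ of $g$; in particular it is fixed \emph{before} $p_0$ flips the coin in line~\ref{pcoin1-1} of round $j$. Consequently, in every extension of $g$, no player can read from $R_1$ the ``wrong-order'' pair (if $[1,j]$ is linearized first, no player reads $u_1=[0,j]$ then $u_2=[1,j]$; if $[0,j]$ is linearized first, no player reads $u_1=[1,j]$ then $u_2=[0,j]$). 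But a player survives line~\ref{guard1} of round $j$ only if $(u_1,u_2)=([c,j],[1-c,j])$, where by invariant~(a) the value $c$ is exactly the coin $p_0$ flipped in round $j$. Since that coin is flipped after $g$ and is fair and independent of $g$, with probability exactly $1/2$ its value contradicts the fixed write order; when it does, every player that evaluates line~\ref{guard1} in round $j$ must exit there, so none reaches line~\ref{pwrite2}, and by invariant~(b) neither $p_0$ nor $p_1$ enters round $j+1$. (If the adversary crashes $p_0$ before the flip, or before the write of $C$ completes, that only helps.) Hence, conditioned on $g$, with probability at least $1/2$ no process enters round $j+1$.

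Finally, since this $1/2$ bound holds conditioned on the entire history preceding round $j$ (the round-$j$ coin being fresh), the number of rounds executed is stochastically dominated by a geometric random variable with success probability $1/2$: $\Pr[\text{run reaches round } k{+}1] \le 2^{-k}$. Thus with probability $1$ only finitely many rounds occur, i.e.\ every correct process eventually reaches the return statement in line~\ref{halt0} or line~\ref{halt1}, and the expected number of rounds is at most $\sum_{k\ge 0} 2^{-k} = 2$. The main obstacle is the probabilistic bookkeeping in the escape lemma: one must set up the probability space over coin sequences and strong-adversary schedules, justify that the adversary may without loss of generality commit to a single $\str$ function $f$, and argue that a strong adversary who has already seen $g$ genuinely cannot bias the still-unflipped round-$j$ coin, so that the disagreement event has probability $1/2$ conditioned on $g$. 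The safety invariants, though several, are straightforward once the linearization bookkeeping on $R_1$, $R_2$, $C$ is written out carefully.
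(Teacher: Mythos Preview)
Your proposal is correct and follows essentially the same approach as the paper's own proof: the paper also proves the four safety invariants you describe (its Lemmas~\ref{prev-reads}--\ref{C-non-bot}), then an escape lemma (its Lemma~\ref{exitchance}) with the same two-case split on whether $p_0$ completes its write of $[0,j]$, and in the second case uses the $\str$ function $f$ applied to the prefix $g$ ending at that write, together with property~(P), to fix the relative order of the two writes before the coin is flipped. The final geometric-dominance wrap-up is also the same, and the paper likewise invokes Lemma~4.8 of~\cite{sl11} to pass from per-register to per-history write strong-linearizability.
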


Combining Theorems~\ref{LinearizableIsWeak} and ~\ref{WSLinearizableIsStrong},
	we have:
\begin{corollary}\label{smallbob}
If $R_1$, $R_2$, and $C$ are
\begin{enumerate}
\item  only linearizable,
    then a strong adversary can prevent the termination~of~Algorithm~\ref{toyalgo};

\item $\sly$, 
    then Algorithm~\ref{toyalgo} terminates with probability 1 against a strong adversary.
 
\end{enumerate}
\end{corollary}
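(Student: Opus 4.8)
Part~(1) of Corollary~\ref{smallbob} is merely a restatement of Theorem~\ref{LinearizableIsWeak}, already established above, so the only content left to supply is part~(2), i.e.\ Theorem~\ref{WSLinearizableIsStrong}. The plan is to deduce it from Lemma~\ref{exitchance} by the standard ``geometric-decay implies almost-sure termination'' argument; the single point that needs care is that Lemma~\ref{exitchance} must be invoked in the slightly stronger \emph{conditional} form that its proof in fact establishes.

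For $j\ge 1$ let $A_j$ be the event that some process enters round $j$, i.e.\ executes line~\ref{enter1} or line~\ref{enter2} with loop index $j$. A process reaches round $j{+}1$ only after completing round $j$ without leaving its \textbf{for} loop, so $A_1\supseteq A_2\supseteq\cdots$, and $A_{j+1}$ is exactly the event whose complement Lemma~\ref{exitchance} bounds below by $1/2$. First I would re-read the proof of Lemma~\ref{exitchance} to extract the following. Let $\mathcal F_j$ denote the information revealed by the execution up to and including the step in which $p_0$ completes its write of $[0,j]$ into $R_1$ in round $j$, or, if that step never occurs, up to the step at which it becomes determined that $p_0$ will not complete this write. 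The case split inside Lemma~\ref{exitchance} is precisely a case split on an $\mathcal F_j$-measurable event: in the case where $p_0$ does not complete that write, no process enters round $j{+}1$ \emph{with certainty}; in the case where it does, the bound ``probability $\ge 1/2$'' is obtained solely from the outcome of $p_0$'s coin flip in line~\ref{pcoin1-1} of round $j$, which is a fresh fair coin tossed strictly after the history fixed by $\mathcal F_j$ --- a strong adversary observes coin outcomes but cannot bias them, and the write-linearization order of $[0,j]$ and $[1,j]$ is already frozen by the $\str$ function on that history and hence on every extension of it (property~(P) of Definition~\ref{defwsl}). Consequently
\[
\Pr[\,A_{j+1}\mid \mathcal F_j\,]\ \le\ \tfrac12\,\mathbf{1}_{A_j}\qquad\text{almost surely:}
\]
the left-hand side is $0$ off $A_j$, is $0$ on the part of $A_j$ where $p_0$ fails to complete the write, and is $\le 1/2$ on the remaining part of $A_j$. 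Taking expectations gives $\Pr[A_{j+1}]\le\tfrac12\Pr[A_j]$, hence $\Pr[A_j]\le 2^{-(j-1)}$ and $\Pr\bigl[\bigcap_{j\ge1}A_j\bigr]=\lim_j\Pr[A_j]=0$.

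It remains to convert this into the termination statement. If, in some execution, the algorithm fails to terminate, then some correct process $p$ never executes its \textbf{return} (line~\ref{halt0} or line~\ref{halt1}); since every register operation is wait-free, $p$ cannot stall inside a round, so it must complete each round without leaving its \textbf{for} loop and therefore enters round $j$ for every $j\ge1$, i.e.\ $\bigcap_{j\ge1}A_j$ occurs. (Alternatively: there are only $n$ processes and the set of processes entering round $j$ is nonempty and non-increasing in $j$, so some process enters every round, hence is correct and never returns.) Thus the non-termination event is contained in the null event $\bigcap_{j\ge1}A_j$, so with probability $1$ every correct process returns, i.e.\ Algorithm~\ref{toyalgo} terminates. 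The same estimate bounds the running time: the number of rounds ever entered equals $\sum_{j\ge1}\mathbf{1}_{A_j}$, whose expectation is $\sum_{j\ge1}\Pr[A_j]\le\sum_{j\ge1}2^{-(j-1)}=2$.

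The main obstacle is really only the first step: making rigorous that the ``$1/2$'' of Lemma~\ref{exitchance} is an independent-coin bound conditioned on $\mathcal F_j$, not merely an overall bound --- it is exactly this that licenses multiplying the per-round bounds to obtain $2^{-(j-1)}$. Everything afterward is routine bookkeeping with nested events and wait-freedom.
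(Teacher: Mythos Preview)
Your proposal is correct and follows the paper's approach: the corollary is obtained by combining Theorem~\ref{LinearizableIsWeak} (part~1) and Theorem~\ref{WSLinearizableIsStrong} (part~2), and the latter is deduced from Lemma~\ref{exitchance} exactly as you describe. Your treatment is in fact more careful than the paper's own proof of Theorem~\ref{WSLinearizableIsStrong}, which simply asserts that since the $1/2$ bound ``holds for every round $j\ge 1$'' termination follows with probability~$1$; you correctly make explicit that what is needed (and what the proof of Lemma~\ref{exitchance} actually establishes, via the fresh coin flipped after the history $g$ is fixed) is the \emph{conditional} bound $\Pr[A_{j+1}\mid\mathcal F_j]\le\tfrac12\mathbf{1}_{A_j}$, which then yields the geometric decay $\Pr[A_j]\le 2^{-(j-1)}$ and almost-sure termination.
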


\rmv{
\begin{proof}
Consider any round $j \ge 1$.
By Lemma~\ref{exitchance},
	with probability at least $1/2$,
	no process enters round $j+1$.
Since this holds for every round $j \ge 1$,
	then, with probability 1,
	all the correct processes
	return in lines~\ref{halt0} or line~\ref{halt1}
	within a finite number of rounds $r$,
	and the expected value of $r$ is 2.
\end{proof}

}


Consider any randomized algorithm $\mathcal{A}$ that solves some task $T$,
	such as consensus, 
	for $n \ge 3$ processes $p_0, p_1, p_2, \ldots, p_{n-1}$,
	and terminates with probability 1 against a strong adversary.
Using $\mathcal{A}$, we can construct a corresponding randomized algorithm $\mathcal{A}'$ as follows:
	every process $p_i$ with $i\in\{0,1,2,...,n-1\}$
	first executes Algorithm~\ref{toyalgo};
	if~$p_i$ returns then it executes algorithm~$\mathcal{A}$.
From Corollary~\ref{smallbob} we have:

\begin{corollary}\label{bigbob}
Let $\mathcal{A}$ be any randomized algorithm that solves a task $T$ 
	for $n \ge 3$ processes and terminates with probability~1 against a strong adversary.
There is a corresponding randomized algorithm $\mathcal{A}'$
	that solves $T$ for $n \ge 3$ processes such that:

\begin{enumerate}

\item $\mathcal{A}'$ uses a set $\mathcal{R}$ of three shared registers
	in addition to the set of base objects of $\mathcal{A}$.

\item If the registers in $\mathcal{R}$ are only linearizable,
    then a strong adversary can prevent the termination~of~$\mathcal{A}'$.

\item If the registers in $\mathcal{R}$ are $\sly$, 
    then $\mathcal{A}'$ terminates with probability 1 against a strong adversary.\footnote{$\mathcal{A}'$ also terminates if the registers in $\mathcal{R}$ are atomic, because atomic registers are $\sly$.}
    
\end{enumerate}
\end{corollary}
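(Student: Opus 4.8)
The plan is to construct $\mathcal{A}'$ by the simple sequential composition already described in the text: each process $p_i$ first runs Algorithm~\ref{toyalgo} on the three fresh shared registers in $\mathcal{R}$, and upon returning from it (line~\ref{halt0} or line~\ref{halt1}) proceeds to execute $\mathcal{A}$ on the base objects of $\mathcal{A}$. Part~1 is then immediate: the only new shared objects used are $R_1, R_2, C$, so $\mathcal{A}'$ uses exactly $|\mathcal{R}| = 3$ registers beyond those of $\mathcal{A}$. I would state explicitly that the registers of $\mathcal{A}$ and those of $\mathcal{R}$ are disjoint, so runs of the Algorithm~\ref{toyalgo}-prefix and of the $\mathcal{A}$-suffix do not interfere.

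For Part~2, I would invoke Theorem~\ref{LinearizableIsWeak} (equivalently Corollary~\ref{smallbob}, item~1): if the registers in $\mathcal{R}$ are only linearizable, a strong adversary can schedule the Algorithm~\ref{toyalgo}-prefix of $\mathcal{A}'$ so that all $n$ processes execute infinitely many rounds of it and never return. Since every process in $\mathcal{A}'$ must return from Algorithm~\ref{toyalgo} before it may begin executing $\mathcal{A}$, no process ever takes a step of $\mathcal{A}$, hence no process terminates in $\mathcal{A}'$; thus the adversary prevents termination. The one subtlety is that the adversary for Theorem~\ref{LinearizableIsWeak} acts on a standalone run of Algorithm~\ref{toyalgo}, whereas here it must be a strong adversary for $\mathcal{A}'$; but since no process ever reaches $\mathcal{A}$ in that schedule, the very same scheduling decisions (which depend only on coin flips inside Algorithm~\ref{toyalgo}) yield a valid non-terminating run of $\mathcal{A}'$, so this is a routine embedding.

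For Part~3, I would use Theorem~\ref{WSLinearizableIsStrong} (equivalently Corollary~\ref{smallbob}, item~2): if the registers in $\mathcal{R}$ are $\sly$, then against any strong adversary, with probability~1 every correct process returns from the Algorithm~\ref{toyalgo}-prefix after finitely many rounds. Conditioned on that event, every correct process then enters $\mathcal{A}$; since $\mathcal{A}$ solves $T$ and terminates with probability~1 against a strong adversary, with probability~1 every correct process also returns from $\mathcal{A}$, so $\mathcal{A}'$ terminates with probability~1. That $\mathcal{A}'$ solves $T$ follows because the output of $\mathcal{A}'$ is exactly the output of $\mathcal{A}$, which is run unmodified once the prefix completes. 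The footnote is an immediate consequence of the remark that atomic registers are $\sly$.

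The main obstacle is a modeling point rather than a computational one: one must argue carefully that a strong adversary against the composed algorithm $\mathcal{A}'$ is, during the prefix phase, no more powerful than a strong adversary against Algorithm~\ref{toyalgo} in isolation, and during the suffix phase no more powerful than a strong adversary against $\mathcal{A}$ in isolation. This is true because the two phases use disjoint sets of objects and the phase boundary (a process's return from Algorithm~\ref{toyalgo}) depends only on that process's local state within the prefix; hence any adversarial strategy against $\mathcal{A}'$ restricts to a legitimate strategy against each component, and the probability bounds of Theorems~\ref{LinearizableIsWeak} and~\ref{WSLinearizableIsStrong} carry over. Once this composition lemma is in hand, the three parts follow directly.
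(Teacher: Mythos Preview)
Your proposal is correct and follows essentially the same approach as the paper: construct $\mathcal{A}'$ by having each process first execute Algorithm~\ref{toyalgo} on the three fresh registers $\mathcal{R}=\{R_1,R_2,C\}$ and then, upon returning, execute $\mathcal{A}$; Parts~1--3 then follow directly from Corollary~\ref{smallbob} (equivalently Theorems~\ref{LinearizableIsWeak} and~\ref{WSLinearizableIsStrong}). Your added remarks about disjointness of the object sets and about the adversary for $\mathcal{A}'$ restricting to adversaries for each phase are, if anything, more careful than what the paper spells out.
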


\rmv{
\begin{proof}
Consider any randomized algorithm $\mathcal{A}$ that solves some task $T$
	for $n \ge 3$ processes $p_0, p_1, p_2, \ldots, p_{n-1}$,
	and terminates with probability 1 even against a strong adversary.
Using $\mathcal{A}$, we construct the following randomized algorithm $\mathcal{A}'$ (see Figure~\ref{toy3}):
	every process $p_i$ with $i\in\{0,1,2,...,n-1\}$
	first executes Algorithm~\ref{toyalgo};
	if $p_i$ returns then it executes algorithm $\mathcal{A}$.
Note that:

\begin{enumerate}

\item In addition to the set of base objects that $\mathcal{A}$ uses,
	the algorithm $\mathcal{A}'$ uses the set of shared registers ${\mathcal{R}} = \{R_1, R_2, C \}$.
  
\item Suppose registers in $\mathcal{R}$ are $\sly$.
	Then, by Theorem~\ref{WSLinearizableIsStrong},
	Algorithm~\ref{toyalgo} (that processes execute before executing $\mathcal{A}$)
	terminates with probability 1
	against a strong adversary.
	Since $\mathcal{A}$ also terminates with probability 1 against a strong adversary,
	the algorithm $\mathcal{A}'$ terminates with probability 1 against a strong adversary,
	and the expected running time of $\mathcal{A}'$ is only a constant
        more than the expected running time of the given algorithm $\mathcal{A}$.

	Since $\mathcal{A}$ solves task $T$, it is clear that $\mathcal{A}'$ also solves $T$.
	
\item Suppose registers in $\mathcal{R}$ are linearizable but not $\sly$.
	Then, by Theorem~\ref{LinearizableIsWeak},
	a~strong adversary can construct a run of Algorithm~\ref{toyalgo} where, with probability~$1$,
	all the processes execute
	infinitely many rounds and never return.
	Thus, since $\mathcal{A}'$ starts by executing Algorithm~\ref{toyalgo},
	it is clear that a strong adversary can prevent the termination of $\mathcal{A}'$ \emph{with probability 1}.
\end{enumerate}
\end{proof}

}

\section{Implementing write strongly-linearizable MWMR registers from SWMR registers}\label{sectionwslalgo}

To implement a write strongly-linearizable MWMR register $\REG$,
	we must be able to linearize all the write operations ``on-line''
	without looking at what may happen in the future.
The challenge is that at the moment $t$ a write operation $w$ completes,
	for each write $w'$ that is concurrent with $w$ and is still pending at time $t$,
	we must have enough information to \emph{irrevocably} decide
	whether $w'$ should be linearized before or after $w$;
	of course this linearization order must be consistent with the values that
	processes previously read and will read in the future from $\REG$.
Using simple ``Lamport clocks'' to timestamp and linearize write operations does not seem to work:
	in the next section, we give an implementation showing that Lamport clocks
	are sufficient to implement a linerizable MWMR register,
	but this implementation is \emph{not} $\sly$.

In this section we give an implementation of a MWMR register
	from SWMR registers ($\Awsl$), and prove that it is write strongly-linearizable.
This is a modification of an implementation given in~\cite{swlamport}
	and it uses vector clocks to timestamp write operations.
The question is how to use vector timestamps to linearize write operations \emph{on-line}.
Specifically, at the moment $t$ a write operation $w$ completes,
	for each operation $w'$ that is concurrent with $w$ and still pending at time $t$, 
	how do we decide the order of $w'$ with respect to $w$?
Note that at time $t$, while the vector timestamp of $w$ is known,
	the vector timestamp of such $w'$ may not be known:
	it is still being computed (it may be incomplete with just a few entries set).
The proof of linearization given in~\cite{swlamport} does not work here:
	that proof can linearize all the write operations after seeing their \emph{complete} vector timestamps;
	and it can do so because linearization is done ``off-line''.

$\Awsl$ uses SWMR registers $Val[i]$ for $i=1,2,...,n$.
Each write operation $w$ is timestamped with a vector timestamp;
	roughly speaking, this represents the number of write operations that every
	process performed ``causally before'' $w$.
Each $Val[k]$ contains the latest value that $p_k$ wrote to $\REG$
	with its corresponding vector timestamp. 
To write a value $v$ into $\REG$,
	a process $p_k$ first constructs a new timestamp $\nts$, incrementally one component at a time,
	by successively reading $Val[1], \ldots, Val[n]$ (lines~\ref{timestampb}--\ref{timestampe});
	then $p_k$ writes the tuple $(v,\nts)$ into $Val[k]$ (line~\ref{write});
	finally $p_k$ resets its $\nts$ to $[ \infty, \ldots, \infty]$ (as we will see,
	this is important for the $\str$).
To read $\REG$,
	a process $p$ first reads all $Val[-]$ (lines~\ref{collecttsb}--\ref{collecttse});
	then $r$ returns the value $v$ with the greatest vector timestamp in \emph{lexicographic order $\le$}
	(lines~\ref{max}--\ref{readreturn}). 
Note that this is a total order.


We now prove that this MWMR implementation is indeed $\sly$.
Before we do so, we first illustrate the problem that we mentioned earlier,
	namely, how to linearize write operations on-line based on incomplete vector timestamps,
	and then we give some intuition on how this problem is solved.


\begin{algorithm}[!h]
\caption{Implementing a $\sly$ MWMR register $\REG$ from SWMR registers}
\begin{flushleft}
\textsc{Shared Object:}

For $i=1,2,...,n$:

$Val[i]$: SWMR register that contains a tuple $(v,ts)$ where $v$ is a value and $ts$ is a vector timestamp;

\hspace*{0.95cm}initialized to $( 0,[0\ldots0] )$ where $0$ is the initial value of $\REG$ and  $[0\ldots0]$ is an array of length $n$.

\vspace{0.2cm}

\textsc{Local Object:}

$new\_ts$: For each process, a local array of length $n$;
	initialized to $[\infty, \ldots, \infty]$.
\end{flushleft}
\vspace{0.05cm}
\begin{algorithmic}[1]
\STATEx
When writer $p_k$ writes $v$ to $\REG$:  \hfill // $1\le k \le n$

\FOR{$i = 1$ to $n$} \label{timestampb}
	\IF{$i\ne k$}
	\STATE \label{ts1} $\nts[i] \gets (Val[i].ts)[i]$ \hfill // $Val[i].ts$ is the vector timestamp of the tuple in $Val[i]$
		\ELSE
	\STATE \label{ts2} $\nts[i] \gets (Val[i].ts)[i]+1$
	\ENDIF

\ENDFOR\label{timestampe}

\STATE $Val[k] \gets (v,\nts)$ \hfill // write to shared register \label{write}
\STATE $ \nts \gets [ \infty, \ldots, \infty]$
\STATE \label{wreturn}\textbf{return} done
\newline

\STATEx When a process reads from $\REG$:

\FOR{$i = 1$ to $n$}\label{collecttsb}

	\STATE $(v_i,ts_i)\gets Val[i]$\label{readall} 
\ENDFOR\label{collecttse}

\STATE let $j$ be such that $ts_j = \max \{ts_1,...,ts_n\}$ \label{max}  \hfill // lexicographic max

\STATE \textbf{return} $(v_j,ts_j)$\label{readreturn} 
\end{algorithmic}\label{wsl}
\end{algorithm}

Consider a write operation $w_2$ that finishes at time $t$,
	as illustrated in Figure~\ref{diff1}.
For each write operation that is active at time $t$
	we must decide whether it should be linearized
	before or after $w_2$.
We cannot treat all such operations in the same way:
	we cannot simply linearize all of them before
	or all of them after $w_2$.
This is because the linearization order of these write operations
	depends on their timestamps
	(so as to respect the order of the read operations
	that read their values),
	which may not yet be fully formed at time $t$.
Indeed, the timestamp of a write operation active at $t$
	may end up being greater than, or smaller than,
	the timestamp of $w_2$.
For example, in Figure~\ref{diff1}
	the timestamps eventually computed
	by the write operations $w_1$ and $w_3$,
	which are active at time $t$,
	end up being, respectively,
	greater than and smaller than the timestamp of $w_2$.
As we will see, by looking carefully at the progress
	that each of $w_1$ and $w_3$ has made by time $t$
	towards computing its vector timestamp,
	we can determine, \emph{at time $t$},
	the correct linearization order of
	$w_1$ and $w_3$ relative to $w_2$.
We do so by
	(a) initializing the timestamp of each write to $[ \infty, \ldots, \infty]$ 
	(so it gets smaller and smaller while it is being formed); and
	(b) ordering the writes by comparing their (possibly incomplete) timestamps
	in lexicographical order.
This makes it possible to linearize the write operations on-line.

\begin{figure}[!h]
\centering
\includegraphics[width=0.7\textwidth]{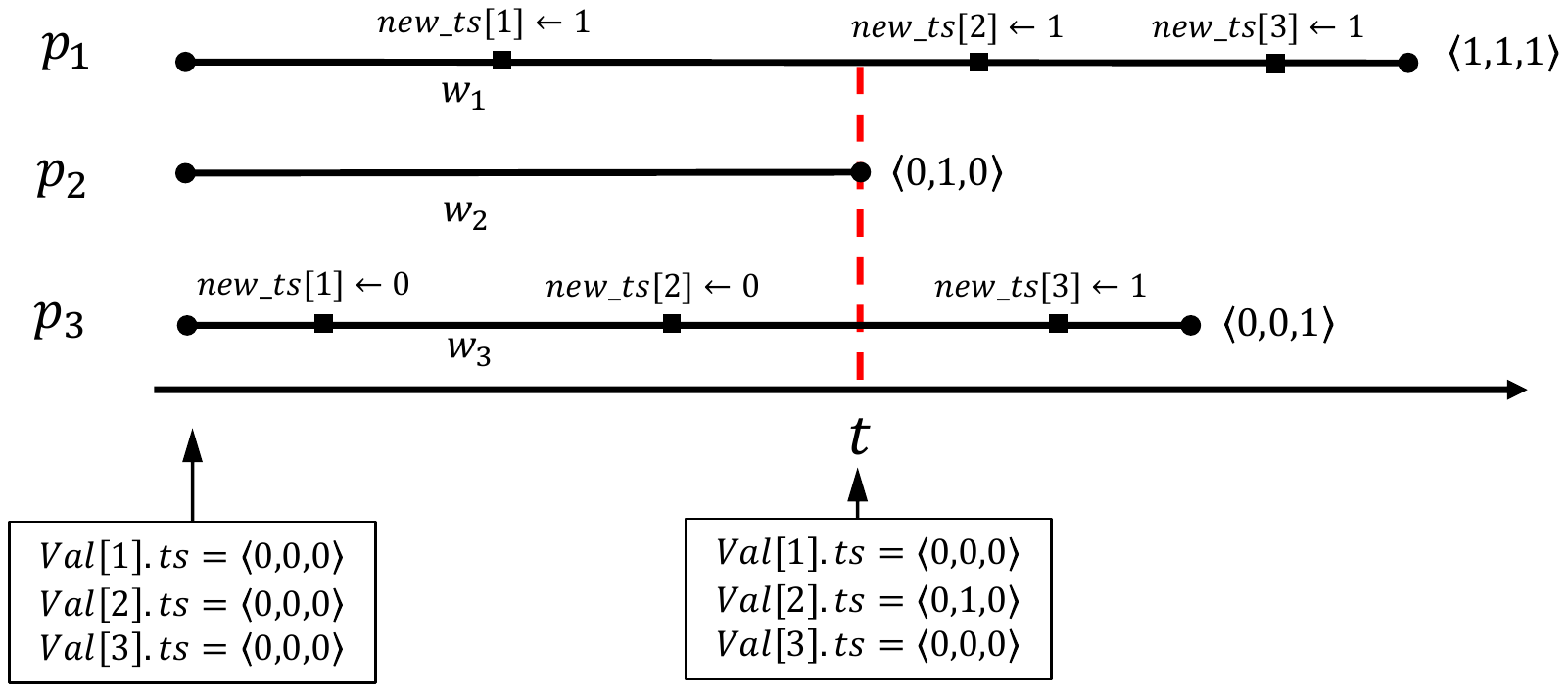}
  \caption{Three concurrent writes under $\Awsl$}
  \label{diff1}
\end{figure}

\begin{algorithm}[!htb]
\caption{A $\str$ function $f$ for the set of histories $\seth$ of $\Awsl$}
\begin{flushleft}
\textbf{Input:} a history $H \in \seth$

\textbf{Output:} $\seq{}$, a sequential history of the operations in $H$
\end{flushleft}
\vspace{0.1cm}
\begin{algorithmic}[1]
\STATEx linearization of write operations in $H$
\STATE \label{seq0}$\wseq{0} \leftarrow()$
\STATE $m \gets$ the number of operations that write to $Val[-]$ in $H$    \hfill//{ $m$ can be $\infty$}
\FOR{$i = 1, 2, \ldots,m$}\label{findleader1}

\STATE \label{findleader2} $t_i \gets$ the time of the $i$th write to a register $Val[-]$ in $H$ (line~\ref{write} of $\Awsl$)

\STATE \label{findleader3} $w_i \gets$ the operation that writes to $Val[-]$ at time $t_i$


\IF{$w_i \notin \wseq{i-1}$}\label{line13} \label{findleader4}

\STATE\label{C} $\sC{i} \leftarrow \{ w~|~w \text{ is a write operation such that } w \notin \wseq{i-1}  \text{ and } w \text{ is active at time } t_i \text{ in } H\}$

 
\STATE \label{newts} $\forall w\in \sC{i}$, 
	$\pts{w}{i}\gets \text{the value of $\nts$,
	 at time $t_i$}$,
	 of the process executing $w$

\STATE \label{Bt} $\B{i} \leftarrow \{w~|~w\in \sC{i} \text{ and } \pts{w}{i} \le \pts{w_i}{i}\}$
\STATE \label{seqi1} $\wseq{i} \gets \wseq{i-1}\circ (\text{the sequence of operations $w \in \B{i}$ in increasing order of $\pts{w}{i}$)}$

\ELSE
\STATE $\sC{i} \leftarrow \emptyset$; $\B{i} \leftarrow \emptyset$
\STATE \label{seqi2}$\wseq{i} \gets \wseq{i-1}$
\ENDIF
\ENDFOR

\IF{$m = \infty$}
\STATE \label{wseq1}$\wseq{} \gets \lim_{i\rightarrow \infty} \wseq{i}$
\ELSE
\STATE \label{wseq2}$\wseq{} \gets \wseq{m}$
\ENDIF

\smallskip

\STATEx linearization of read operations in $H$
\STATE $\seq{}' \gets \wseq{}$
\FOR{every value $(v,ts)$ that processes read in $H$}\label{fmapread}
\STATE \label{R} $\mathcal{R} \leftarrow \{ r~|~r \text{ is a read operation that returns $(v,ts)$ in $H$} \}$
\STATE \label{S} $\SR \leftarrow$ the sequence of operations in 
$\mathcal{R}$ in increasing order of their start time
\IF{$ts=[0,\ldots,0]$} \label{readinit1}
\STATE \label{readinit2} prepend $\SR$ to $\seq{}'$
\ELSE 
\STATE \label{followwrite1} $w \leftarrow$ the operation in $\seq{}'$ that writes $(v,ts)$
\STATE \label{followwrite2} insert $\SR$ after $w$ and before any subsequent write operation in $\seq{}'$ 
\ENDIF


\ENDFOR\label{endfmapread}
\STATE $\seq{} \gets \seq{}'$
\end{algorithmic}\label{f}
\end{algorithm}

 
To prove that the MWMR implementation given by $\Awsl$ is $\sly$,
	we give a $\str$ function $f$ for the set of histories $\seth$ of this algorithm.
We describe $f$ as an algorithm ($\f$)
	that takes as input any finite or infinite history $H \in \seth$ 
	and outputs a sequential history $\seq{}$ that satisfies properties (L) and (P) of Definition~\ref{defwsl}.
Intuitively, $\f$ linearizes all the write operations \emph{on-line}, as follows.
 It scans the input history $H$ by increasing time;
	 while doing so it maintains a sequence $\wseq{}$ of write operations that it has linearized so far.
When it sees that, at some time $t_i$, a write operation $w_i$ writes to $Val[-]$,
	it first checks whether $w_i$ was already linearized i.e., whether $w_i$ is in $\wseq{}$ (lines~\ref{findleader1}--\ref{findleader4}).
If $w_i$ is not in $\wseq{}$,
	it forms the set $\sC{i}$ of all the write operations that are ``active'' at time $t_i$
	and are not yet in $\wseq{}$ (line~\ref{C}).\footnote{
An operation that starts at time $s$ and completes at time $f$ is \emph{active} at time $t$ if $s \le t \le f$.
}
It then determines the (possibly incomplete) timestamp of each operation in $\sC{i}$ at time $t_i$ (line~\ref{newts});
	note that $w_i$ is in $\sC{i}$ and $w_i$'s timestamp, denoted $\pts{w_i}{i}$, is complete.
Finally, it selects the operations in $\sC{i}$ whose (possibly incomplete) timestamps are smaller than or equal to $\pts{w_i}{i}$ (line~\ref{Bt}),
	and then it appends them to the linearization sequence $\wseq{}$ in increasing timestamp order (line~\ref{seqi1}).
To linearize the read operations, it collects all the read operations that return some value $(v,ts)$,
 	and linearizes them after the write operation that writes $(v,ts)$, in increasing start time order (lines~\ref{fmapread}--\ref{endfmapread}).

Intuitively,
	$\f$ gives a $\sly$ function $f$ because:
	 (1) by scanning $H$ in increasing time, it linearizes each write operation by the time the operation completes without ``peeking into the future'',
	and (2) it only \emph{appends} write operations to $\wseq{}$, and so it satisfies the ``prefix property'' (P) of Definition~\ref{defwsl}.
In Appendix~\ref{proofwsl},
	we prove that $\f$ defines a $\str$ function for $\seth$ and
	thus show:


\rmv{

In the following,
	we consider an arbitrary history $H \in \seth$ of $\Awsl$ and the history $\seq{}=f(H)$ constructed by $\f$ on input $H$.
We first show $f$ is a linearization function of $\seth$,
	i.e., $\seq{}$ satisfies properties~\ref{p1}--\ref{p3} of Definition~\ref{defl}.

\begin{definition}
Let $o$ be an operation that starts at time $s_o$ and completes at time $f_o$ and let $t$ be a time.
We say that $o$ is active at time $t$ if and only if $s_o \le t \le f_o$.
\end{definition}

\begin{definition}\label{tscompare}
Let $a$ and $b$ be two timestamps 
then:
\begin{itemize}

\item $a < b$ if and only if $a$ precedes $b$ in lexicographic order. 

\item $a \leq b$ if and only if $a = b$ or $a < b$.
\end{itemize}
\end{definition}

\begin{observation}
Relation $\le$ is a total order on the set of timestamps.  
\end{observation}

\begin{notation}
Let $w$ be an operation that writes into $Val[-]$.
We denote by $\ts{w}$ the timestamp that $w$ writes into $Val[-]$.
That is, if $w$ writes $(-,ts)$, 
	$\ts{w}=ts$.
\end{notation}

\begin{observation}\label{onlyone}
The tuples $(v,ts)$ and $(v',ts')$ written to $Val[-]$ by two distinct write operations have distinct timestamps, i.e.,
	$ts \ne ts'$.
\end{observation}

\begin{observation}\label{a1ob2}
Consider the execution of a write operation (lines~\ref{timestampb}--\ref{wreturn}) by a process $p_k$.
During that execution,
	the values of the variable $\nts$ of $p_k$ are non-increasing with time.
\end{observation}

\begin{observation}\label{obsreadtime}
		If a read operation returns $(v,ts) \neq (0,[0, \ldots, 0])$\XMP{If a read operation reads $(v,ts)$ such that $ts \neq [0, \ldots, 0]$ ?},
		then there is an operation $w$ 
		that writes $(v,ts)$ to $Val[-]$.
\end{observation}

\begin{lemma}\label{last}
If a read operations $r$ starts after an operation $w$ writes to $Val[-]$ and $r$ returns $(-,ts)$,
	then $ts \ge \ts{w}$.
\end{lemma}
\begin{proof} 
 Assume a read operations $r$ starts after an operation $w$ writes to some $Val[k]$ and $r$ returns $(-,ts)$.
Then $r$ reads $Val[i]$ (line~\ref{readall} of $\Awsl$) for every $1 \le i \le n$ after $w$ writes $(-,\ts{w})$ to $Val[k]$.
Since the timestamps in each $Val[-]$ are monotonically increasing, 
	$r$ reads $(-,ts'')$ from $Val[k]$ for some $ts' \ge \ts{w}$.
Since $ts$ is the largest timestamp that $r$ reads among all $Val[-]$ (lines~\ref{max}--\ref{readreturn} of $\Awsl$),
	$ts \ge ts' \ge \ts{w}$.
\end{proof}

\begin{observation}\label{w}
If an operation $w$ writes to $Val[-]$,
		there is an $i\ge 1$ such that 
		$w = w_i$.
\end{observation}

\begin{observation}\label{fmapwritex0}
If an operation $w$ writes to $Val[-]$,
		there is a unique $j\ge 1$ such that 
		$w \in \B{j}$.
\end{observation}

\begin{observation}\label{leaderofbatch0}
For every write operation $w$,
	$w \in \seq{}$ if and only if 
	there is an $i$ such that $w\in \mathcal{B}_i$.	
\end{observation}


By Observations~\ref{fmapwritex0} and~\ref{leaderofbatch0}, we have:

\begin{corollary}\label{wwi}
If an operation $w$ writes to $Val[-]$,
	then $w \in \seq{}$.
\end{corollary}


\begin{observation}\label{leaderofbatch1}
For any two write operations $w$ and $w'$,
	if $w\in \mathcal{B}_i$,	
	$w' \in \mathcal{B}_j$, 
	and $i < j$,
	then $w$ is before $w'$ in $\seq{}$.
\end{observation}

Recall that
	$\pts{w}{i}$ is the value of $\nts$,
	 at time $t_i$,
	 of the process executing the write operation $w$ (see line~\ref{newts} of $f$).

\begin{observation}\label{leaderofbatch2}
For all $i \ge 1$,
	if $w_i \in \B{i}$
	then $\pts{w_i}{i} = \ts{w_i}$.
\end{observation}

\begin{observation}\label{leaderofbatch3}
For all $i \ge 1$,
	 for all operations $w \in \B{i}$,
	 if $w \ne w_i $
	 then $\pts{w}{i} < \ts{w_i}$.
\end{observation}

By Observation~\ref{a1ob2}, we have:
\begin{observation}\label{leaderofbatch5}
For all $i \ge 1$,
	for all operations $w \in \B{i}$ that write to $Val[-]$,
	 $\ts{w} \le \pts{w}{i}$.
\end{observation}

By Observations~\ref{leaderofbatch3} and \ref{leaderofbatch5}, we have:
\begin{observation}\label{leaderofbatch4}
For all $i \ge 1$,
	for all operations $w \in \B{i}$ that write to $Val[-]$,
	 $\ts{w}  \le  \ts{w_i}$.
\end{observation}

\begin{lemma}\label{later}
For all $i \ge 1$,
	 for all operations $w$ that write to $Val[-]$,
	 if $w \in \sC{i}$ and $w \notin \B{i}$,
	 then $\ts{w} > \ts{w_i}$.\XMP{New lemma says that if $f$ decides not to linearize $w$ by time $t_i$, then $\ts{w} > \ts{w_i}$}
\end{lemma}
\begin{proof}
Let $i \ge 1$ and assume that an operation $w$ writes to $Val[-]$
	such that $w \in \sC{i}$ and $w \notin \B{i}$.	
By line~\ref{Bt} of $\f$,
	$\pts{w}{i} > \pts{w_i}{i}$.
Since by Observation~\ref{leaderofbatch2} $\pts{w_i}{i} = \ts{w_i}$,
	$\pts{w}{i} > \ts{w_i}$ (*).
There are two cases:

\textbf{Case 1:} $\pts{w}{i}$ contains no $\infty$.
This implies $\ts{w} = \pts{w}{i}$.
Thus, by (*),
	$\ts{w} = \pts{w}{i} > \ts{w_i}$.

\textbf{Case 2:} $\pts{w}{i}$ contains $\infty$.
Then there is a $k$ such that for every $k \le l \le n$,
	$w$ reads $(Val[l].ts)[l]$ (lines~\ref{ts1} and~\ref{ts2} of $\Awsl$) after time $t_i$.
Note that $w_i$ reads $(Val[l].ts)[l]$ before time $t_i$ for every $k \le l \le n$,
	and $w_i$ writes $(-,\ts{w_i})$ to some $Val[-]$ at time $t_i$.
Thus,
	for every $k \le l \le n$,
	since $(Val[l].ts)[l]$ is non-decreasing,
	$\ts{w}[l] \ge \ts{w_i}[l]$ ($\dagger$).

For every $1 \le l \le k-1$, 
	since $w$ reads $(Val[l].ts)[l]$ before time $t_i$,
	$\ts{w}[l] = \pts{w}{i}[l]$.
By (*), 
	 $\pts{w}{i}[1,\ldots, k-1] \ge \ts{w_i}[1,\ldots, k-1]$.
So $\ts{w}[1,\ldots, k-1] \ge \ts{w_i}[1,\ldots, k-1]$ ($\dagger \dagger$).

By ($\dagger$) and ($\dagger \dagger$),
	$\ts{w} \ge \ts{w_i}$.
Since $w \notin \B{i}$,
	$w \ne w_i$.
By Observation~\ref{onlyone},
	$\ts{w} \ne \ts{w_i}$.
Thus, $\ts{w} > \ts{w_i}$.
\end{proof}

\begin{lemma}\label{tsorder}\XMP{intuition of why we can decide the order of writes by seeing the partial new\_ts}
For all $j > i \ge 1$,
	if $w \in \B{i}$,
	$w' \in \B{j}$,
	and $w$ and $w'$ both write to $Val[-]$,
	then $\ts{w'} > \ts{w}$.
\end{lemma}

\begin{proof}
Assume $j > i \ge 1$,
	$w \in \B{i}$, 
	$w' \in \B{j}$,
	and $w$ and $w'$ both write to $Val[-]$.

\begin{claim}\label{tsorderc1}
$\ts{w'} > \ts{w_i}$.
\end{claim}

\begin{proof}
Since $w' \in \B{j}$ and $j > i$,
	$w' \notin \B{i}$\XMP{unique $\B{i}$ of Observation~\ref{fmapwritex0}} and $w' \notin \wseq{i}$.
There are two cases:

\textbf{Case 1}:
	$w' \in \sC{i}$.
Since $w' \in \sC{i}$ and $w' \notin \B{i}$,
	by Lemma~\ref{later},
	$\ts{w'} > \ts{w_i}$.
\smallskip

\textbf{Case 2}:
	$w' \notin \sC{i}$.
By lines~\ref{seqi1} and~\ref{seqi2} of $\f$,
	$\wseq{i-1}$ is a prefix of $\wseq{i}$.
Since $w' \notin \wseq{i}$, 
	$w' \notin \wseq{i-1}$.
Since $w' \notin \sC{i}$,
	by line~\ref{C} of $\f$,
	$w'$ is not active at time $t_i$.
Since $w' \in \B{j} \subseteq \sC{j}$,
 	by line~\ref{C} of $\f$,
 	$w'$ is active at time $t_j$.
Since $i < j$,
	$t_i < t_j$. 
So $w'$ starts after $t_i$
	and $w'$ reads all $(Val[-].ts)[-]$ (lines~\ref{timestampb}--\ref{timestampe} of $\f$) after $t_i$.
Note that $w_i$ reads $(Val[l].ts)[l]$ before time $t_i$ for every $1 \le l \le n$,
	and $w_i$ writes $(-,\ts{w_i})$ to some $Val[-]$ at time $t_i$.
Thus,
	for every $1 \le l \le n$,
	since $(Val[l].ts)[l]$ is non-decreasing,
	$\ts{w'}[l] \ge \ts{w_i}[l]$.
So $\ts{w'} \ge \ts{w_i}$.
Since $w' \notin \B{i}$,
	$w' \ne w_i$.
By Observation~\ref{onlyone},
	$\ts{w'} \ne \ts{w_i}$
	and so $\ts{w'} > \ts{w_i}$.

Therefore in both cases,
	$ts_{w'} > \ts{w_i}$.
\end{proof}

Since $w \in \B{i}$,
	by Observation~\ref{leaderofbatch4},
	$\ts{w} \le \ts{w_i}$.
By Claim~\ref{tsorderc1},
	$\ts{w'} > \ts{w_i} \ge \ts{w'}$.
\end{proof}

We now show that for every two operations $o_1$ and $o_2$, 
	if $o_1$ completes before $o_2$ starts in $H$\XMP{in $H$} and $o_1,o_2 \in \seq{}$,
	then $o_1$ is before $o_2$ in $\seq{}$.

\begin{lemma}\label{linearization1}
If a write operation $w$ completes before a write operation $w'$ starts and $w,w' \in \seq{}$,
	then $w$ is before $w'$ in $\seq{}$.
\end{lemma}

\begin{proof}
Assume a write operation $w$ completes before a write operation $w'$ starts and $w,w' \in \seq{}$.
By Observation~\ref{leaderofbatch0},
	there are $i$ and $j$ such that $w \in \B{i}$ and $w' \in \B{j}$.
By line~\ref{C} of $\f$,
	$w$ and $w'$ are active at time $t_i$ and $t_j$ respectively.
Since $w'$ starts after $w$ completes,
	$t_i < t_j$ and so $i < j$.
Thus, by Observation~\ref{leaderofbatch1},
	$w$ is before $w'$ in $\seq{}$.
\end{proof}

\begin{lemma}\label{last3}
If an operation $w$ writes to $Val[-]$ at time $t$ and $w \in \B{i}$ for some $i \ge 1$,
	then $t_i \le t$.
\end{lemma}
\begin{proof}
Assume,
	for contradiction,
	an operation $w$ writes to $Val[-]$ at time $t$,
	$w \in \B{i}$ for some $i \ge 1$,
	and $t_i > t$.
By Observation~\ref{w},
	there is a $k\ge 1$ such that $w = w_k$
	and $t = t_k$.
By lines~\ref{line13}--\ref{seqi1} of $\f$ 
	(the $k$th iteration of the for loop),
	$w_k \in \wseq{k}$.
Since $t_i > t = t_k$,
	$k \le i-1$.
By lines~\ref{seqi1} and~\ref{seqi2} of $\f$,
	$\wseq{k}$ is a prefix of $\wseq{i-1}$.
Since $w_k \in \wseq{k}$,
	 $w_k \in \wseq{i-1}$.
Thus,
	by line~\ref{C} of $\f$,
	$w_k \notin \sC{i}$
	and so $w_k \notin \B{i}$.
Since $w_k = w$,
	this contradicts that $w \in \B{i}$.
\end{proof}

\begin{lemma}\label{linearization2}
If a read operation $r$ completes before a write operation $w$ starts and $r, w \in \seq{}$,
	then $r$ is before $w$ in $\seq{}$.
\end{lemma}

\begin{proof}
Assume a read operation $r$ completes before a write operation $w$ starts and $r, w \in \seq{}$.
Let $(-,ts)$ denote the value that $r$ returns.

\textbf{Case 1}: $ts = [0, \ldots, 0]$.
By lines~\ref{readinit1}--\ref{readinit2} of $\f$,
	$r$ is before all the write operations in $\seq{}$.
Thus,
	$r$ is before $w$ in $\seq{}$.

\textbf{Case 2}: $ts \ne [0, \ldots, 0]$.
By Observation~\ref{obsreadtime},
	an operation $w'$ writes $(-,ts)$ to $Val[-]$ and by Corollary~\ref{wwi}, 
	$w' \in \seq{}$.
By lines~\ref{followwrite1}--\ref{followwrite2} of $\f$,
	$r$ is after $w'$ and before any subsequent write in $\seq{}$.
So to show $r$ is before $w$ in $\seq{}$,
	it is sufficient to show that $w'$ is before $w$ in $\seq{}$.
\begin{claim}
 $w'$ is before $w$ in $\seq{}$.
\end{claim}
\begin{proof}
Since $w', w \in \seq{}$,
	by Observation~\ref{leaderofbatch0},
	there are $i$ and $j$ such that
	$w' \in \B{i}$ and	$w \in \B{j}$.
Let $t'$ be the time when $w'$ writes $(-,ts)$ to $Val[-]$.
Since $w' \in \B{i}$,
	by Lemma~\ref{last3},
	$t_i \le t'$.
Since $r$ returns $(-,ts)$,
	 $r$ reads $(-,ts)$ from $Val[-]$ and so $r$ completes after $t'\ge t_i$.
Since $w$ starts after $r$ completes,
	$w$ starts after time $t_i$.
Since $w \in \B{j} \subseteq \sC{j}$,
	by line~\ref{C} of $\f$,
	$w$ is active at time $t_j$.
Thus, $t_i < t_j$
	and so $i < j$. 
By Observation~\ref{leaderofbatch1},
	$w'$ is before $w$ in $\seq{}$.
\end{proof}
Therefore,
	in both cases 1 and 2,
	$r$ is before $w$ in $\seq{}$.
\end{proof}

\begin{lemma}\label{last1}
If a write operation $w$ writes to $Val[-]$ before a read operation $r$ starts and $w, r \in \seq{}$,
	then $w$ is before $r$ in $\seq{}$.
\end{lemma}

\begin{proof}
Assume a write operation $w$ writes to $Val[-]$ before a read operation $r$ starts and $w, r \in \seq{}$.
Let $(-,ts)$ 
	denote the value that $r$ returns.
By Lemma~\ref{last},
	$ts \ge \ts{w}$
	and so $ts \ne [0 \ldots 0]$.
By Observation~\ref{obsreadtime},
	an operation $w'$ writes $(-,ts)$\XMP{this notation is a bit weird since w' writes ts instead of ts'} to $Val[-]$
	and by Corollary~\ref{wwi},
	 $w' \in \seq{}$.
By lines~\ref{followwrite1}--\ref{followwrite2} of $\f$,
	$r$ is after $w'$ in $\seq{}$.
Since $ts \ge \ts{w}$,
	there are two cases:

\textbf{Case 1:} $ts = \ts{w}$.
By Observation~\ref{onlyone},
	$w = w'$.
So $r$ is after $w = w'$ in $\seq{}$.

\textbf{Case 2:} $ts > \ts{w}$.
Since $r$ is after $w'$ in $\seq{}$, 
	to show $w$ is before $r$ in $\seq{}$,
	it is sufficient to show that $w$ is before $w'$ in $\seq{}$.

\begin{claim}
$w$ is before $w'$ in $\seq{}$.
\end{claim}

\begin{proof}
Since $w,w' \in \seq{}$,
	by Observation~\ref{leaderofbatch0},
 	there are $i$ and $j$
 	such that $w \in \B{i}$ and $w' \in \B{j}$.

\textbf{Case (a)}: $i \ne j$.
Since $w \in \B{i}$,
	 $w' \in \B{j}$,
	 $i \ne j$,
	 and $ts > \ts{w}$,
	 by Lemma~\ref{tsorder},
	 $j > i$ (otherwise $ts < \ts{w}$). 
Thus,
	by Observation~\ref{leaderofbatch1},
	$w$ is before $w'$ in $\seq{}$.

\smallskip

\textbf{Case (b)}: $i = j$.
Then $w, w'\in\B{i}$.
\begin{cclaim}\label{wrcc1}
$w' = w_i$.
\end{cclaim}
\begin{proof}
Since $ w \in \B{i}\subseteq \sC{i}$,
	by line~\ref{C} of $\f$,
	$w$ is active at time $t_i$.
Since $r$ starts after $w$ completes,
	$r$ starts after $t_i$.
Then by Lemma~\ref{last}
	$ts \ge \ts{w_i}$.
Since $w' \in \B{i}$,
	by Observation~\ref{leaderofbatch4},
	$ts \le \ts{w_i}$.
Thus, $ts = \ts{w_i}$ and by Observation~\ref{onlyone},
	$w' = w_i$.
\end{proof}
Since $ts \ne \ts{w}$,
	$w' \ne w$.
Thus, $w \ne w_i$.
Since $w \in \B{i}$, 
	by line~\ref{seqi1} of $\f$,
	$w$ is before $w_i=w'$ in $\seq{}$.

Thus,
	 in both cases (a) and (b),
	  $w$ is before $w'$ in $\seq{}$.
\end{proof}	
Therefore, in both cases 1 and 2, 
	$w$ is before $r$ in $\seq{}$.
\end{proof}

Note that a completed write operation $w$ writes to $Val[-]$ before it completes.
Thus, 
	Lemma~\ref{last1} immediately implies the following: 
\begin{corollary}\label{linearization3}
If a write operation $w$ completes before a read operation $r$ starts and $w, r \in \seq{}$,
	then $w$ is before $r$ in $\seq{}$.
\end{corollary}

\begin{lemma}\label{linearization4}
If a read operation $r$ completes before a read operation $r'$ starts and $r,r' \in \seq{}$,
	then $r$ is before $r'$ in $\seq{}$.
\end{lemma}

\begin{proof}
Assume a read operation $r$ completes before a read operation $r'$ starts and $r,r' \in \seq{}$.
Let $(-,ts)$ denote the value that $r$ returns
and $(-,ts')$ denote the value that $r'$ returns.

\textbf{Case A:} $ts' = ts$.
By lines~\ref{R}--\ref{S} of $\f$,
	$r$ and $r'$ are in the same sequence $\SR$ that is ordered by their start time.
Since $r'$ starts after $r$ completes,
	$r$ is before $r'$ in $\SR$. 
By lines~\ref{readinit2} and \ref{followwrite2},
	$r$ is before $r'$ in $\seq{}$.

\textbf{Case B:} $ts' \ne ts$.

\textbf{Subcase B.1}: $ts = [0, \ldots, 0]$.
By lines~\ref{readinit1}--\ref{readinit2} of $\f$,
	$r$ is before all the write operations in $\seq{}$.
Since $ts' \ne ts$,
	$ts' \ne [0, \ldots, 0]$.
By Observation~\ref{obsreadtime},
	 an operation $w'$ writes $(-,ts')$ to $Val[-]$
	 and by Corollary~\ref{wwi}, 
	$w' \in \seq{}$.
By lines~\ref{followwrite1}--\ref{followwrite2} of $\f$,
	$r'$ is after $w'$ in $\seq{}$.
Since $r$ is before $w'$ in $\seq{}$,
	$r$ is before $r'$ in $\seq{}$.

\textbf{Subcase B.2}:
	$ts > [0, \ldots, 0]$.
By Observation~\ref{obsreadtime},
	 an operation $w$ writes $(-,ts)$ to $Val[-]$
	and by Corollary~\ref{wwi}, 
	$w \in \seq{}$.
Since $r$ reads $(-,ts)$ from $Val[-]$
	and $r'$ starts after $r$ completes,
	$r'$ starts after $w$ writes $(-,ts)$ to $Val[-]$.
By Lemma~\ref{last},
	$ts' \ge ts$.
So $ts' \ge ts > [0, \ldots, 0]$.
By Observation~\ref{obsreadtime},
		an operation $w'$ writes $(-,ts')$ to $Val[-]$ and by Corollary~\ref{wwi}, 
	$w' \in \seq{}$.
Since $ts' \ne ts$, $w' \ne w$. 
By lines~\ref{followwrite1}--\ref{followwrite2} of $\f$,
	 $r'$ is after $w'$ before any subsequent write in $\seq{}$
	 and $r$ is after $w$ before any subsequent write in $\seq{}$. 
Since $r'$ starts after $w$ writes to $Val[-]$,
	by Lemma~\ref{last1},
	$w$ is before $r'$ in $\seq{}$.
Thus,
	$w$ is before $w'$ in $\seq{}$
	and so $r$ is before $w'$ in $\seq{}$.
Since $r'$ is after $w'$ in $\seq{}$,
	$r$ is before $r'$ in $\seq{}$.

Therefore,
	in both cases A and B,
	$r$ is before $r'$ in $\seq{}$.	
\end{proof}

From Lemma~\ref{linearization1}, 
	Lemma~\ref{linearization2},
	Corollary~\ref{linearization3},
	and Lemma~\ref{linearization4},
	we have the following:
\begin{corollary}\label{linearizationf}
For every two operations $o_1$ and $o_2$, 
	if $o_1$ completes before $o_2$ starts and $o_1,o_2 \in \seq{}$,
	then $o_1$ is before $o_2$ in $\seq{}$.
\end{corollary}

\begin{lemma}\label{l1}
$\seq{}$ contains all completed operations of $H$ and possibly some non-completed ones. 
\end{lemma}
\begin{proof}
By Corollary~\ref{wwi},
	$\seq{}$ contains all the completed write operations in $H$ and possibly some non-completed ones.
Let $r$ be any completed read operation in $H$.
Since $r$ is completed,
	 $r$ returns some value $(-,ts)$.
If $ts = [0,...0]$,
	by line~\ref{readinit2} of $\f$,
	$r$ is in $\seq{}$.
If $ts \ne [0,...0]$,
	by Observation~\ref{obsreadtime} and Corollary~\ref{wwi},
	there is a write operation $w$ in $\seq{}$ that writes $(v,ts)$.
So by lines~\ref{followwrite1} and \ref{followwrite2} of $\f$,
	$r$ is in $\seq{}$.
In both cases, $\seq{}$ contains the completed read operation $r$.
Thus, $\seq{}$ contains all completed operations of $H$ and possibly some non-completed ones.
\end{proof}

\begin{observation}\label{reg}
For any read operation $r$ in $\seq{}$,
		if no write precedes $r$ in $\seq{}$, then $r$ returns the initial value of the register;
		Otherwise, $r$ returns the value written by the last write that occurs before $r$ in $\seq{}$
\end{observation}

\begin{lemma}\label{fsatl2}\XMP{linearization -> linearization function}
$f$ is a linearization function of $\seth$.
\end{lemma}


\begin{proof}
Recall that:
	(1)~$\seq{}$ is the output of $\f$ ``executed'' on an arbitrary history
	$H \in \seth$ of $\Awsl$ (the implementation on MRMW registers), and
	(2)~$\f$ defines the linearization function $f$; in other words, $f(H) = \seq{}$.
Furthermore,
	$\seq{}$ satisfies properties~\ref{p1}, \ref{p2}, and \ref{p3} of Definition~\ref{defl},
	by Observation~\ref{l1},
	 Corollary~\ref{linearizationf},
	and Observation~\ref{reg},
	respectively. 
Thus, $f$ is a linearization function of $\seth$.
\end{proof}



To prove that $f$ is a $\str$ function for the set of histories $\seth$ of $\Awsl$,
	it now suffices to show that $f$ also satisfies property~(P) of Definition~\ref{defwsl}, namely:

\begin{restatable}{lemma}{propp}\label{fsatp}
For any $G, H \in \seth$, 
	if $G$ is a prefix of $H$, 
	then the sequence of write operations in $f(G)$\XMP{f(G) here is SEQ} 
	is a prefix of the sequence of write operations in $f(H)$.
\end{restatable}

Intuitively, the above lemma holds, because $\f$ scans the input history $H$ (of Algorithm 1) \emph{by increasing time},
	and it linearizes the write operations as follows:
	when some write to $Val[-]$ occurs, say at time $t_i$,
	\emph{without looking ``ahead'' at what occurs in $H$ after time $t_i$},
	$\f$ linearizes a (possibly empty) batch of write operations
	that started before time $t_i$, 
	by \emph{appending} them to the sequence of write operations that it previously linearized.
The proof of Lemma~\ref{fsatp} is straightforward and relegated to Appendix~\ref{app1}.

By Corollary~\ref{fsatl2} and Lemma~\ref{fsatp},
	the function $f$ defined by $\f$ is a $\str$ function for the set of histories $\seth$
	of $\Awsl$.
Therefore:
}

\begin{restatable}{theorem}{algowsl}\label{aiswsl}
$\Awsl$ is a $\sly$ implementation of a MWMR register from SWMR registers.
\end{restatable}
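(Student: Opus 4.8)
The plan is to exhibit the function $f$ defined by $\f$ as an explicit $\str$ function for the set of histories $\seth$ of $\Awsl$, following Definition~\ref{defwsl}: I must verify (L) that $f$ is a linearization function, and (P) that prefixes of histories map to (write-)prefixes of linearizations. I would first set up the supporting notation and elementary observations: each write operation $w$ carries a unique timestamp $\ts{w}$ (distinctness because the $k$th component of $p_k$'s timestamp strictly increases with each of $p_k$'s writes); the variable $\nts$ of a process is non-increasing over the course of a single write, since it starts at $[\infty,\dots,\infty]$ and each component is set once to a finite value; and lexicographic order $\le$ is a total order on timestamps. I would also record the read-side facts: a read returning $(v,ts)\ne(0,[0\dots0])$ means some write actually wrote $(v,ts)$ to $Val[-]$, and a read that starts after $w$ writes to $Val[-]$ returns a timestamp $\ge \ts{w}$ (monotonicity of each $Val[i]$ plus the lexicographic-max rule).

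The technical heart is the ordering of the on-line–linearized write batches $\B{i}$. Two lemmas do the work. The first (\emph{Lemma~\ref{later}}) says: if $w\in\sC{i}\setminus\B{i}$ then $\ts{w}>\ts{w_i}$ — i.e. if $\f$ declines to linearize an active write by time $t_i$, its eventual timestamp really is larger; the proof splits on whether $\pts{w}{i}$ already contains $\infty$, and in the $\infty$ case one argues componentwise that the still-unread suffix entries can only have grown past $\ts{w_i}$'s entries. The second (\emph{Lemma~\ref{tsorder}}) bootstraps this to: $w\in\B{i}$, $w'\in\B{j}$, $j>i$ implies $\ts{w'}>\ts{w}$, by showing $\ts{w'}>\ts{w_i}\ge\ts{w}$ (the first inequality via Lemma~\ref{later} if $w'\in\sC{i}$, and via a componentwise monotonicity argument if $w'$ is not even active at $t_i$). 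With these in hand, respecting real-time precedence among operations in $\seq{}$ (Corollary~\ref{linearizationf}) is a case analysis over write/write, read/write, write/read, read/read pairs: write/write uses that $w'$ active later forces a later batch (Observation~\ref{leaderofbatch1}); write/read and read/write use Lemma~\ref{tsorder} together with the fact that a read's returned timestamp pins down which batch its ``source'' write lies in; read/read reduces to the previous cases via the source writes. Property~\ref{p3} of Definition~\ref{defl} (a read returns the value of the last preceding write in $\seq{}$) is essentially built into line~\ref{followwrite2} of $\f$, which inserts the reads of $(v,ts)$ immediately after the write of $(v,ts)$ and before the next write. Completeness (Property~\ref{p1}) follows because every write to $Val[-]$ lands in exactly one $\B{i}$ and every completed read's returned value is either $[0\dots0]$ (handled by the prepend rule) or written by some $w\in\seq{}$.

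Property~(P) is comparatively easy and is the payoff of the on-line design: since $\f$ scans $H$ by increasing time and, at each write to $Val[-]$ at time $t_i$, only \emph{appends} a batch $\B{i}$ of operations that all started before $t_i$ — never reordering or removing anything already placed — running $\f$ on a prefix $G$ of $H$ produces exactly a prefix of the write-sequence produced on $H$. Formally one shows $\wseq{i}$ computed on $G$ equals $\wseq{i}$ computed on $H$ for every $i$ up to the number of $Val[-]$-writes in $G$, because the values of $t_i$, $w_i$, $\sC{i}$, $\pts{w}{i}$, and $\B{i}$ all depend only on $H$ restricted to times $\le t_i$. The one subtlety worth care is the $m=\infty$ case, where $\wseq{}$ is a limit: one must check the limit is well-defined, i.e. the $\wseq{i}$ form a chain under the prefix relation and each write appears eventually, which again follows from ``append only'' plus the fact that every write to $Val[-]$ is some $w_i$ and is linearized by iteration $i$.

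I expect the main obstacle to be Lemma~\ref{later} (and its use in Lemma~\ref{tsorder}) — the claim that a partially-formed timestamp already commits the write to a position relative to $w_i$. The delicate point is the mixed case where $\pts{w}{i}$ has a finite prefix and an $\infty$-suffix: one needs that the finite prefix already dominates $\ts{w_i}$'s prefix (from $\pts{w}{i}>\ts{w_i}$, since $w\notin\B{i}$) \emph{and} that each suffix component, read only after $t_i$ from a register $Val[l]$ that $w_i$ read before $t_i$, is $\ge\ts{w_i}[l]$ by monotonicity of $Val[l]$ — so the full eventual $\ts{w}$ is $\ge\ts{w_i}$ lexicographically, hence $>$ by timestamp distinctness. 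Everything else is bookkeeping over the structure of $\f$.
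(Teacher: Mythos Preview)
Your plan is correct and mirrors the paper's own proof essentially step for step: the same two key lemmas (what you call Lemma~\ref{later} and Lemma~\ref{tsorder}), the same four-case real-time-precedence analysis leading to Corollary~\ref{linearizationf}, the same handling of reads via line~\ref{followwrite2} of $\f$, and the same inductive argument on $\wseq{i}$ for property~(P). Your identification of Lemma~\ref{later}'s mixed finite-prefix/$\infty$-suffix case as the crux, and your sketch of why the suffix entries dominate (using that $w_i$ has already written its timestamp to its own $Val[-]$ at time $t_i$, so later reads see the incremented component), match the paper's argument exactly.
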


Helmi \emph{et. al} show that there is no strongly linearizable implementation of MWMR registers from SWMR registers
	(Corollary 3.7 in \cite{sl12}).
Thus:
\begin{corollary}
$\Awsl$ is not a strongly linearizable implementation of a MWMR register.
\end{corollary}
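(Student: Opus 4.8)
The plan is to derive the corollary directly from the impossibility result of Helmi \emph{et al.} (Corollary 3.7 in \cite{sl12}) together with the structural fact that $\Awsl$ is an implementation of a MWMR register whose \emph{base objects are SWMR registers}. The argument is a one-line contradiction, so the work lies entirely in checking that the hypotheses of the cited impossibility result are met by $\Awsl$, rather than in any new construction.

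First I would recall the two ingredients. On one side, Theorem~\ref{aiswsl} establishes that $\Awsl$ \emph{is} a valid implementation of a MWMR register $\REG$, and in particular it is an implementation \emph{from SWMR registers}: inspecting the algorithm, its only shared base objects are the registers $Val[1], \ldots, Val[n]$, each of which is declared to be a SWMR register (written only by process $p_k$ via line~\ref{write}, read by all). Thus $\Awsl$ exactly fits the setting of the Helmi \emph{et al.} result, which asserts that no strongly linearizable implementation of a MWMR register from SWMR registers exists.

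Then I would argue by contradiction. Suppose $\Awsl$ \emph{were} strongly linearizable. Since $\Awsl$ implements a MWMR register using only SWMR registers as base objects, it would constitute a strongly linearizable implementation of a MWMR register from SWMR registers. This directly contradicts Corollary~3.7 of \cite{sl12}. Hence $\Awsl$ is not a strongly linearizable implementation of a MWMR register, as claimed.

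The only point that requires care — and it is the ``main obstacle,'' though a mild one — is ensuring that the impossibility theorem of \cite{sl12} is being applied within its stated scope: namely that $\Awsl$'s implemented object is indeed a MWMR register (not some weaker object) and that all base objects are genuinely SWMR registers, so that the cited result applies verbatim. Both conditions hold by the specification of $\Awsl$ and by Theorem~\ref{aiswsl}, so no further reasoning is needed. Notably, this corollary also serves to separate the two notions: $\Awsl$ is $\sly$ by Theorem~\ref{aiswsl} yet, by this corollary, \emph{not} strongly linearizable, witnessing that write strong-linearizability is strictly weaker than strong linearizability.
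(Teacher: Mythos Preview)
Your proposal is correct and matches the paper's own argument essentially verbatim: the paper simply invokes Corollary~3.7 of \cite{sl12} (no strongly linearizable MWMR register from SWMR registers) and concludes immediately, which is exactly the contradiction you spell out. Your additional remarks verifying that $\Awsl$'s base objects are indeed SWMR registers and noting the resulting separation between write strong-linearizability and strong linearizability are appropriate and mirror the paper's surrounding discussion.
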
	

This implies that strong linearizability is strictly stronger than write strong-linearizability.

\section{Achieving write strong-linearizability is harder than achieving linearizability}\label{sectionlamport}	

As we will see, every linearizable implementation of an SWMR register
	is necessarily write strongly-linearizable.
In contrast, here we prove that there is 
	a linearizable implementation of an MWMR register from SWMR registers (namely, Algorithm~\ref{Al})
	that is \emph{not} write strongly-linearizable.

	 
\begin{algorithm}[!b]
\caption{Implementing a linearizable MWMR register $\REG$ from SWMR registers}
\begin{flushleft}
\textsc{Shared Object:}

For $i=1,2,...,n$:

$Val[i]$: SWMR register that contains a tuple $(v,ts)$ 
	where $v$ is a value and $ts$ is a tuple of the form~$\langle sq, pid\rangle$;

\hspace*{1.2cm}initialized to $(0,\langle 0,i \rangle)$.

\vspace{0.2cm}

\textsc{Local Object:}

$\nsq$: a register initialized to $0$.

$\nts$: a register initialized to $\langle 0,k \rangle$ for process $p_k$.
\end{flushleft}
\vspace{0.05cm}

\begin{algorithmic}[1]

\STATEx When writer $p_k$ writes $v$ to $\REG$ \hfill//$1 \le k \le n$
\FOR{$i = 1$ to $n$} \label{wcollectts1}
	\STATE $(v_i,ts_i) \gets Val[i]$ \label{wcollectts2}
\ENDFOR\label{wcollectts3}
\STATE $\nsq \gets \max \{ts_1.sq,...,ts_n.sq\} +1$ \label{nsq}
\STATE $\nts \gets \langle new\_sq, k \rangle$ \label{nts}
\STATE $Val[k] \gets (v,new\_ts)$\label{writeAl}
\STATE \textbf{return} done
\newline 

\STATEx When a process reads from $\REG$ 
\FOR{$i = 1$ to $n$}\label{collectts1}
	\STATE $(v_i,ts_i) \gets Val[i]$\label{collectts2}
\ENDFOR\label{collectts3}
\STATE let $j$ be such that $ts_j = \max \{ts_1,...,ts_n\}$ \label{tsmax} \hfill// lexicographic max
\STATE \textbf{return} $(v_j,ts_j)$ \label{returnmax}
\end{algorithmic}\label{Al}
\end{algorithm}

Algorithm~\ref{Al} implements a MWMR register $\REG$ from SWMR registers $Val[i]$ for $i=1,2,...,n$.
Each value written to $\REG$ is timestamped with tuple $\langle sq,pid\rangle$ 
	where $sq$ is a sequence number and $pid$ is the id of the process that writes the value;
	intuitively, these are Lamport clocks that respect the causal order of write events.
Each register $Val[k]$ contains the latest value that $p_k$ wrote to $\REG$
	with its corresponding timestamp. 
To write a value $v$ into $\REG$,
	$p_k$ first reads every register $Val[-]$ (lines~\ref{wcollectts1}--\ref{wcollectts3});
	then $p_k$ forms a new sequence number $\nsq$ by 
	incrementing the maximum sequence number 
	that it read from $Val[-]$ (line~\ref{nsq});
	finally, $p_k$ writes the tuple $(v,\nts)$, where  $\nts = \langle \nsq, k\rangle$, into $Val[k]$ (lines~\ref{nts}--\ref{writeAl}).
To read $\REG$,
	a process $p$ first reads all registers $Val[-]$ (lines~\ref{collectts1}--\ref{collectts3});
	then $r$ returns the value $v$ with the greatest timestamp in lexicographic order
	(lines~\ref{tsmax}--\ref{returnmax}).

Intuitively, Algorithm~\ref{Al} implements a linearizable MWMR register:
	the write operations can be linearized by their timestamps (which form a total order);
	the read operations are linearized according to the value that they read.
The proof of the following theorem is given in Appendix~\ref{lamport1}.

\begin{restatable}{theorem}{lampol}\label{lamportisl}
$\Al$ is a linearizable implementation of a MWMR register from SWMR registers.
\end{restatable}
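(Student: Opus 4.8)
$\Al$ is a linearizable implementation of a MWMR register from SWMR registers.

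The plan is to produce a linearization function $f$ for the set of histories $\seth$ of $\Al$ and check that it satisfies the three conditions of Definition~\ref{defl}. Since we only need plain linearizability here (not the prefix property that $\f$ is designed to give for $\Awsl$), $f$ may be defined entirely ``off-line''. Given $H\in\seth$, I would let the write operations of $f(H)$ be all operations of $H$ that execute line~\ref{writeAl} of $\Al$ (with a matching response appended to any such operation still pending in $H$), ordered by the lexicographic order on the pairs $\langle sq,pid\rangle$ they write into $Val[-]$; and then insert the read operations using exactly the rule of $\f$ (lines~\ref{fmapread}--\ref{endfmapread}), reading ``$ts.sq=0$'' in place of ``$ts=[0,\dots,0]$'': a read returning a value with an initial timestamp is prepended, while every read returning $(v,ts)$ with $ts.sq\ge 1$ is placed immediately after the unique write of $(v,ts)$ and before the next write in the sequence, with reads returning the same value ordered by start time.

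The first step is to establish two structural facts. \emph{Monotonicity}: for each $i$, the timestamp stored in $Val[i]$ never decreases with time and strictly increases each time $p_i$ writes it, since $p_i$ is the sole writer of $Val[i]$ and the sequence number it computes in line~\ref{nsq} is strictly larger than the one it just read from $Val[i]$ in line~\ref{wcollectts2}. \emph{Uniqueness}: distinct write operations write distinct timestamps --- two writes by different processes differ in the $pid$ component, and two writes by the same process get strictly increasing sequence numbers by monotonicity. Consequently timestamps totally order the write operations, $f(H)$ is well defined even when $H$ is infinite (each pair $\langle sq,pid\rangle$ has only finitely many lexicographic predecessors that are actually written, so the writes of $f(H)$ form a sequence indexed by an initial segment of the naturals), and for each value $(v,ts)$ read with $ts.sq\ge 1$ there is exactly one write of $(v,ts)$ in $H$, which appears in $f(H)$.

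Then I would verify the three properties of Definition~\ref{defl}. Property~\ref{p1} is immediate: every completed write executes line~\ref{writeAl} and hence appears, and every completed read is placed by the construction (the write it is attached to has executed line~\ref{writeAl} and so lies in $f(H)$). Property~\ref{p3} follows from the construction together with uniqueness and monotonicity: a read returning $(v,ts)$ with $ts.sq\ge 1$ sits immediately after the unique write of $(v,ts)$ and before the next write, so the last write before it writes $v$; and a read returning an initial timestamp necessarily returns the initial value $0$ of $\REG$ (since $Val[n]$ always holds a timestamp $\ge\langle 0,n\rangle$, a read can return an initial timestamp only when it observes no value written by an actual write), and it is placed before all writes. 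For Property~\ref{p2}, I would prove three lemmas from monotonicity: (i) if a write $w$ executes line~\ref{writeAl} before a write $w'$ starts then $\ts{w}<\ts{w'}$ (because $w'$ subsequently reads the register $w$ wrote and bumps its sequence number past $\ts{w}.sq$ in line~\ref{nsq}); (ii) if a write $w$ executes line~\ref{writeAl} before a read $r$ starts, then $r$ returns a timestamp $\ge\ts{w}$; (iii) if a read $r$ precedes a read $r'$, then the timestamp $r'$ returns is $\ge$ the one $r$ returns. Given these, the four precedence cases (write before write, write before read, read before write, read before read) are handled by the same case analysis already used for $\Awsl$, with the lexicographic order on the pairs $\langle sq,pid\rangle$ playing the role of the lexicographic order on vector timestamps; in particular, the placement of a read relative to ``subsequent'' writes of larger timestamp is exactly as in $\f$.

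No step is a genuine obstacle; the argument is essentially routine, which is presumably why it is deferred to an appendix. The only points requiring a little care are the edge case of reads returning an initial timestamp (in Property~\ref{p3} and in the read-before-read case of Property~\ref{p2}) and, again in the read-before-read case, tracking where a read's inserted block sits relative to writes whose timestamp exceeds the value the read returned --- the same bookkeeping that appears in the correctness proof of $\f$.
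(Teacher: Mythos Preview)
Your proposal is correct and follows essentially the same approach as the paper's proof in Appendix~\ref{lamport1}: the paper also orders the writes by their $\langle sq,pid\rangle$ timestamps, inserts each completed read after the unique write whose value it returned (or before all writes if $ts.sq=0$), and then verifies properties~\ref{p1}--\ref{p3} of Definition~\ref{defl} via the same monotonicity and uniqueness facts (your lemmas~(i)--(iii) correspond to the paper's Lemma~\ref{lampo} and the timestamp-comparison arguments inside Lemmas~\ref{wr} and~\ref{rr}). The only cosmetic difference is that the paper packages the construction as an abstract Definition~\ref{Lin} rather than by reference to the read-insertion code of~$\f$, but the content is identical.
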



The implemented register, however, is not $\sly$: roughly speaking the information provided by Lamport clocks
	is not sufficient to linearize the write operations on-line.

\begin{theorem}
$\Al$ is not a $\sly$ implementation of a MWMR register.
\end{theorem}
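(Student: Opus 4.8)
The plan is to assume, for contradiction, that the set of histories $\seth$ of $\Al$ admits a $\str$ function $f$, and then to exhibit one history $G\in\seth$ that $f$ cannot linearize consistently with two of its extensions. The core idea is that with a single Lamport timestamp $\langle sq,pid\rangle$ per write, at the moment a write $w$ completes one cannot yet tell whether a concurrent still-pending write $w'$ will end up with a timestamp smaller or larger than $w$'s, since that depends on whether $w'$ reads $Val[-]$ before or after $w$'s write in line~\ref{writeAl}.

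First I would build a history $G$ consisting of exactly two concurrent, completed writes: a write $w'$ of value $b$ by $p_1$ and a write $w$ of value $a$ by $p_2$, with overlapping intervals (say $w$ is invoked after $w'$ and returns before $w'$ does). The point is that this one high-level history is produced by two different internal executions of $\Al$. In execution $E_1$, $p_2$ performs its whole scan and its write to $Val[2]$ while $p_1$ has read only $Val[1]$; then $p_1$ reads $Val[2]$ \emph{after} $p_2$'s write, so $w$ gets $new\_sq=1$ and timestamp $\langle 1,2\rangle$ while $w'$ gets $new\_sq=2$ and timestamp $\langle 2,1\rangle$, and $p_1$ finally writes $Val[1]$. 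In execution $E_2$, $p_1$ reads \emph{all} of $Val[-]$ before $p_2$ touches $Val[2]$ (while $p_2$ still reads $Val[1]$ before $p_1$'s final write to it), so $w'$ gets timestamp $\langle 1,1\rangle$ and $w$ gets timestamp $\langle 1,2\rangle$. I would write out explicit step interleavings showing that $E_1$ and $E_2$ are legal executions of $\Al$ and that both have the \emph{same} high-level history $G$ (same invocations and responses, at the same positions).

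Next, from each execution I extend by one read to pin down the forced order of $w$ and $w'$. Extending $E_1$ by a read $r_1$ issued after both writes complete yields $H_1\in\seth$ in which $r_1$ returns $b$ (the value of $w'$, since $\langle 2,1\rangle$ is the lexicographic max in line~\ref{tsmax}); because $w$ and $w'$ both precede $r_1$ in real time, property~\ref{p3} of Definition~\ref{defl} forces $w$ before $w'$ in $f(H_1)$. Symmetrically, extending $E_2$ by a read $r_2$ yields $H_2\in\seth$ in which $r_2$ returns $a$ (since $\langle 1,2\rangle > \langle 1,1\rangle$), forcing $w'$ before $w$ in $f(H_2)$. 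Both $H_1$ and $H_2$ have $G$ as a prefix. Finally I close the argument: $w$ and $w'$ are completed in $G$, so by property~\ref{p1} of Definition~\ref{defl} (part (L) of Definition~\ref{defwsl}) they both appear in $f(G)$ in some fixed order; by property (P) of Definition~\ref{defwsl} the write subsequence of $f(G)$ is a prefix of the write subsequences of both $f(H_1)$ and $f(H_2)$; since $w,w'$ are the only writes in $G,H_1,H_2$, their order in $f(G)$ must equal their order in \emph{both} $f(H_1)$ and $f(H_2)$ --- which are opposite, a contradiction. Hence $\seth$ has no $\str$ function, so $\Al$ is not $\sly$.

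The main obstacle is the construction in the second paragraph: getting the two internal executions right. One must arrange the step interleaving so that (i) $w$ and $w'$ are genuinely concurrent in the common history $G$ --- this is essential, since if one preceded the other, real-time order alone would already fix their linearization order and the two extensions could not disagree; (ii) in $E_1$, $w'$'s scan of $Val[-]$ straddles $w$'s write to $Val[2]$ while $w$'s scan finishes before $w'$'s final write to $Val[1]$; and (iii) in $E_2$, $w'$'s entire scan precedes $w$'s write to $Val[2]$, yet $w$ still reads $Val[1]$ before $w'$ writes it. Verifying that these interleavings are feasible, that they produce the claimed timestamps, and that both yield exactly the same high-level $G$ is the technical heart; the rest is bookkeeping with the prefix property.
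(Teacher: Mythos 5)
There is a genuine gap, and it lies exactly at the step you yourself flag as the technical heart: the claim that the two internal executions $E_1$ and $E_2$ ``have the same high-level history $G$'' and can therefore be fed to $f$ as a single common prefix. Under the definition used in this paper (and in Golab \emph{et al.}), the set $\seth$ on which a ($\str$) function is defined consists of histories \emph{of the implementation}, which record the low-level steps on the base registers $Val[-]$, not just the invocations and responses of the implemented operations; indeed, the paper's own strong-linearization function for Algorithm~2 (Algorithm~\ref{f}) is defined by inspecting the times of the writes to $Val[-]$ and even the values of the local variable $\nts$ at those times. Your $E_1$ and $E_2$ differ in the interleaving of the scans and in the tuples written to $Val[-]$ (in $E_1$ the write $w'$ writes $(b,\langle 2,1\rangle)$, in $E_2$ it writes $(b,\langle 1,1\rangle)$), so they are two \emph{distinct} histories in $\seth$; your $H_1$ extends $E_1$ and $H_2$ extends $E_2$, and there is no single $G\in\seth$ that is a prefix of both. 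Property (P) then imposes no constraint linking $f(H_1)$ and $f(H_2)$, and the contradiction evaporates. A quick sanity check that the interpretation you need cannot be the intended one: your argument applies verbatim to Algorithm~2 (two overlapping completed writes whose vector timestamps end up in either order depending on the interleaving of the scans), so it would ``prove'' that Algorithm~2 is not $\sly$, contradicting Theorem~\ref{aiswsl}.

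Moreover, once both writes are \emph{completed} in $G$, their timestamps are already recorded in $Val[-]$ and hence determined by $G$ itself; a linearization function may simply order them by timestamp (this is exactly how the paper proves Algorithm~3 linearizable), and no extension of that particular $G$ can force the opposite order. So no history in which both writes are complete can witness a violation of (P). The ambiguity has to be placed in the \emph{future} of a single history, which is what the paper's proof does: in its $G$ the write $\W{2}$ by $p_2$ completes with timestamp $\langle 1,2\rangle$ while $\W{1}$ by $p_1$ is still pending, having read $Val[1]$ and $Val[2]$ but not yet $Val[3]$. Then, whichever order $f(G)$ commits to (or if it omits $\W{1}$), there is an extension of that same $G$ that forces the opposite order: either $p_1$ finishes immediately, obtaining timestamp $\langle 1,1\rangle$ so a later read returns $\W{2}$'s value and forces $\W{1}$ before $\W{2}$; or a third writer $p_3$ first writes with timestamp $\langle 2,3\rangle$, inflating $\W{1}$'s eventual timestamp to $\langle 3,1\rangle$ so a later read returns $\W{1}$'s value and forces $\W{2}$ before $\W{1}$. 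Your read-based forcing step and the final prefix bookkeeping are fine; to repair the proof you need to replace your two-execution construction with a single history containing a pending write, along the lines above.
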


\begin{proof}
Consider the set of histories $\seth$ of $\Al$ executed by $n=3$ processes,
	 $p_1$, $p_2$ and $p_3$.
To prove that $\Al$ is not a $\sly$ implementation, 
		we show that $\seth$ is not \linebreak $\sly$. 
More precisely,
	we prove that for any function $f$ that maps histories in $\seth$ to sequential histories,
	there exist histories $G,H\in \seth$ such that $G$ is a prefix of $H$ but $f(G)$ is not a prefix of~$f(H)$.
	

\begin{figure}[!b]
\centering
\includegraphics[width=1\textwidth]{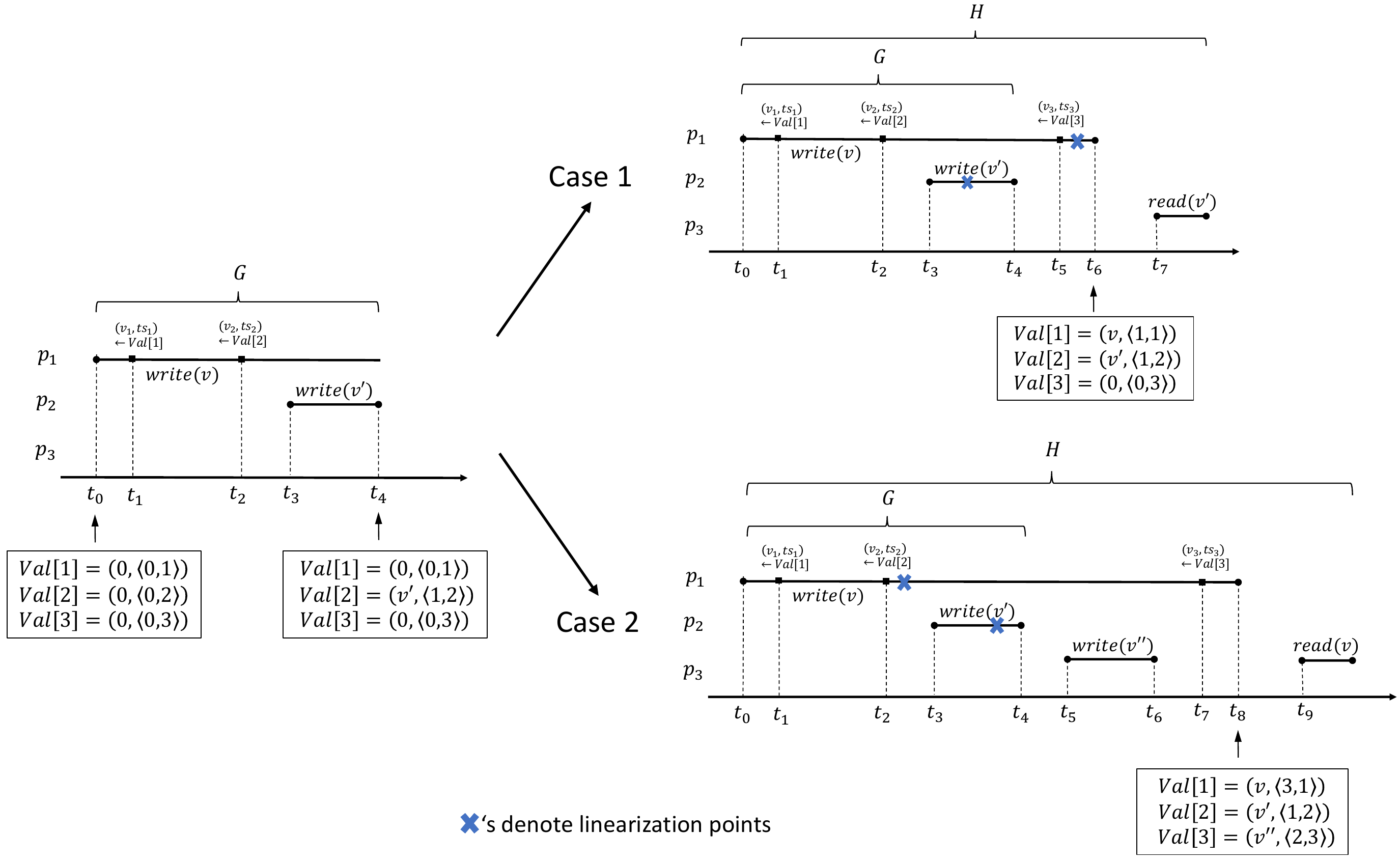}
  \caption{Histories $G$ and $H$ for Cases 1 and 2}
  \label{Hg}
\end{figure}

Let $f$ be a function that maps histories in $\seth$ to sequential histories.
Consider the following history $G \in \seth$ (shown at the left of Figure~\ref{Hg}): \XMP{caption of fig??}
\begin{itemize}
\item Initially, $\REG$ contains $0$, and each register $Val[i]$ contains $(0,\langle 0,i\rangle)$.
\item At time $t_0$,
	process $p_1$ starts an operation $\W{1}$ to write a value $v \ne 0$ to $\REG$.
By lines \ref{wcollectts1}--\ref{wcollectts2},
	$p_1$ first reads $(0,\langle 0,1\rangle)$ from $Val[1]$ into $(-,ts_1)$, 
	say at time $t_1$,
	and then reads $(0,\langle 0,2\rangle)$ from $Val[2]$ into $(-,ts_2)$, 
	say at time $t_2$.
Thus,
	$p_1$ now has $ts_1 = \langle 0,1\rangle$ 
	and $ts_2 = \langle 0,2\rangle$,
	i.e., $ts_1.sq = ts_2.sq = 0$.
\item At time $t_3 > t_2$,
	process $p_2$ starts an operation $\W{2}$ to write a value $v'$ to $\REG$ such that $v' \ne v$ and $v' \ne 0$.
By lines \ref{wcollectts1}--\ref{wcollectts2},
		$p_2$ reads $(0, \langle 0,1\rangle)$ from $Val[1]$ into $(-,ts_1)$,
		 $(0, \langle 0,2\rangle)$ from $Val[2]$ into $(-,ts_2)$,
		 and $(0, \langle 0,3\rangle)$ from $Val[3]$ into $(-,ts_3)$.
Thus,
	$p_2$ now has $ts_1 = \langle 0,1\rangle$, 
	$ts_2 = \langle 0,2\rangle$,
	and $ts_2 = \langle 0,3\rangle$,
	i.e., $ts_1.sq = ts_2.sq = ts_3.sq = 0$.
By lines~\ref{nts}--\ref{writeAl},
	 $p_2$ then writes $(v',\langle 1,2\rangle)$ to $Val[2]$
	 and completes $\W{2}$ at time $t_4$.	
Then at time $t_4$,
	$Val[1]$ contains $(0,\langle 0,1\rangle)$, 
	 $Val[2]$ contains $(v',\langle 1,2\rangle)$, and
	$Val[3]$ contains $(0,\langle 0,3\rangle)$.
\end{itemize}


Since the write operation $\W{2}$ completes in $G \in \seth$ and $f$ is a linearization function for $\seth$,
	by property~\ref{p1} of Definition~\ref{defl},
	$\W{2}$ is in $f(G)$.
Since the write operation $\W{1}$ is concurrent with $\W{2}$,
there are two cases: (1)~$\W{1}$ is not before $\W{2}$ in $f(G)$,
	 or (2) $\W{1}$ is before $\W{2}$ in $f(G)$.

\textbf{Case 1}: $\W{1}$ is not before $\W{2}$ in $f(G)$.
Consider the following history $H \in \seth$ (shown at the top right of Figure~\ref{Hg}):
\begin{itemize}
\item $H$ is an extension of $G$, i.e., $G$ is a prefix of $H$.
\item At time $t_5> t_4$,
	$p_1$ continues the operation $\W{1}$ and reads $(0, \langle 0,3\rangle)$ from $Val[3]$ into $(-,ts_3)$ so it has 
	 $ts_3.sq = 0$.
By lines~\ref{nts}--\ref{writeAl},
	$p_1$ writes $(v,\langle 1,1\rangle)$ to $Val[1]$.
After that,
	$p_1$ completes $\W{1}$, say at time $t_6$.
At time $t_6$,
	$Val[1]$ contains $(v,\langle 1,1\rangle)$,
	$Val[2]$ contains $(v',\langle 1,2\rangle)$,
	and $Val[3]$ contains $(0,\langle 0,3\rangle)$.
\item At time $t_7> t_6$,
	$p_3$ starts a read operation $\RD{}$ to read $\REG$.
In lines~\ref{collectts1}--\ref{collectts2},
	 $p_3$ reads $ts_1 = \langle 1,1\rangle$, 
	$ts_2 = \langle 1,2\rangle$,
	and $ts_3 = \langle 0,3\rangle$.
Since $ts_2 =\langle 1,2\rangle > ts_1 =\langle 1,1\rangle > ts_3 =\langle 0,3\rangle$,
	by line~\ref{tsmax},
	$\RD{}$ returns $(v',\langle 1,2\rangle)$ -- the value written by $\W{2}$.
\end{itemize}
Since the read operation $\RD{}$ returns the value written by $\W{2}$ in $H\in \seth$, 
	and $f$ is a linearization function for~$\seth$, 
		by property~\ref{p3} of Definition~\ref{defl},
	 	$\RD{}$ is after $\W{2}$ and before any subsequent write operation in $f(H)$.
Since $\RD{}$ starts after $\W{1}$ completes,
		by property~\ref{p2} of Definition~\ref{defl},
	$\RD{}$ is after $\W{1}$ in $f(H)$.
Thus,
	$\W{1}$ is before $\W{2}$ in $f(H)$.
Since, by assumption,
	 $\W{2}$ is in $f(G)$ and $\W{1}$ is not before $\W{2}$ in $f(G)$,
	$f(G)$ is not a prefix of $f(H)$.


\textbf{Case 2}: $\W{1}$ is before $\W{2}$ in $f(G)$.
Consider the following history $H \in \seth$ (shown at the bottom right of Figure~\ref{Hg}):
\begin{itemize}
\item $H$ is an extension of $G$.
\item At time $t_5> t_4$,
	process $p_3$ starts an operation $\W{3}$ to write $v''$ to $\REG$ such $v'' \ne v$.
In lines~\ref{wcollectts1}--\ref{wcollectts2},
	$p_3$ reads $(0, \langle 0,1\rangle)$ from $Val[1]$ into $(-,ts_1)$,
	$(v', \langle 1,2\rangle)$ from $Val[2]$ into $(-,ts_2)$,
	and $(0, \langle 0,3\rangle)$ from $Val[3]$ into $(-,ts_3)$
Thus,
	$p_3$ now has $ts_1 = \langle 0,1\rangle$, 
	$ts_2 = \langle 1,2\rangle$,
	and $ts_2 = \langle 0,3\rangle$,
	i.e., $ts_1.sq = 0$, $ts_2.sq =1$, and $ts_3.sq = 0$.
By lines~\ref{nts}--\ref{writeAl},
	$p_3$ then writes $(v'',\langle 2,3\rangle)$ to $Val[3]$.
	and completes $\W{3}$ at time $t_6$.	
\item At time $t_7> t_6$,
	$p_1$ continues the operation $\W{1}$ and reads $(v'', \langle 2,3\rangle)$ from $Val[3]$ into $(-,ts_3)$ so it has 
	 $ts_3.sq = 2$.
By lines~\ref{nts}--\ref{writeAl},
	$p_1$ writes $(v,\langle 3,1\rangle)$ to $Val[1]$.
After that,
	$p_1$ completes $\W{1}$, say at time $t_8$.
At time $t_8$,
	$Val[1]$ contains $(v,\langle 3,1 \rangle)$,
	$Val[2]$ contains $(v',\langle 1,2 \rangle)$,
	and $Val[3]$ contains $(v'',\langle 2,3 \rangle)$.
\item At time $t_9 > t_8$,
	$p_3$ starts a read operation $\RD{}$ to read $\REG$.
In lines~\ref{collectts1}--\ref{collectts2},
	$p_3$ reads $(v, \langle 3,1\rangle)$ from $Val[1]$ into $(-,ts_1)$,
	$(v', \langle 1,2\rangle)$ from $Val[2]$ into $(-,ts_2)$,
	and $(v'', \langle 2,3\rangle)$ from $Val[3]$ into $(-,ts_3)$.
Since $ts_1 = \langle 3,1\rangle > ts_3 =\langle  2,3\rangle > ts_2 = \langle 1,2\rangle$,
	by line~\ref{tsmax},
	$\RD{}$ returns $(v,\langle 3,1 \rangle)$ -- the value written by $\W{1}$.
\end{itemize}

Since the read operation $\RD{}$ returns the value written by $\W{1}$ in $H \in \seth$, 
	and $f$ is a linearization function for $\seth$,
	by property~\ref{p3} of Definition~\ref{defl},
	 	$\RD{}$ is after $\W{1}$ and before any subsequent write operation in $f(H)$.
Since $\RD{}$ starts after $\W{2}$ completes,
		by property~\ref{p2} of Definition~\ref{defl},
	$\RD{}$ is after $\W{2}$ in $f(H)$.
Thus,
	$\W{2}$ is before $\W{1}$ in $f(H)$.
Since, by assumption,
	 $\W{1}$ is before $\W{2}$ in $f(G)$,
	$f(G)$ is not a prefix of $f(H)$.

Then, in both case 1 and case 2, 
	there is a history $H \in \seth$
	such that $G$ is a prefix of $H$ but $f(G)$ is not a prefix of $f(H)$.
Therefore
	the theorem holds.
\end{proof}

\section{Linearizable SWMR registers are necessarily write strongly-linearizable}\label{swsection}
	
In Appendix~\ref{proofsw}, we show that \emph{any} linearizable implementation of SWMR registers is $\sly$
	(this holds for message-passing, shared-memory, and hybrid systems).
Thus, the well-known ABD implementation of SWMR registers in message-passing systems
	is not only linearizable; it is actually write strongly-linearizable.

\rmv{
	
We now prove that \emph{any} linearizable implementation of SWMR registers is $\sly$
	(this holds for message-passing, shared-memory, and hybrid systems).
Thus, the well-known ABD implementation of SWMR registers in message-passing systems
	is not only linearizable but actually write strongly-linearizable.
(This implementation however is \emph{not} strongly linearizable~\cite{abdnotsl}.)

Consider an arbitrary implementation $\swA$ of a SWMR register.
Let $\seth$ be the set of histories of $\swA$.
Since $\swA$ implements a single-writer register, 
	the following holds:
\begin{observation}\label{swob}
In any history $H\in \seth$,
\begin{enumerate}
	\item \label{swob11}there are no concurrent write operations, and 
	\item \label{swob12}there is at most one incomplete write operation.
\end{enumerate}
\end{observation}

By part~\ref{swob11} of Observation~\ref{swob} and property~\ref{p2} of Definition~\ref{defl},
	we have the following:
\begin{observation}\label{swob2}
For any history $H\in \seth$ and any linearization function $f$ of $\seth$,
	the write operations in $f(H)$ are totally ordered by their start time in $H$.
\end{observation}

\begin{lemma}
If $\swA$ is linearizable,
	then $\swA$ is $\sly$.
\end{lemma}

\begin{proof}
Assume $\swA$ is linearizable.
By Definition~\ref{L-I},
	 there is a linearization function $f$ for $\seth$.
Consider a function $f^*$ that is modified from $f$ as follows:
for any history $H$ and its linearization $f(H)$,
	if the last operation $o$ in $f(H)$ is a write operation that is incomplete in $H$, 
	then we obtain $f^*(H)$ by removing $o$ from $f(H)$;
	otherwise, $f^*(H)$ equals $f(H)$.

\begin{claim}\label{ifwinthencor}
If a write operation $w$ is in $f^*(H)$,
	then $w$ is completed or read by some read operation in $H$.
\end{claim}

\begin{proof}
Assume, for contradiction,
	a write operation $w \in f^*(H)$ is incomplete and not read by any read operation in $H$.
Since $w \in f^*(H)$,
	$w$ is in $f(H)$ such that $w$ is not the last operation in $f(H)$ ($\star$).
Since $w$ is incomplete,
	by Observation~\ref{swob},
	$w$ is the last write operation in $H$.
By Observation~\ref{swob2},
	$w$ is the last write operation in $f(H)$.
Furthermore,
	since $w$ is not read by any read operation in $H$,
	no read operation is after $w$ in $f(H)$.
Thus, $w$ is the last operation in $f(H)$,
	which contradicts ($\star$).
\end{proof}

\begin{claim}\label{ifcorthenwin}
If a write operation $w$ is completed or read by some read operation in $H$,
	then $w$ is in $f^*(H)$,
\end{claim}

\begin{proof}
Assume a write operation $w$ is completed or read by some read operation in $H$.

\textbf{Case 1:} $w$ is completed in $H$.
Since $f$ is a linearization function for $\seth$,
	by property~\ref{p1} of Definition~\ref{defl}, 
	$w$ is in $f(H)$.
Since $f^*(H)$ removes only the incomplete operation from $f(H)$,
	$w$ is in $f^*(H)$.

\textbf{Case 2:} $w$ is read by some read operation $r$ in $H$.
Since $f$ is a linearization function for $\seth$,
	by property~\ref{p3} of Definition~\ref{defl},
	$w$ is before $r$ in $f(H)$ and so $w$ is not the last operation in $f(H)$.
Since $f^*(H)$ removes only the last operation from $f(H)$,
	$w$ is in $f^*(H)$.
	
Thus, in both cases 1 and 2,
	 $w$ is in $f^*(H)$.
\end{proof}

\begin{claim}
$f^*$ is a linearization function of $\mathcal{H}$.
\end{claim}
\begin{proof}
Consider any history $H\in \mathcal{H}$.
Since $f^*(H)$ removes from $f(H)$ only the operation that is incomplete in $H$,
	$f^*(H)$ still satisfies properties~\ref{p1} and \ref{p2} of Definition~\ref{defl}.
Since $f^*(H)$ removes only the last operation from $f(H)$,
	$f^*(H)$ still satisfies property~\ref{p3} of Definition~\ref{defl}.
Therefore, 
	$f^*$ is a linearization function of $\mathcal{H}$.
\end{proof}

\begin{claim}
$f^*$ satisfies property (P) of Definition~\ref{defwsl}. 
\end{claim}
\begin{proof}
Consider histories $G, H\in \mathcal{H}$ such that $G$ is a prefix of $H$.
Let $W_G$ and $W_H$ denote the write sequences in $f^*(G)$ and $f^*(H)$ respectively.
By Claim~\ref{ifwinthencor},
	 all the operations in $W_G$ are completed in $G$ or read by some read operations in $G$.
Since $G$ is a prefix of $H$,
	all operations in $W_G$ are also completed in $H$ or read by some read operations in $H$.
So By Claim~\ref{ifcorthenwin},
	$W_H$ contains all the operations in $W_G$.
By Observation~\ref{swob2},
	the write operations in $W_G$ and $W_H$ are totally ordered by their start time in $G$ and $H$, respectively.
Then since $G$ is a prefix of $H$ and $W_H$ contains all the operations in $W_G$ ,
	$W_G$ is a prefix of $W_H$. 
So $f$ satisfies property (P) of Definition~\ref{defwsl}.
\end{proof}
Thus,
	 $f$ is a $\str$ function for $\seth$ and $\swA$ is $\sly$.
\end{proof}
}

\begin{restatable}{theorem}{swe}\label{swequal}
Any linearizable implementation of a SWMR register is necessarily $\sly$.
\end{restatable}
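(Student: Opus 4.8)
The plan is to take an arbitrary linearizable implementation $\swA$ of a SWMR register, fix a linearization function $f$ for its set of histories $\seth$, and massage it into a $\str$ function $f^*$. The key structural fact I would exploit is that a single-writer register has no concurrent writes: in any history $H \in \seth$ the write operations are totally ordered by their start times, and at most one write can be incomplete (it must be the last one in the real-time order). Consequently, for any linearization function $f$, the writes in $f(H)$ appear in their start-time order, so the only ``instability'' in the write-sequence of $f(H)$ as $H$ grows is the possible presence of a trailing write that is still pending and whose linearization position (relative to future reads) has not yet been forced.

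The construction is therefore: define $f^*(H)$ to be $f(H)$ with its last operation deleted if that last operation is a write that is incomplete in $H$, and $f^*(H) = f(H)$ otherwise. First I would verify that $f^*$ is still a linearization function --- properties (1)--(3) of Definition~\ref{defl} are preserved because we only ever removed the final operation of $f(H)$, which cannot be needed as the ``last write before'' any read (there is no read after it) and whose removal cannot violate any precedence constraint. Next I would establish the crucial characterization: a write $w$ belongs to $f^*(H)$ iff $w$ is complete in $H$ or is read by some read operation in $H$. The ``only if'' direction uses Observation~\ref{swob} (an incomplete unread write is the last write in $H$, hence the last write in $f(H)$ by the start-time ordering, and having no reader it is the last operation of $f(H)$, so it is exactly what $f^*$ strips off); the ``if'' direction is immediate from properties (1) and (3) of Definition~\ref{defl} together with the fact that $f^*$ only removes a single trailing operation.

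Finally I would check property (P) for write-sequences. Given $G$ a prefix of $H$, every write in $f^*(G)$'s write-sequence is complete in $G$ or read in $G$, hence also complete in $H$ or read in $H$, hence (by the characterization) appears in $f^*(H)$'s write-sequence. Both write-sequences list their writes in start-time order, and start times are inherited from $G$ to $H$ verbatim, so the write-sequence of $f^*(G)$ is a prefix of that of $f^*(H)$. This gives (P) restricted to writes, which together with $f^*$ being a linearization function means $f^*$ is a $\str$ function, so $\swA$ is $\sly$.

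The main obstacle --- really the only subtle point --- is proving the ``only if'' half of the characterization, i.e.\ that the single deletion performed by $f^*$ actually suffices to remove \emph{every} write that could later need reordering. This hinges on showing an incomplete, unread write is forced to be the \emph{last operation} of $f(H)$, not merely the last write; that is where Observation~\ref{swob} (no concurrent writes, at most one incomplete write) and Observation~\ref{swob2} (writes ordered by start time in any linearization) do the real work. Everything else is routine bookkeeping about prefixes and start-time orders.
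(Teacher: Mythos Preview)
Your proposal is correct and essentially identical to the paper's own proof: the paper defines $f^*$ exactly as you do (delete the last operation of $f(H)$ if it is an incomplete write), proves the same bi-directional characterization of which writes survive in $f^*(H)$ (complete or read), verifies that $f^*$ remains a linearization function, and derives property~(P) from the characterization plus the start-time ordering of writes. You have also correctly identified the one non-trivial step, namely that an incomplete unread write must be the \emph{last operation} (not just the last write) of $f(H)$, which is precisely the content of the paper's Claim~\ref{ifwinthencor}.
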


\section{Concluding remarks}
As we mentioned earlier,
	many randomized algorithms in the literature use atomic MWMR registers.
An interesting open problem is to determine which ones
	can also (be made to) work in systems with atomic SWMR registers.
If we replace a randomized algorithm's atomic MWMR registers with
	\emph{linearizable}\linebreak
	implementations of MWMR registers from atomic SWMR registers,
	we may break it:
	it may lose some of its properties~\cite{sl11} including,
	as we showed in this paper,
	termination.
On the other hand, we cannot replace
	an algorithm's atomic MWMR registers with
	\emph{strongly linearizable}
	implementations of MWMR registers from atomic SWMR registers
	--- which would automatically preserve the correctness of this algorithm ---
	simply because no such implementation exists~\cite{sl12}.
But perhaps many (maybe even most) of the known randomized algorithms that use atomic MWMR registers
	do not actually need the full strength of strongly linearizable registers:
	in particular, they may not need the ``strong linearizability property'' to hold for
	\emph{both} read and write operations.\footnote{In fact, a recent paper
	shows that a well-known randomized algorithm by Aspness and Herlihy that assumes atomic registers
	does not actually need any linearizability property:
	it works correctly with registers that are only \emph{regular}~\cite{vassos20}.}
Randomized algorithms that need the strong linearizability properties only for \emph{write operations},
	also work in systems with atomic SWMR registers:
	just replace their MWMR registers with the register implementation given by $\Awsl$.

\section*{Acknowledgments}
We thank Wojciech Golab and Philipp Woelfel for their helpful comments. 

\bibliographystyle{abbrv}
\bibliography{sl}

\begin{thebibliography}{10}

\bibitem{abrahamson1988}
K.~Abrahamson.
\newblock On achieving consensus using a shared memory.
\newblock In {\em Proceedings of the seventh annual ACM Symposium on Principles
  of distributed computing}, pages 291--302, 1988.

\bibitem{renaming2014}
D.~Alistarh, J.~Aspnes, K.~Censor-Hillel, S.~Gilbert, and R.~Guerraoui.
\newblock Tight bounds for asynchronous renaming.
\newblock {\em J. ACM}, 61(3), June 2014.

\bibitem{renaming2011}
D.~Alistarh, J.~Aspnes, K.~Censor-Hillel, S.~Gilbert, and M.~Zadimoghaddam.
\newblock Optimal-time adaptive strong renaming, with applications to counting.
\newblock In {\em Proceedings of the 30th annual ACM Symposium on Principles of
  distributed computing}, pages 291--302, 2011.

\bibitem{CRenaming2011}
D.~{Alistarh}, J.~{Aspnes}, S.~{Gilbert}, and R.~{Guerraoui}.
\newblock The complexity of renaming.
\newblock In {\em IEEE 52nd Annual Symposium on Foundations of Computer
  Science}, pages 718--727, 2011.

\bibitem{renaming2010}
D.~Alistarh, H.~Attiya, S.~Gilbert, A.~Giurgiu, and R.~Guerraoui.
\newblock Fast randomized test-and-set and renaming.
\newblock In {\em 24th International Symposium on Distributed Computing}, pages
  94--108, 2010.

\bibitem{aspnes1993}
J.~Aspnes.
\newblock Time- and space-efficient randomized consensus.
\newblock {\em J. Algorithms}, 14(3):414–431, May 1993.

\bibitem{aspnes1998}
J.~Aspnes.
\newblock Lower bounds for distributed coin-flipping and randomized consensus.
\newblock {\em J. ACM}, 45(3), May 1998.

\bibitem{aspnes2003}
J.~Aspnes.
\newblock Randomized protocols for asynchronous consensus.
\newblock {\em Distributed Computing}, 16(2–3):165–175, Sept. 2003.

\bibitem{appcount10}
J.~Aspnes and K.~Censor.
\newblock Approximate shared-memory counting despite a strong adversary.
\newblock {\em ACM Trans. Algorithms}, 6(2), Apr. 2010.

\bibitem{aspnes1990}
J.~Aspnes and M.~Herlihy.
\newblock Fast randomized consensus using shared memory.
\newblock {\em J. Algorithms}, 11(3):441–461, Sept. 1990.

\bibitem{aspnes1992}
J.~Aspnes and O.~Waarts.
\newblock Randomized consensus in expected {$O(n(\log n)^2$)} operations per
  processor.
\newblock In {\em Annual Symposium on Foundations of Computer Science},
  volume~33, pages 137--137, 1992.

\bibitem{abd}
H.~Attiya, A.~Bar-Noy, and D.~Dolev.
\newblock Sharing memory robustly in message-passing systems.
\newblock {\em J.~ACM}, 42(1), Jan. 1995.

\bibitem{attiya08}
H.~Attiya and K.~Censor.
\newblock Tight bounds for asynchronous randomized consensus.
\newblock {\em J. ACM}, 55(5), Nov. 2008.

\bibitem{sl19}
H.~Attiya and C.~Enea.
\newblock {Putting Strong Linearizability in Context: Preserving
  Hyperproperties in Programs That Use Concurrent Objects}.
\newblock In {\em 33rd International Symposium on Distributed Computing}, pages
  2:1--2:17, 2019.

\bibitem{bracha1991}
G.~Bracha and O.~Rachman.
\newblock Randomized consensus in expected {$O(n^2 \log n)$} operations.
\newblock In {\em International Workshop on Distributed Algorithms}, pages
  143--150, 1991.

\bibitem{chandra1996}
T.~D. Chandra.
\newblock Polylog randomized wait-free consensus.
\newblock In {\em Proceedings of the fifteenth annual ACM symposium on
  Principles of distributed computing}, pages 166--175, 1996.

\bibitem{sl15}
O.~Denysyuk and P.~Woelfel.
\newblock Wait-freedom is harder than lock-freedom under strong
  linearizability.
\newblock In {\em 29rd International Symposium on Distributed Computing}, pages
  60--74, 2015.

\bibitem{flp}
M.~J. Fischer, N.~A. Lynch, and M.~S. Paterson.
\newblock Impossibility of distributed consensus with one faulty process.
\newblock {\em J. ACM}, 32(2), Apr. 1985.

\bibitem{sl11}
W.~Golab, L.~Higham, and P.~Woelfel.
\newblock Linearizable implementations do not suffice for randomized
  distributed computation.
\newblock In {\em Proceedings of the Forty-Third Annual ACM Symposium on Theory
  of Computing}, page 373–382, 2011.

\bibitem{abdnotsl}
V.~Hadzilacos, X.~Hu, and S.~Toueg.
\newblock On atomic registers and randomized consensus in {M{\&}M} systems,
  2020.

\bibitem{vassos20}
V.~Hadzilacos, X.~Hu, and S.~Toueg.
\newblock Randomized consensus with regular registers, 2020.

\bibitem{sl12}
M.~Helmi, L.~Higham, and P.~Woelfel.
\newblock Strongly linearizable implementations: Possibilities and
  impossibilities.
\newblock In {\em Proceedings of the 2012 ACM Symposium on Principles of
  Distributed Computing}, page 385–394, 2012.

\bibitem{herlihy91}
M.~Herlihy.
\newblock Wait-free synchronization.
\newblock {\em ACM Trans. Program. Lang. Syst.}, 13(1):124–149, Jan. 1991.

\bibitem{linearizability}
M.~P. Herlihy and J.~M. Wing.
\newblock Linearizability: A correctness condition for concurrent objects.
\newblock {\em ACM Trans. Program. Lang. Syst.}, 12(3):463–492, July 1990.

\bibitem{lam86}
L.~Lamport.
\newblock On interprocess communication {P}arts {I}--{II}.
\newblock {\em Distributed Computing}, 1(2):77--101, May 1986.

\bibitem{swlamport}
M.~Li, J.~Tromp, and P.~M.~B. Vit\'{a}nyi.
\newblock How to share concurrent wait-free variables.
\newblock {\em J. ACM}, 43(4), July 1996.

\end{thebibliography}

\begin{appendices}


\section{Proof of Theorem~\ref{WSLinearizableIsStrong}}\label{toyterminatewithwsl}

\noindent
We now prove Theorem~\ref{WSLinearizableIsStrong} in Section~\ref{wsltoysection}, namely, if registers $R_1$, $R_2$, and $C$ are $\sly$,
	Algorithm~\ref{toyalgo} terminates with \mbox{probability~1} even against a strong adversary.
To prove this, we first show four lemmas (Lemmas~\ref{prev-reads}--\ref{C-non-bot}) about some safety properties of Algorithm~\ref{toyalgo}.
Specifically, Lemma~\ref{bigarestuck} and~\ref{smallarestuck}
	state that processes $p_0$ and $p_1$ on one side,
	and processes $\procs$ on the other side,
	remain within one round of each other.
Lemma~\ref{prev-reads} and~\ref{C-non-bot}
	state that the non-$\bot$ values that $\procs$ read from registers $R_1$ and $C$ in any round $j$
	were also written in~round~$j$.

To prove that Lemmas~\ref{prev-reads}-\ref{C-non-bot} hold,
	we rely on our assumption that registers $R_1$, $R_2$, and $C$ are \emph{linearizable}~\cite{linearizability}.
So in the following proofs,
	we refer to the linearization times of the read and write operations that are applied on these registers.
For example, when we say ``a process $p$ writes a value $v$ into $R_1$ \emph{at time $t$}'',
	we mean that this write of $R_1$ ``took effect'' at time $t$,
	in other words, this write operation is \emph{linearized at time $t$}
	(where $t$ is within the time interval of the operation).

In the following lemmas, we say that a process ``enters round $r \ge 1$'' if it executes line~\ref{enter1} or \ref{enter2} with $j =r$.

\begin{restatable}{lemma}{safetyone}\label{prev-reads}
For all $j\ge1$,
	if $p_i \in \{ \procs \}$ $\exs$ line~\ref{cleanr2-2} in round $j$,
	then
	$p_i$ previously read both $[b,j]$ and $[1-b,j]$ for some
	$b \in \{0, 1\}$ from register $R_1$ in lines~\ref{u1} and~\ref{u2} in round $j$.
\end{restatable}

\begin{proof}
Suppose $p_i \in \{ \procs \}$ $\exs$ line~\ref{cleanr2-2} in a round $j \ge1$.
Since $p_i$ $\exs$ line~\ref{cleanr2-2} in round $j$, $p_i$ did not exit in lines~\ref{exit0} and~\ref{exit1} in round $j$.
So $p_i$ found the condition of lines~\ref{guard0} and~\ref{guard1} to be false in round $j$.
Thus, $p_i$ found that $c \neq \bot$ and ($u_1 = [c,j]$ and $u_2 = [1-c,j]$) in lines~\ref{guard0} and~\ref{guard1} in round $j$.
Note that: (i) $u_1$~and~$u_2$ contain the values that $p_i$ read
	from register $R_1$ in lines~\ref{u1} and~\ref{u2} in round $j$, and
	(ii)~$c$~contains the value that $p_i$ read
	from register $C$ in lines~\ref{rcoin1} in round $j$; since $c \neq \bot$
	and the only non-$\bot$ values written in $C$ are $0$ or $1$, $c = b \in \{0,1\}$.
Thus, $p_i$ read $[b,j]$ and $[1-b,j]$ for
	$b \in \{0, 1\}$ from register $R_1$ in lines~\ref{u1} and~\ref{u2} in round $j$.
\end{proof}

\begin{restatable}{lemma}{safetytwo}\label{bigarestuck}
For all $j\ge1$,
	if $p_i \in \{ \procs \}$ $\exs$ line~\ref{cleanr2-2} in round $j$, then $p_0$ and $p_1$ previously entered round $j$.
\end{restatable}

\begin{proof}
Suppose $p_i \in \{ \procs \}$ $\exs$ line~\ref{cleanr2-2} in round $j$.
By Lemma~\ref{prev-reads}, $p_i$ previously read both $[b,j]$ and $[1-b,j]$ for some
	$b \in \{0, 1\}$ from register $R_1$ in lines~\ref{u1} and~\ref{u2} in round $j$.
So both $[0,j]$ and $[1,j]$ were previously written into $R_1$.
Since $p_0$ and $p_1$ are the only processes that write $[0,j]$ and $[1,j]$,
	and they only do so in (line~\ref{pwrite1} of) round $j$,
	we conclude that $p_0$ and $p_1$ previously entered round $j$.
\end{proof}
\vspace{5pt}

\begin{restatable}{lemma}{safetythree}\label{smallarestuck}
For all $j\ge1$,
	if $p_0$ or $p_1$ enters round $j+1$, then every $p_i \in \{ \procs \}$ previously reached line 37 in round $j$.
\end{restatable}

\begin{proof}
To prove this part, we show the slightly stronger claim that
	if $p_0$ or $p_1$ enters round $j+1$ at some time $t$,
	then every $p_i \in \{ \procs \}$ \emph{writes} $R_2$ in line~\ref{pwrite2} in round~$j$ before time $t$.
The proof is by induction on $j$.

\smallskip\noindent
$\bullet$ \textbf{\scshape{base case}:} Let $j=1$.
Suppose that a process $p \in \{p_0 , p_1 \}$ enters round $j+1=2$ at some time $t$.
We must show that $p_i$ writes $R_2$ in line~\ref{pwrite2} in round~$j=1$ before time $t$.

\begin{claim}\label{nc1}
Up to and including time $t$, no process in $\{ \procs \}$ writes $R_2$ in line~\ref{pwrite2} in any round~$r \ge~2$.
\end{claim}

\begin{proof}
Suppose, for contradiction, that some process $p' \in \{ \procs \}$ writes $R_2$
	in line~\ref{pwrite2} in a round~$r \ge 2$ at some time $\hat{t} \le t$.
So $p'$ $\exs$ line~\ref{cleanr2-2} in round $2$ at some time $\hat{t'} <\hat{t} \le t$, i.e., $p'$ $\exs$ line~\ref{cleanr2-2} in round $2$ \emph{before} $p \in \{p_0 , p_1 \}$ enters round $2$ --- a contradiction to Lemma~\ref{bigarestuck}.
\end{proof}

Note that \emph{before} $p$ enters round~$2$ at time $t$,
	$p$ does the following in round $1$:
	it writes $0$ into $R_2$ in line~\ref{cleanr2} at some time $t_1$,
	it reads $R_2$ into $v$ in line~\ref{v1} at some time $t_2$,
	and then it finds that $v \ge n-2$ in line~\ref{guard2} (because $p$ does \emph{not} exit in line~\ref{exit2});
	clearly, $t_1 <t_2 <t$.
Thus, some process $p_j \in \{ \procs \}$ must have written a value $u \ge n-2$ into $R_2$ in line~\ref{pwrite2}
	at some time $t' \in [t_1 , t_2]$ in some round~$r$.
Since $t' \le t_2 <t$, by Claim~\ref{nc1}, $r \le 1$.
So $p_j$ writes $u \ge n-2$ into $R_2$ in line~\ref{pwrite2} in round $1$
	at time $t' \in [t_1 , t_2]$~(*).

\begin{claim}\label{needaproof1}
If a process in $\{ \procs \}$ writes $k\ge1$ into $R_2$ in line~\ref{pwrite2} in round $1$ at some time $\hat{t} \le t_2$,
then at least $k$ distinct processes in $\{ \procs \}$ write $R_2$ in line~\ref{pwrite2} in round $1$ by time $\hat{t}$.
\end{claim}

\begin{proof}
The proof is by induction on $k$.

\begin{itemize}
\item \textsc{Base Case:} For $k=1$ the claim trivially holds.

\item \textsc{Induction Step:} Let $k \ge1$.
Induction Hypothesis ($\mathsection$):
	if a process in $\{ \procs \}$ writes $k$ into $R_2$ in line~\ref{pwrite2} in round $1$
	at some time $\hat{t} \le t_2$,
	then at least $k$ distinct processes in $\{ \procs \}$ write $R_2$ in line~\ref{pwrite2} in round $1$ by time $\hat{t}$.

Suppose that a process $p_{\ell} \in \{ \procs \}$ writes $k+1$ into $R_2$ in line~\ref{pwrite2} in round $1$ at some time $\twkp \le t_2$,
we must show that at least $k+1$ distinct processes in $\{ \procs \}$ write $R_2$ in line~\ref{pwrite2} in round $1$ by time $\twkp$.

Since $p_{\ell}$ writes $k+1$ into $R_2$ in line~\ref{pwrite2} in round $1$ at time $\twkp \le t_2$:

\begin{enumerate}

\item $p_{\ell} $ writes $0$ into $R_2$ in line~\ref{cleanr2-2} in round $1$ at some time $\two$, and

\item $p_{\ell}$ reads $k$ from $R_2$ in line~\ref{pread2} in round $1$ at some time $\trk$,
	such that $\two < \trk < \twkp \le t_2 < t$.

\end{enumerate}

\noindent
Thus, some process $p^* \in \{ \procs \}$ must have written $k$ into $R_2$
	in line~\ref{pwrite2} at some time $t^* \in [\two, \trk]$ in some round $r$.
Since $t^* < t$, by Claim~\ref{nc1}, $r=1$.
So $p^*$ writes $k$ into $R_2$ in line~\ref{pwrite2} in round $1$ at time $t^* \in [\two, \trk]$.
Since $t^* \le \trk <t_2$, by the Induction Hypothesis ($\mathsection$),
	at least $k$ distinct processes in $\{ \procs \}$ write $R_2$ in line~\ref{pwrite2} in round $1$ by time $t^*$.
Recall that $p_{\ell} \in \{ \procs \}$ also writes $R_2$ in line~\ref{pwrite2} in round $1$ at time $\twkp$.
Thus, since each process in $\{ \procs \}$ writes $R_2$ in line~\ref{pwrite2} in round $1$ \emph{at most once},
	at least $k+1$ distinct processes in $\{ \procs \}$ write $R_2$ in line~\ref{pwrite2}
	in round $1$ by time $\twkp$.
\end{itemize}
\end{proof}

\noindent
Recall that by (*), $p_j$ writes $u \ge n-2$ into $R_2$ in line~\ref{pwrite2} in round $1$
	at time $t' \in [t_1 , t_2]$.
Since $t' \le t_2$, by Claim~\ref{needaproof1},
	at least $n-2$ distinct processes in $\{ \procs \}$ write $R_2$ in line~\ref{pwrite2} in round $1$ by time~$t'$.
Thus \emph{every} process in $\{ \procs \}$ writes $R_2$ in line~\ref{pwrite2} in round $1$ by time~$t'$.
Since $t'\le t_2 < t$, every $p_i \in \{ \procs \}$ writes $R_2$ in line~\ref{pwrite2} in round $1$ before time $t$.

\medskip\noindent
$\bullet$ \textbf{\textsc{Induction Step:}} Let $j \ge1$.
Induction Hypothesis ($\star$):
	for all $r$, $1\le r \le j$,
	if $p_0$ or $p_1$ enters round $r+1$ at some time $t$,
	then $p_i$ writes $R_2$ in line~\ref{pwrite2} in round~$r$ before time $t$.

\noindent
Suppose that a process $p \in \{p_0 , p_1 \}$ enters round $j+2$ at some time $t$.
We must show that  $p_i$ writes $R_2$ in line~\ref{pwrite2} in round~$j+1$ before time $t$.

\begin{claim}\label{nc}
Up to and including time $t$, no process in $\{ \procs \}$ writes $R_2$ in line~\ref{pwrite2} in any round~$r \ge j+2$.
\end{claim}

\begin{proof}
Suppose, for contradiction, that some process $p' \in \{ \procs \}$ writes $R_2$
	in line~\ref{pwrite2} in a round~$r \ge j+2$ at some time $\hat{t} \le t$.
So $p'$ $\exs$ line~\ref{cleanr2-2} in round $j+2$ at some time $\hat{t'} <\hat{t} \le t$, i.e., $p'$ $\exs$ line~\ref{cleanr2-2} in round $j+2$ \emph{before} $p \in \{p_0 , p_1 \}$ enters round $2$ --- a contradiction to Lemma~\ref{bigarestuck}.
\end{proof}

Note that \emph{before} $p$ enters round~$j+2$ at time $t$,
	$p$ does the following in round $j+1$:
	it writes $0$ into $R_2$ in line~\ref{cleanr2} at some time $t_1$,
	it reads $R_2$ into $v$ in line~\ref{v1} at some time $t_2$,
	and then it finds that $v \ge n-2$ in line~\ref{guard2} (because $p$ does \emph{not} exit in line~\ref{exit2});
	clearly, $t_1 <t_2 <t$.
Thus, some process $p_j \in \{ \procs \}$ must have written a value $u \ge n-2$ into $R_2$ in line~\ref{pwrite2}
	at some time $t' \in [t_1 , t_2]$ in some round~$r$.
Since $t' \le t_2 <t$, by Claim~\ref{nc}, $r \le j+1$.

\begin{claim}\label{aboutW}
$r= j+1$.
\end{claim}

\begin{proof}
Suppose, for contradiction, $r \le j$.
Thus $p_j$ writes $R_2$ in line~\ref{pwrite2} in round $r \le j$ at time $t' \in [t_1 , t_2]$, \MP{We mean that $p_j$'s write $R_2$ in line~\ref{pwrite2} must take effect in that interval. The invocation of this write can occur before $t_1$!!!}
	i.e.,  \emph{after} the time $t_1$
	when $p$ writes $0$ into $R_2$ in line~\ref{cleanr2}
	in round $j+1$.
So $p_j$ writes $R_2$ in line~\ref{pwrite2} in round $r \le j$ \emph{after} $p \in \{ p_0 , p_1 \}$ enters round $r+1 \le j+1$
	--- a contradiction to
	our Induction Hypothesis~($\star$).
\end{proof}

\noindent
From Claim~\ref{aboutW}: $p_j$ writes $u \ge n-2$ into $R_2$ in line~\ref{pwrite2} in round $j+1$
	at time $t' \in [t_1 , t_2]$ ($\dagger$).

\begin{claim}\label{needaproof}
If a process in $\{ \procs \}$ writes $k\ge1$ into $R_2$ in line~\ref{pwrite2} in round $j+1$ at some time $\hat{t} \le t_2$,
then at least $k$ distinct processes in $\{ \procs \}$ write $R_2$ in line~\ref{pwrite2} in round $j+1$~by~time~$\hat{t}$.
\end{claim}

\begin{proof}
The proof is by induction on $k$.

\begin{itemize}
\item \textsc{Base Case:} For $k=1$ the claim trivially holds.

\item \textsc{Induction Step:} Let $k \ge1$.
Induction Hypothesis ($\mathsection \mathsection$):
	if a process in $\{ \procs \}$ writes $k$ into $R_2$ in line~\ref{pwrite2} in round $j+1$
	at some time $\hat{t} \le t_2$,
	then at least $k$ distinct processes in $\{ \procs \}$ write $R_2$ in line~\ref{pwrite2} in round $j+1$ by time $\hat{t}$.

Suppose that a process $p_{\ell} \in \{ \procs \}$ writes $k+1$ into $R_2$ in line~\ref{pwrite2} in round $j+1$ at some time $\twkp \le t_2$,
we must show that least $k+1$ distinct processes in $\{ \procs \}$ write $R_2$ in line~\ref{pwrite2} in round $j+1$ by time $\twkp$.

Since $p_{\ell}$ writes $k+1$ into $R_2$ in line~\ref{pwrite2} in round $j+1$ at some time $\twkp \le t_2$:

\begin{enumerate}

\item $p_{\ell} $ writes $0$ into $R_2$ in line~\ref{cleanr2-2} in round $j+1$ at some time $\two$, and

\item $p_{\ell}$ reads $k$ from $R_2$ in line~\ref{pread2} in round $j+1$ at some time $\trk$,
	such that $\two < \trk < \twkp \le t_2 < t$.

\end{enumerate}

\noindent
Thus, some process $p^* \in \{ \procs \}$ must have written $k$ into $R_2$
	in line~\ref{pwrite2} at some time $t^* \in [\two, \trk]$ in some round $r$.
Since $t^* < t$, by Claim~\ref{nc}, $r \le j+1$.

\begin{cclaim}\label{phew}
$r= j+1$.
\end{cclaim}
\vspace{-8pt}
\begin{proof}
Suppose, for contradiction, that $r \le j$.
Since $p_j$ $\exs$ line~\ref{cleanr2-2} in round $j+1$ at time~$\two$,
	by Lemma~\ref{bigarestuck}, both $p_0$ and $p_1$ entered round $j+1$ \emph{before} time $\two$.
So $p_0$ and $p_1$ enter round $r+1\le j+1$ \emph{before} $p^*$ writes $R_2$ in line~\ref{pwrite2} in round $r \le j$
	(at time $t^*$) --- a contradiction to our Induction Hypothesis~($\star$).
\end{proof}

From Claim~\ref{phew}, $p^*$ writes $k$ into $R_2$ in line~\ref{pwrite2} in round $j+1$ at time $t^* \in [\two, \trk]$.
Since $t^* \le \trk <t_2$, by the Induction Hypothesis ($\mathsection \mathsection$),
	at least $k$ distinct processes in $\{ \procs \}$ write $R_2$ in line~\ref{pwrite2} in round $j+1$ by time $t^*$.
Recall that $p_{\ell} \in \{ \procs \}$ also writes $R_2$ in line~\ref{pwrite2} in round $j+1$ at time $\twkp$.
Thus, since each process in $\{ \procs \}$ writes $R_2$ in line~\ref{pwrite2} in round $j+1$ \emph{at most once},
	at least $k+1$ distinct processes in $\{ \procs \}$ write $R_2$ in line~\ref{pwrite2}
	in round $j+1$ by time~$\twkp$.
\end{itemize}
\end{proof}

\noindent
Recall that by ($\dagger$), $p_j$ writes $u \ge n-2$ into $R_2$ in line~\ref{pwrite2} in round $j+1$
	at time $t' \in [t_1 , t_2]$.
Since $t' \le t_2$, by Claim~\ref{needaproof},
	at least $n-2$ distinct processes in $\{ \procs \}$ write $R_2$ in line~\ref{pwrite2} in round $j+1$ by time $t'$.
Thus \emph{every} process in $\{ \procs \}$ writes $R_2$ in line~\ref{pwrite2} in round $j+1$ by time~$t'$.
Since $t'\le t_2 < t$, every $p_i \in \{ \procs \}$ writes $R_2$ in line~\ref{pwrite2} in round $j+1$ before time $t$.
\end{proof}


\begin{restatable}{lemma}{safetyfour}\label{C-non-bot}
For all $j\ge1$,
	if $p_i \in \{ \procs \}$ $\exs$ line~\ref{guard1} in round $j$,
	then in that line $p_i$ has $c=b$ such that $b \in \{0,1\}$ and
	$p_0$ wrote $b$ into register $C$ in line~\ref{pcoin1-2} in round $j$.
\end{restatable}

\begin{proof}
Suppose that some $p_i \in \{ \procs \}$ $\exs$ line~\ref{guard1} in round $j \ge 1$.
We must show that  $p_i$ has $c=b$ for some $b \in \{0,1\}$ in line~\ref{guard1} in round $j \ge 1$,
	and $p_0$ wrote $b$ into $C$ in line~\ref{pcoin1-2} in round $j$.

Since $p_i$ $\exs$ line~\ref{guard1} in round $j$,
	in each round $k$, $1 \le k \le j$, the following occurs:
	
\begin{enumerate}
\item $p_i$ first writes $\bot$ into register $C$ in line~\ref{cleanc}, then
\item $p_i$ reads some value $b_k$ from $C$ into $c$ in line~\ref{rcoin1}.
\end{enumerate}

Note that $b_k \neq \bot$, because otherwise $p_i$
	would exit in line~\ref{exit0} (by the condition of line~\ref{guard0}) in round $k$ before reaching line~\ref{guard1} in round $j$.
Moreover, since $b_k \neq \bot$ and $p_0$ is the only process that writes non-$\bot$ values (namely, $0$ or $1$) into $C$,
	$b_k$ must be a value in $\{0,1\}$ that $p_0$ wrote into $C$ (in line~\ref{pcoin1-2}).
Therefore: in each round $k$, $1 \le k \le j$,
	$p_i$ writes $\bot$ into $C$ at some time $t_k$, and
	at some time $t'_k >t_k$, $p_i$ reads from $C$ a value $b_k \in \{0,1\}$ written by $p_0$ into $C$ at some time in $[t_k , t'_k]$~(*).

Since $p_0$ writes a value into $C$ \emph{only once} in each round,
	(*) implies that the value $b_j$ that $p_i$ reads from $C$ in round $j$ was written by $p_0$ into $C$ in some round $r \ge j$~(**).
	
Let $t$ be the time $p_i$ reads $b_j$ from $C$ in round $j$ (in line~\ref{rcoin1}).
At time $t$, $p_i$ has not yet reached line~\ref{pwrite2} in round $j$.
Thus, from Lemma~\ref{smallarestuck}, process $p_0$ has not entered round $j+1$ by time $t$.
So, by time $t$, process $p_0$ has not written any value into register $C$ in any round $\ell \ge j+1$.
Therefore, by (**), the value $b_j$ that $p_i$ reads from $C$ into $c$ in line~\ref{rcoin1} in round $j$ at time $t$
	was written by $p_0$ into $C$ in round $r = j$.
We conclude that $p_i$ has $c=b_j$ for $b_j \in \{0,1\}$ in line~\ref{guard1} in round $j$,
	and $p_0$ wrote $b_j$ into $C$ in line \ref{pcoin1-2} in~round~$j$.
\end{proof}

We now prove that if registers $R_1$, $R_2$, and $C$ are $\sly$,
	then Algorithm~\ref{toyalgo} terminates with \mbox{probability~1}
	even against a strong adversary (Theorem~\ref{WSLinearizableIsStrong}).
Intuitively,
	this is because if $R_1$, $R_2$, and $C$ are $\sly$,
	then the order in which $[0,j]$ and $[1,j]$ are written into $R_1$ in line~\ref{pwrite1} in round $j$ is already \emph{fixed}
	before the adversary $\mathcal{S}$ can see the result of the coin flip in line~\ref{pcoin1-1} in round $j$.
So for every round $j\ge 1$,
	the adversary can\emph{not} ``retroactively'' decide on this linearization order of write operations
	according to the coin flip result
	(like it did when $R_1$ was merely linearizable)
	to ensure that
	processes $p_1, p_2,...,p_{n-1}$ do not exit by the condition of line~\ref{guard1}.
Thus, with probability 1/2, all these processes will exit in line~\ref{exit1}.
And if they all exit there, then no process will increment register $R_2$ in lines~\ref{pread2}-\ref{pwrite2},
	and so $p_0$ and $p_1$ will also exit.

The proof of Theorem~\ref{WSLinearizableIsStrong} is based on the following:

\begin{lemma}\label{exitchance}
For all rounds $j \ge1$, 
	with probability at least $1/2$,
	no process enters round $j+1$.
\end{lemma}

\begin{proof}
Consider any round $j\ge1$.
There are two cases:

\begin{enumerate}[(I)]
\item\label{noRwrite} \textbf{Process $p_0$ does not complete its write of $[0,j]$ into register $R_1$ in line~\ref{pwrite1} in round $j$.}

Thus, $p_0$ does not invoke the write of any value into $C$ in line~\ref{pcoin1-2} in round $j$ (*).
 
\begin{claim}\label{allarestuck-det}
No process enters round $j+1$.
\end{claim}

\begin{proof}
We first show that no process in $\{ \procs \}$ $\exs$ line~\ref{pwrite2} in round $j$.
To see why, suppose, for contradiction,
	some process $p_i$ with $i\in\{2,3,...,n-1\}$
	$\exs$ line~\ref{pwrite2} in round $j$.
By Lemma~\ref{C-non-bot},
	in that line $p_i$ has $c=b \in \{0,1\}$ such that
	$p_0$ invoked the write of $b$ into $C$ in line~\ref{pcoin1-2} in round $j$ --- a contradiction to (*).

Thus no process in $\{ \procs \}$ $\exs$ line~\ref{pwrite2} in round $j$.
By Lemma~\ref{smallarestuck}, neither $p_0$ nor \mbox{$p_1$~enters~round~$j+1$}.
\end{proof}

\item\label{Rwrite} \textbf{Process $p_0$ completes its write of $[0,j]$ into register $R_1$ in line~\ref{pwrite1} in round $j$.}

\begin{claim}\label{allarestuck-rndm}
With probability at least 1/2, no process enters round $j+1$.
\end{claim}

\begin{proof}
Consider the set of histories $\seth$ of Algorithm~\ref{toyalgo}; this is a set of histories over
	the registers~$R_1$,~$R_2$,~$C$.
Since these registers are $\sly$,
	by Lemma 4.8 of~\cite{sl11}, 
	$\seth$ is $\sly$, i.e., it has
	at least one $\str$ function
	that satisfies properties (L) and (P) of Definition~\ref{defwsl}.
Let $f$ be the $\str$ function that
	the~adversary~$\mathcal{S}$~uses.

Let $g$ be an arbitrary history of the algorithm up to
	and including the completion of the write of $[0,j]$ into $R_1$
	by $p_0$ in line~\ref{pwrite1} in round $j$.
Since $p_0$ completes its write of $[0,j]$ into $R_1$ in $g$,  
	this write operation appears in the $\str$ $f(g)$.
Now there are two cases:

\begin{itemize}
\item\label{Casino1} \textbf{Case A}:
In $f(g)$, the write of $[1,j]$ into $R_1$ by $p_1$ in line~\ref{pwrite1} in round $j$
	occurs \emph{before}
	the write of $[0,j]$ into $R_1$ by $p_0$ in line~\ref{pwrite1} in round $j$.

Since $f$ is a $\str$ function,
	for every extension $h$ of the history $g$
	(i.e., for every history $h$ such that $g$ is a prefix of $h$),
	the write of $[1,j]$ into $R_1$
	occurs before
	the write of $[0,j]$ into $R_1$
	in the linearization $f(h)$
	(note that for all $j\ge1$, each of $[1,j]$ and $[0,j]$ is written \emph{at most once} in $R_1$, so it appears at most once in $h$ and $f(h)$).
Thus, in $g$ and every extension $h$ of~$g$,
	no process can first read $[0,j]$ from $R_1$ and then read $[1,j]$ from $R_1$ ($\star$).

Let $\mathcal{P}$ be the subset of processes in $\{p_2 , p_3, \ldots , p_{n-1} \}$
	that
	evaluate the condition
	($u_1 \neq [c,j]$ \textbf{or}  $u_2 \neq [1-c,j]$)
	in line~\ref{guard1} in round $j$.
Note that for each process $p_i$ in $\mathcal{P}$,
	$u_1$ and $u_2$ are the values that $p_i$ read from $R_1$
	consecutively in lines~\ref{u1} and \ref{u2} in round $j$.
By~($\star$), $p_i$ cannot first read $u_1=[0,j]$ and then read $u_2=[1,j]$ from $R_1$.
Thus, no process $p_i$ in $\mathcal{P}$ can have both $u_1=[0,j]$ and $u_2=[1,j]$ in line~\ref{guard1} in round $j$ ($\star \star$).

Let $\mathcal{P' \subseteq P}$ be the subset of processes in $\mathcal{P}$
	that have $c = 0$
	in line~\ref{guard1} in round $j$.

\begin{cclaim}\label{Mannaggia1}
~
\begin{enumerate}[(a)]
\item\label{SC1} No process in $\mathcal{P'}$ $\exs$ line~\ref{pwrite2} in round $j$.
\item\label{SC2} If $\mathcal{P'} = \mathcal{P}$ then neither $p_0$ nor $p_1$ enters round $j+1$.
\end{enumerate}	
\end{cclaim}

\begin{proof}
To see why Part~\ref{SC1} holds, note that no process $p_i$ in $\mathcal{P'}$
	can find the condition~($u_1 \neq [c,j]$ \textbf{or}  $u_2 \neq [1-c,j]$) in line~\ref{guard1} in round $j$
	to be false: otherwise $p_i$ would have both $u_1 = [c,j] = [0,j]$ \textbf{and} $u_2 = [1-c,j] = [1,j]$ in that line,
	but this is not possible by ($\star \star$).
Thus, no process in $\mathcal{P'}$ $\exs$ line~\ref{pwrite2} in round $j$
	(it would exit in line~\ref{exit1} before reaching that line).
	
To see why Part~\ref{SC2} holds, suppose $\mathcal{P'} = \mathcal{P}$
	and consider any process $p_i$ in $\{ \procs \}$.
If $p_i \not \in \mathcal{P}$
	then $p_i$ never evaluates the condition
	in line~\ref{guard1} in round $j$;
	and if $p_i \in \mathcal{P}$, then $p_i \in \mathcal{P'}$, and so from Part~\ref{SC1},
	$p_i$ does not $\ex$ line~\ref{pwrite2} in round $j$.
So \emph{in both cases},
	$p_i$ does not $\ex$ line~\ref{pwrite2} in round $j$.
Thus, by Lemma~\ref{smallarestuck}, neither $p_0$ nor $p_1$ enters round $j+1$.
\end{proof}

Now recall that $g$ is the history of the algorithm up to
	and including the completion of the write of $[0,j]$ into $R_1$
	by $p_0$ in line~\ref{pwrite1} in round $j$.
After the completion of this write, i.e., in any extension $h$ of $g$,
	$p_0$ is supposed to flip a coin and write the result into $C$ in line~\ref{pcoin1-2} in round~$j$.
Thus, with probability \emph{at least}~$1/2$,
	$p_0$~will \emph{not} invoke
	the operation to write $1$ into $C$ in line~\ref{pcoin1-2} in round~$j$.\MP{The adversary may decide to kill $p_0$
	before he flips a coin or before the write of $C$ in round $j$, or even after $p_0$ invokes the write  of $C$ in round $j$
	but before completing the write.}
So, from Lemma~\ref{C-non-bot},
	with probability \emph{at least}~$1/2$,
	every process in $\mathcal{P}$ has $c = 0$
	in line~\ref{guard1} in round $j$;
	this means that
        with probability at least~$1/2$,
	$\mathcal{P' = P}$.
Therefore, from Claim~\ref{Mannaggia1}, with probability at least~$1/2$:
\begin{enumerate}[(a)]
\item No process in $\mathcal{P}$ $\exs$ line~\ref{pwrite2} in round $j$.
\item Neither $p_0$ nor $p_1$ enters round $j+1$.
\end{enumerate}
This implies that in Case~A, with probability at least~$1/2$,
	no process enters round $j+1$.

\item\label{Casino2} \textbf{Case B}:
In $f(g)$, the write of $[1,j]$ into $R_1$ by $p_1$ in line~\ref{pwrite1} in round $j$
	does \emph{not} occur before
	the write of $[0,j]$ into $R_1$ by $p_0$ in line~\ref{pwrite1} in round $j$.
This case is essentially symmetric to the one for Case A, we include it below for completeness.

Since $f$ is a $\str$ function,
	for every extension $h$ of the history $g$,
	the write of $[1,j]$ into $R_1$
	does not occur before
	the write of $[0,j]$ into $R_1$
	in the linearization $f(h)$.
Thus, in $g$ and every extension $h$ of $g$,
	no process can first read $[1,j]$ from $R_1$ and then read $[0,j]$ from $R_1$ ($\dagger$).

Let $\mathcal{P}$ be the subset of processes in $\{p_2 , p_3, \ldots , p_{n-1} \}$
	that
	evaluate the condition
	($u_1 \neq [c,j]$ \textbf{or}  $u_2 \neq [1-c,j]$)
	in line~\ref{guard1} in round $j$.
Note that for each process $p_i$ in $\mathcal{P}$,
	$u_1$ and $u_2$ are the values that $p_i$ read from $R_1$
	consecutively in lines~\ref{u1} and \ref{u2} in round $j$.
By~($\dagger$), $p_i$ cannot first read $u_1=[1,j]$ and then read $u_2=[0,j]$ from $R_1$.
Thus, no process $p_i$ in $\mathcal{P}$ can have both $u_1=[1,j]$ and $u_2=[0,j]$ in line~\ref{guard1} in round $j$ ($\dagger \dagger$).

Let $\mathcal{P' \subseteq P}$ be the subset of processes in $\mathcal{P}$
	that have $c = 1$
	in line~\ref{guard1} in round $j$.

\begin{cclaim}\label{Mannaggia2}
~
\begin{enumerate}[(a)]
\item\label{SC21} No process in $\mathcal{P'}$ $\exs$ line~\ref{pwrite2} in round $j$.
\item\label{SC22} If $\mathcal{P'} = \mathcal{P}$ then neither $p_0$ nor $p_1$ enters round $j+1$.
\end{enumerate}	
\end{cclaim}

\begin{proof}
To see why \ref{SC21} holds, note that no process $p_i$ in $\mathcal{P'}$
	can find the condition ($u_1 \neq [c,j]$ \textbf{or}  $u_2 \neq [1-c,j]$) in line~\ref{guard1} in round $j$
	to be false: otherwise $p_i$ would have both $u_1 = [c,j] = [1,j]$ \textbf{and} $u_2 = [1-c,j] = [0,j]$ in that line,
	but this is not possible by ($\dagger \dagger$).
Thus, no process in $\mathcal{P'}$ $\exs$ line~\ref{pwrite2} in round $j$
	(it would exit in line~\ref{exit1} before reaching that line).
	
To see why \ref{SC2} holds, suppose $\mathcal{P'} = \mathcal{P}$
	and consider any process $p_i$ in $\{ \procs \}$.
If $p_i \not \in \mathcal{P}$
	then $p_i$ never evaluates the condition
	in line~\ref{guard1} in round $j$;
	and if $p_i \in \mathcal{P}$, then $p_i \in \mathcal{P'}$, and so from \ref{SC21},
	$p_i$ does not $\ex$ line~\ref{pwrite2} in round $j$.
So \emph{in both cases},
	$p_i$ does not $\ex$ line~\ref{pwrite2} in round $j$.
Thus, by Lemma~\ref{smallarestuck}, neither $p_0$ nor $p_1$ enters round $j+1$.
\end{proof}

Now recall that $g$ is the history of the algorithm up to
	and including the completion of the write of $[0,j]$ into $R_1$
	by $p_0$ in line~\ref{pwrite1} in round $j$.
After the completion of this write, i.e., in any extension $h$ of $g$,
	$p_0$ is supposed to flip a coin and write the result into $C$ in line~\ref{pcoin1-2} in round~$j$.
Thus, with probability \emph{at least}~$1/2$,
	$p_0$~will \emph{not} invoke
	the operation to write $0$ into $C$ in line~\ref{pcoin1-2} in round~$j$.
So, from Lemma~\ref{C-non-bot},
	with probability \emph{at least}~$1/2$,
	every process in $\mathcal{P}$ has $c = 1$
	in line~\ref{guard1} in round $j$;
	this means that
        with probability at least~$1/2$,
	$\mathcal{P' = P}$.
Therefore, from Claim~\ref{Mannaggia2}, with probability at least~$1/2$:
\begin{enumerate}[(a)]
\item No process in $\mathcal{P}$ $\exs$ line~\ref{pwrite2} in round $j$.
\item Neither $p_0$ nor $p_1$ enters round $j+1$.
\end{enumerate}
This implies that in Case~B, with probability at least~$1/2$,
	no process enters round $j+1$.
\end{itemize}

So in both Cases A and B, with probability at least 1/2, no process enters round $j+1$.
\end{proof}
\end{enumerate}

\noindent
Therefore, from Claims~\ref{allarestuck-det} and~\ref{allarestuck-rndm} of Cases~\ref{noRwrite} and~\ref{Rwrite},
	with probability at least $1/2$,
	no process enters round $j+1$.
\end{proof}

\noindent
We can now complete the proof of Theorem~\ref{WSLinearizableIsStrong},
	namely, that with $\sly$ registers,
	Algorithm~\ref{toyalgo} terminates with probability 1 
	even against a strong adversary.

\toyterminate*


\begin{proof}
Consider any round $j \ge 1$.
By Lemma~\ref{exitchance},
	with probability at least $1/2$,
	no process enters round $j+1$.
Since this holds for every round $j \ge 1$,
	then, with probability 1,
	all the correct processes
	return in lines~\ref{halt0} or line~\ref{halt1}
	within a finite number of rounds $r$.
\end{proof}

\section{Bounding the registers of Algorithm~\ref{toyalgo} }\label{btoysection}
We now explain how to obtain our result of Section~\ref{sectiontoyalgo},
	namely Theorems~\ref{LinearizableIsWeak}--\ref{WSLinearizableIsStrong},
	 with \emph{bounded} shared registers.
Note that Algorithm~\ref{toyalgo} that we used to obtain our results
	uses three shared MWMR registers, and only one of them, namely register $R_1$, is unbounded.
Specifically, $R_1$ is unbounded because
	each process $p_i$, $i\in \{0, 1\}$, writes a tuple $[i,j]$ into $R_1$ in rounds $j = 1, 2, \ldots$.
As we show below, however, it is sufficient for $R_1$ to contain only the values $0$, $1$, or $\bot$.
To achieve this, we modify Algorithm~\ref{toyalgo} as follows:

\begin{itemize}

\item Each process $p_i$, $i\in \{0, 1\}$,
	writes $i$ (instead of the tuple $[i,j]$) into $R_1$ in line~\ref{pwrite1}. 

\item The guard ($u_1 \neq [c,j]$ or $u_2 \neq [1-c,j]$)
	is replaced by the guard ($u_1 \neq c$ or $u_2 \neq 1-c$) in line~\ref{guard1}.  

\end{itemize}

To understand why the above changes do not affect the behaviour of the algorithm
	(i.e., why Algorithm~\ref{toyalgo} and the modified algorithm have exactly the same runs),
	consider how the tuples $[i,j]$ written into $R_1$ are actually used in Algorithm~\ref{toyalgo}.
In each round $j \ge 1$, each process $p \in \{\procs \}$ reads $R_1$ twice (in lines~\ref{u1} and in line~\ref{u2}, respectively),
	and then it checks the two values $u_1$ and $u_2$ that it read
	in line~\ref{guard0} and in line~\ref{guard1}.
Specifically:

\begin{enumerate}

	\item In line~\ref{guard0}, $p$ checks whether $u_1$ or $u_2$ is $\bot$.
	
	Note that $p$ has $u_1 = [i,j]$ $(\neq \bot)$ in line~\ref{guard0} in Algorithm~\ref{toyalgo}
	if and only if
	$p$ has $u_1 = i$ $(\neq \bot)$ in line~\ref{guard0} in the modified algorithm.
	The same holds for $u_2$ in line~\ref{guard0}.

	Thus, the guard ($u_1 = \bot$ or $u_2 = \bot$ or $c = \bot$) in line~\ref{guard0} has the same effect
	in the modified algorithm as in Algorithm~\ref{toyalgo}.
	
	\item In line~\ref{guard1}, $p$ checks whether ($u_1 \neq [c,j]$ or $u_2 \neq [1-c,j]$) for some specific value $c$.
	
	Lemma~\ref{mini} (proven below) states that when
	$p$ reaches line~\ref{guard1} in round $j$
	and is poised to check whether ($u_1 \neq [c,j]$ or $u_2 \neq [1-c,j]$),
	it must be the case that $u_1 = [-,j]$ \emph{and} $u_2 = [-,j]$.
	So checking whether the second component of the two tuples is $j$ or not is useless.
	Thus, the guard ($u_1 \neq [c,j]$ or $u_2 \neq [1-c,j]$)
	in Algorithm~\ref{toyalgo}
	has the same effect as the guard ($u_1 \neq c$ or $u_2 \neq 1-c$) in the modified algorithm.

\end{enumerate}

So to show that the modified algorithm
	behaves exactly as Algorithm 1,
	 it now is sufficient to prove:
\noindent
\begin{lemma}\label{mini}
For all $j\ge1$,
	if $p_i \in \{ \procs \}$ $\exs$ line~\ref{guard1} in round $j$,
	then in that line $p_i$ has $u_1 = [-,j]$ and $u_2 = [-,j]$. 
\end{lemma}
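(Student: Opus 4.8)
\noindent\emph{Proof plan.}
The plan is to fix $j\ge 1$ and a process $p_i \in \{\procs\}$ that $\exs$ line~\ref{guard1} in round $j$, and to track where the two values $u_1$ and $u_2$ that $p_i$ holds there were written. Since $p_i$ did not exit in line~\ref{exit0} in round $j$, the guard of line~\ref{guard0} evaluated to false, so $u_1\neq\bot$ and $u_2\neq\bot$. The only non-$\bot$ values ever written into $R_1$ are the tuples $[0,\ell]$, written solely by $p_0$ (in round~$\ell$, in line~\ref{pwrite1}), and $[1,\ell]$, written solely by $p_1$ (in round~$\ell$, in line~\ref{pwrite1}); moreover each such tuple is written at most once. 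Hence $u_1=[b_1,j_1]$ and $u_2=[b_2,j_2]$ for some $b_1,b_2\in\{0,1\}$ and $j_1,j_2\ge 1$, and it remains to prove $j_1=j$ and $j_2=j$. I will do this for $j_1$; the argument for $j_2$ is identical, with line~\ref{u2} in place of line~\ref{u1}.

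First I would establish the upper bound $j_1\le j$. Let $\tau$ be the time at which $p_i$ reads $u_1$ from $R_1$ in line~\ref{u1} in round $j$ (identifying each read and write with its linearization point, as in the proofs of Lemmas~\ref{prev-reads}--\ref{C-non-bot}). Since line~\ref{u1} precedes line~\ref{pwrite2} within round $j$, process $p_i$ has not reached line~\ref{pwrite2} in round $j$ by time $\tau$; hence, by the contrapositive of Lemma~\ref{smallarestuck}, neither $p_0$ nor $p_1$ has entered round $j+1$ by time $\tau$. Therefore every value that $p_0$ or $p_1$ has written into $R_1$ by time $\tau$ has round component at most $j$, and since $u_1\neq\bot$ is the value of $R_1$ at time $\tau$, we get $j_1\le j$.

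Next I would establish the matching lower bound $j_1\ge j$, which is the heart of the proof. Let $\sigma$ be the time at which $p_i$ writes $\bot$ into $R_1$ in line~\ref{cleanr1} in round $j$; since line~\ref{cleanr1} precedes line~\ref{u1} within round $j$, we have $\sigma<\tau$. Suppose, for contradiction, that $j_1\le j-1$, so $j\ge 2$. Because $p_i$ $\exs$ line~\ref{guard1} in round $j$, it completed round $j_1$ without exiting, so it reached line~\ref{cleanr2-2} in round $j_1$; by Lemma~\ref{prev-reads}, $p_i$ read both $[0,j_1]$ and $[1,j_1]$ from $R_1$ in lines~\ref{u1} and~\ref{u2} in round $j_1$. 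In particular, $p_{b_1}$'s unique write of $[b_1,j_1]$ into $R_1$ occurred before $p_i$'s read of $[b_1,j_1]$ in round $j_1$, which occurred before $p_i$ entered round $j$ (as $j_1<j$), hence before time $\sigma$. But $u_1=[b_1,j_1]$ is the value of $R_1$ at time $\tau>\sigma$, whereas at time $\sigma$ the register $R_1$ was set to $\bot\neq[b_1,j_1]$ by $p_i$ itself; so $[b_1,j_1]$ must have been written into $R_1$ at some time in $(\sigma,\tau]$, contradicting that its only write occurred before $\sigma$. Hence $j_1\ge j$, so $j_1=j$, and likewise $j_2=j$; thus $p_i$ has $u_1=[-,j]$ and $u_2=[-,j]$ in line~\ref{guard1} in round $j$.

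The step I expect to be the main obstacle is exactly this lower bound: determining \emph{when} the value $[b_1,j_1]$ read by $p_i$ was written into $R_1$. Routing this through the timing of $p_0$'s coin write (Lemma~\ref{C-non-bot}) does not seem to close the argument; the clean route is to notice that $p_i$'s own reset of $R_1$ to $\bot$ at the start of round $j$ wipes out everything written before $\sigma$, while Lemma~\ref{prev-reads} applied to the \emph{earlier} round $j_1$ forces both $[0,j_1]$ and $[1,j_1]$ to have been written before $\sigma$. Everything else is immediate from linearizability of $R_1$ together with the already-proven safety Lemmas~\ref{prev-reads} and~\ref{smallarestuck}.
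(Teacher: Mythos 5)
Your proposal is correct and follows essentially the same route as the paper's proof: the non-$\bot$ values must be tuples written by $p_0$ or $p_1$, the round component is at most $j$ by (the contrapositive of) Lemma~\ref{smallarestuck}, and it cannot be less than $j$ because Lemma~\ref{prev-reads} applied to the earlier round shows the unique write of that tuple was linearized before $p_i$'s own reset of $R_1$ to $\bot$ in round $j$, so it cannot be read again afterwards. The paper merely packages these two bounds slightly differently (first locating the writes in the interval between the reset and the reads, then ruling out earlier rounds in its Claim), but the key lemmas and the contradiction are the same as yours.
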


\begin{proof}
Suppose that some $p_i \in \{ \procs \}$ $\exs$ line~\ref{guard1} in round $j \ge 1$.
We must show that $p_i$ has $u_1 = [-,j]$ and $u_2 = [-,j]$ in line~\ref{guard1} in round $j$.

Since $p_i$ $\exs$ line~\ref{guard1} in round $j$,
	$u_1 \neq \bot$ and $u_2 \neq \bot$ in that line, because otherwise $p$
	would have exited in line~\ref{exit0} (by the condition of line~\ref{guard0}) before reaching line~\ref{guard1} in round $j$.
Note that
	$u_1$ and $u_2$ contain the values
	that $p_i$ reads from $R_1$ in lines~\ref{u1} and \ref{u2} in round~$j$,
	after it wrote $\bot$ into $R_1$ in line~\ref{cleanr1} in round~$j$.
So 
	in line~\ref{guard1} in round $j$ of $p_i$,
	each of $u_1$ and $u_2$ contains a non-$\bot$ value that some process in $\{ p_0 , p_1 \}$
	wrote into $R_1$ (in line~\ref{pwrite1})
	between the time $t_w$ when $p_i$ wrote $\bot$ into $R_1$ in line~\ref{cleanr1} in round~$j$
	and the time $t_r$ when $p_i$ read $R_1$ in
	line~\ref{u2} in round~$j$ (*).
Let $p_b$ be any process in $\{ p_0 , p_1 \}$ that writes $R_1$ (in line~\ref{pwrite1})
	at any time in $[t_w, t_r]$.
Note that $p_b$ does this write \emph{before} $p_i$ $\exs$ line~\ref{pwrite2} in a round $j$.
Thus, by Lemma~\ref{smallarestuck}, $p_b$ does this write \emph{in a round $r \le j$}.
So this is a write of $[b,r]$ with $r \le j$ and $b \in \{0, 1\}$ into~$R_1$.
Thus, from (*), $p_i$ has $u_1 = [b_1,r_1]$ and $u_2 = [b_2,r_2]$ with $1 \le r_1, r_2 \le j$ and $b_1, b_2 \in \{0, 1\}$
	in line~\ref{guard1} in round $j$ (**).

\begin{claim}\label{AlliMorte}
$r_1 = r_2 = j$
\end{claim}

\begin{proof}
Recall that $1 \le r_1, r_2 \le j$.
If $j =1$, it follows that $r_1 = r_2 = j =1$.
Now assume that $j \ge 2$.
Suppose, for contradiction, that $r_1 \neq j$ or $r_2 \neq j$,
	and so $1 \le r_1 \le  j-1$ or $1 \le r_2 \le  j-1$.
Without loss of generality,\MP{Perhaps this wlog is not obvious, $r_k, k\in\{1,2\}$ ?}
	say that $1 \le r_1 \le  j-1$.
By (**), $p_i$ has $u_1 = [b_1,r_1]$ with $b_1 \in \{0, 1\}$
	in line~\ref{guard1} in round~$j$.
So $p_i$ read $[b_1,r_1]$ from $R_1$ in lines~\ref{u1} in round~$j$,
	and $p_i$ did so \emph{after} writing $\bot$ into $R_1$ in line~\ref{cleanr1} in round~$j$.
Note that \emph{before} $p_i$ wrote $\bot$ into $R_1$ in line~\ref{cleanr1} in round~$j$, the following occurred:
	(i) $p_i$ reached line~\ref{pread2} \emph{in round $r_1 \le j-1$}; and so by Lemma~\ref{prev-reads},
	(ii) $p_i$ read the value $[b_1,r_1]$ from $R_1$ in round $r_1 \le j-1$.
From the code of processes $\{p_0, p_1 \}$ it is clear that $[b_1,r_1]$ is written \emph{only once}
	into~$R_1$ (specifically, by process $p_{b_1}$ in round $r_1$).
Thus, since $p_i$ read $[b_1,r_1]$ from $R_1$ in round $r_1 \le j-1$
	\emph{before} process $p_i$ writes
	$\bot$ into $R_1$ in line~\ref{cleanr1} in round~$j$,
	it can\emph{not} read $[b_1,r_1]$ again from $R_1$ in lines~\ref{u1} in round~$j$ --- a contradiction.
\end{proof}

From Claim~\ref{AlliMorte} and (**), process $p_i$ has $u_1 = [b_1,j]$ and $u_2 = [b_2,j]$ with $b_1, b_2 \in \{0, 1\}$
	in line~\ref{guard1} in round $j$.
\end{proof}

\section{Proof of Theorem~\ref{aiswsl}}\label{proofwsl}
We now prove Theorem~\ref{aiswsl} in Section~\ref{sectionwslalgo},
	namely, $\Awsl$ is a $\sly$ implementation of a MWMR register from SWMR registers.
In the following,
	we consider an arbitrary history $H \in \seth$ of $\Awsl$ and the history $\seq{}=f(H)$ constructed by $\f$ on input $H$.
We first show $f$ is a linearization function of $\seth$,
	i.e., $\seq{}$ satisfies properties~\ref{p1}--\ref{p3} of Definition~\ref{defl}.

\begin{definition}
An operation that starts at time $s$ and completes at time $f$ is active at time $t$ if $s \le t \le f$.
\end{definition}

\begin{definition}\label{tscompare}
Let $a$ and $b$ be two timestamps 
then:
\begin{itemize}

\item $a < b$ if and only if $a$ precedes $b$ in lexicographic order. 

\item $a \leq b$ if and only if $a = b$ or $a < b$.
\end{itemize}
\end{definition}

\begin{observation}
Relation $\le$ is a total order on the set of timestamps.  
\end{observation}

\begin{notation}
Let $w$ be an operation that writes into $Val[-]$.
We denote by $\ts{w}$ the timestamp that $w$ writes into $Val[-]$.
That is, if $w$ writes $(-,ts)$, 
	$\ts{w}=ts$.
\end{notation}

\begin{observation}\label{onlyone}
The tuples $(v,ts)$ and $(v',ts')$ written to $Val[-]$ by two distinct write operations have distinct timestamps, i.e.,
	$ts \ne ts'$.
\end{observation}

\begin{observation}\label{a1ob2}
Consider the execution of a write operation (lines~\ref{timestampb}--\ref{wreturn}) by a process $p_k$.
During that execution,
	the values of the variable $\nts$ of $p_k$ are non-increasing with time.
\end{observation}

\begin{observation}\label{obsreadtime}
		If a read operation returns $(v,ts) \neq (0,[0, \ldots, 0])$\XMP{If a read operation reads $(v,ts)$ such that $ts \neq [0, \ldots, 0]$ ?},
		then there is an operation $w$ 
		that writes $(v,ts)$ to $Val[-]$.
\end{observation}

\begin{lemma}\label{last}
If a read operations $r$ starts after an operation $w$ writes to $Val[-]$ and $r$ returns $(-,ts)$,
	then $ts \ge \ts{w}$.
\end{lemma}
\begin{proof} 
 Assume a read operations $r$ starts after an operation $w$ writes to some $Val[k]$ and $r$ returns $(-,ts)$.
Then $r$ reads $Val[i]$ (line~\ref{readall} of $\Awsl$) for every $1 \le i \le n$ after $w$ writes $(-,\ts{w})$ to $Val[k]$.
Since the timestamps in each $Val[-]$ are monotonically increasing, 
	$r$ reads $(-,ts'')$ from $Val[k]$ for some $ts' \ge \ts{w}$.
Since $ts$ is the largest timestamp that $r$ reads among all $Val[-]$ (lines~\ref{max}--\ref{readreturn} of $\Awsl$),
	$ts~\ge~ts'~\ge~\ts{w}$.
\end{proof}

\begin{observation}\label{w}
If an operation $w$ writes to $Val[-]$,
		there is an $i\ge 1$ such that 
		$w = w_i$.
\end{observation}

\begin{observation}\label{fmapwritex0}
If an operation $w$ writes to $Val[-]$,
		there is a unique $j\ge 1$ such that 
		$w \in \B{j}$.
\end{observation}

\begin{observation}\label{leaderofbatch0}
For every write operation $w$,
	$w \in \seq{}$ if and only if 
	there is an $i$ such that $w\in \mathcal{B}_i$.	
\end{observation}


By Observations~\ref{fmapwritex0} and~\ref{leaderofbatch0}, we have:

\begin{corollary}\label{wwi}
If an operation $w$ writes to $Val[-]$,
	then $w \in \seq{}$.
\end{corollary}


\begin{observation}\label{leaderofbatch1}
For any two write operations $w$ and $w'$,
	if $w\in \mathcal{B}_i$,	
	$w' \in \mathcal{B}_j$, 
	and $i < j$,
	then $w$ is before $w'$ in $\seq{}$.
\end{observation}

Recall that
	$\pts{w}{i}$ is the value of $\nts$,
	 at time $t_i$,
	 of the process executing the write operation $w$ (see line~\ref{newts} of $f$).

\begin{observation}\label{leaderofbatch2}
For all $i \ge 1$,
	if $w_i \in \B{i}$
	then $\pts{w_i}{i} = \ts{w_i}$.
\end{observation}

\begin{observation}\label{leaderofbatch3}
For all $i \ge 1$,
	 for all operations $w \in \B{i}$,
	 if $w \ne w_i $
	 then $\pts{w}{i} < \ts{w_i}$.
\end{observation}

By Observation~\ref{a1ob2}, we have:
\begin{observation}\label{leaderofbatch5}
For all $i \ge 1$,
	for all operations $w \in \B{i}$ that write to $Val[-]$,
	 $\ts{w} \le \pts{w}{i}$.
\end{observation}

By Observations~\ref{leaderofbatch3} and \ref{leaderofbatch5}, we have:
\begin{observation}\label{leaderofbatch4}
For all $i \ge 1$,
	for all operations $w \in \B{i}$ that write to $Val[-]$,
	 $\ts{w}  \le  \ts{w_i}$.
\end{observation}

\begin{lemma}\label{later}
For all $i \ge 1$,
	 for all operations $w$ that write to $Val[-]$,
	 if $w \in \sC{i}$ and $w \notin \B{i}$,
	 then $\ts{w} > \ts{w_i}$.\XMP{New lemma says that if $f$ decides not to linearize $w$ by time $t_i$, then $\ts{w} > \ts{w_i}$}
\end{lemma}
\begin{proof}
Let $i \ge 1$ and assume that an operation $w$ writes to $Val[-]$
	such that $w \in \sC{i}$ and $w \notin \B{i}$.	
By line~\ref{Bt} of $\f$,
	$\pts{w}{i} > \pts{w_i}{i}$.
Since by Observation~\ref{leaderofbatch2} $\pts{w_i}{i} = \ts{w_i}$,
	$\pts{w}{i} > \ts{w_i}$ (*).
There are two cases:

\textbf{Case 1:} $\pts{w}{i}$ contains no $\infty$.
This implies $\ts{w} = \pts{w}{i}$.
Thus, by (*),
	$\ts{w} = \pts{w}{i} > \ts{w_i}$.

\textbf{Case 2:} $\pts{w}{i}$ contains $\infty$.
Then there is a $k$ such that for every $k \le l \le n$,
	$w$ reads $(Val[l].ts)[l]$ (lines~\ref{ts1} and~\ref{ts2} of $\Awsl$) after time $t_i$.
Note that $w_i$ reads $(Val[l].ts)[l]$ before time $t_i$ for every $k \le l \le n$,
	and $w_i$ writes $(-,\ts{w_i})$ to some $Val[-]$ at time $t_i$.
Thus,
	for every $k \le l \le n$,
	since $(Val[l].ts)[l]$ is non-decreasing,
	$\ts{w}[l] \ge \ts{w_i}[l]$ ($\dagger$).

For every $1 \le l \le k-1$, 
	since $w$ reads $(Val[l].ts)[l]$ before time $t_i$,
	$\ts{w}[l] = \pts{w}{i}[l]$.
By (*), 
	 $\pts{w}{i}[1,\ldots, k-1] \ge \ts{w_i}[1,\ldots, k-1]$.
So $\ts{w}[1,\ldots, k-1] \ge \ts{w_i}[1,\ldots, k-1]$ ($\dagger \dagger$).

By ($\dagger$) and ($\dagger \dagger$),
	$\ts{w} \ge \ts{w_i}$.
Since $w \notin \B{i}$,
	$w \ne w_i$.
By Observation~\ref{onlyone},
	$\ts{w} \ne \ts{w_i}$.
Thus, $\ts{w} > \ts{w_i}$.
\end{proof}

\begin{lemma}\label{tsorder}\XMP{intuition of why we can decide the order of writes by seeing the partial new\_ts}
For all $j > i \ge 1$,
	if $w \in \B{i}$,
	$w' \in \B{j}$,
	and $w$ and $w'$ both write to $Val[-]$,
	then $\ts{w'} > \ts{w}$.
\end{lemma}

\begin{proof}
Assume $j > i \ge 1$,
	$w \in \B{i}$, 
	$w' \in \B{j}$,
	and $w$ and $w'$ both write to $Val[-]$.

\begin{claim}\label{tsorderc1}
$\ts{w'} > \ts{w_i}$.
\end{claim}

\begin{proof}
Since $w' \in \B{j}$ and $j > i$,
	$w' \notin \B{i}$\XMP{unique $\B{i}$ of Observation~\ref{fmapwritex0}} and $w' \notin \wseq{i}$.
There are two cases:

\textbf{Case 1}:
	$w' \in \sC{i}$.
Since $w' \in \sC{i}$ and $w' \notin \B{i}$,
	by Lemma~\ref{later},
	$\ts{w'} > \ts{w_i}$.
\smallskip

\textbf{Case 2}:
	$w' \notin \sC{i}$.
By lines~\ref{seqi1} and~\ref{seqi2} of $\f$,
	$\wseq{i-1}$ is a prefix of $\wseq{i}$.
Since $w' \notin \wseq{i}$, 
	$w' \notin \wseq{i-1}$.
Since $w' \notin \sC{i}$,
	by line~\ref{C} of $\f$,
	$w'$ is not active at time $t_i$.
Since $w' \in \B{j} \subseteq \sC{j}$,
 	by line~\ref{C} of $\f$,
 	$w'$ is active at time $t_j$.
Since $i < j$,
	$t_i < t_j$. 
So $w'$ starts after $t_i$
	and $w'$ reads all $(Val[-].ts)[-]$ (lines~\ref{timestampb}--\ref{timestampe} of $\f$) after $t_i$.
Note that $w_i$ reads $(Val[l].ts)[l]$ before time $t_i$ for every $1 \le l \le n$,
	and $w_i$ writes $(-,\ts{w_i})$ to some $Val[-]$ at time $t_i$.
Thus,
	for every $1 \le l \le n$,
	since $(Val[l].ts)[l]$ is non-decreasing,
	$\ts{w'}[l] \ge \ts{w_i}[l]$.
So $\ts{w'} \ge \ts{w_i}$.
Since $w' \notin \B{i}$,
	$w' \ne w_i$.
By Observation~\ref{onlyone},
	$\ts{w'} \ne \ts{w_i}$
	and so $\ts{w'} > \ts{w_i}$.

Therefore in both cases,
	$ts_{w'} > \ts{w_i}$.
\end{proof}

Since $w \in \B{i}$,
	by Observation~\ref{leaderofbatch4},
	$\ts{w} \le \ts{w_i}$.
By Claim~\ref{tsorderc1},
	$\ts{w'} > \ts{w_i} \ge \ts{w'}$.
\end{proof}

We now show that for every two operations $o_1$ and $o_2$, 
	if $o_1$ completes before $o_2$ starts in $H$\XMP{in $H$} and $o_1,o_2 \in \seq{}$,
	then $o_1$ is before $o_2$ in $\seq{}$.

\begin{lemma}\label{linearization1}
If a write operation $w$ completes before a write operation $w'$ starts and $w,w' \in \seq{}$,
	then $w$ is before $w'$ in $\seq{}$.
\end{lemma}

\begin{proof}
Assume a write operation $w$ completes before a write operation $w'$ starts and $w,w' \in \seq{}$.
By Observation~\ref{leaderofbatch0},
	there are $i$ and $j$ such that $w \in \B{i}$ and $w' \in \B{j}$.
By line~\ref{C} of $\f$,
	$w$ and $w'$ are active at time $t_i$ and $t_j$ respectively.
Since $w'$ starts after $w$ completes,
	$t_i < t_j$ and so $i < j$.
Thus, by Observation~\ref{leaderofbatch1},
	$w$ is before $w'$ in $\seq{}$.
\end{proof}

\begin{lemma}\label{last3}
If an operation $w$ writes to $Val[-]$ at time $t$ and $w \in \B{i}$ for some $i \ge 1$,
	then $t_i \le t$.
\end{lemma}
\begin{proof}
Assume,
	for contradiction,
	an operation $w$ writes to $Val[-]$ at time $t$,
	$w \in \B{i}$ for some $i \ge 1$,
	and $t_i > t$.
By Observation~\ref{w},
	there is a $k\ge 1$ such that $w = w_k$
	and $t = t_k$.
By lines~\ref{line13}--\ref{seqi1} of $\f$ 
	(the $k$th iteration of the for loop),
	$w_k \in \wseq{k}$.
Since $t_i > t = t_k$,
	$k \le i-1$.
By lines~\ref{seqi1} and~\ref{seqi2} of $\f$,
	$\wseq{k}$ is a prefix of $\wseq{i-1}$.
Since $w_k \in \wseq{k}$,
	 $w_k \in \wseq{i-1}$.
Thus,
	by line~\ref{C} of $\f$,
	$w_k \notin \sC{i}$
	and so $w_k \notin \B{i}$.
Since $w_k = w$,
	this contradicts that $w \in \B{i}$.
\end{proof}

\begin{lemma}\label{linearization2}
If a read operation $r$ completes before a write operation $w$ starts and $r, w \in \seq{}$,
	then $r$ is before $w$ in $\seq{}$.
\end{lemma}

\begin{proof}
Assume a read operation $r$ completes before a write operation $w$ starts and $r, w \in \seq{}$.
Let $(-,ts)$ denote the value that $r$ returns.

\textbf{Case 1}: $ts = [0, \ldots, 0]$.
By lines~\ref{readinit1}--\ref{readinit2} of $\f$,
	$r$ is before all the write operations in~$\seq{}$.
Thus,
	$r$ is before~$w$~in~$\seq{}$.

\textbf{Case 2}: $ts \ne [0, \ldots, 0]$.
By Observation~\ref{obsreadtime},
	an operation $w'$ writes $(-,ts)$ to $Val[-]$ and by Corollary~\ref{wwi}, 
	$w' \in \seq{}$.
By lines~\ref{followwrite1}--\ref{followwrite2} of $\f$,
	$r$ is after $w'$ and before any subsequent write in $\seq{}$.
So to show $r$ is before $w$ in $\seq{}$,
	it is sufficient to show that $w'$ is before $w$ in $\seq{}$.
\begin{claim}
 $w'$ is before $w$ in $\seq{}$.
\end{claim}
\begin{proof}
Since $w', w \in \seq{}$,
	by Observation~\ref{leaderofbatch0},
	there are $i$ and $j$ such that
	$w' \in \B{i}$ and	$w \in \B{j}$.
Let $t'$ be the time when $w'$ writes $(-,ts)$ to $Val[-]$.
Since $w' \in \B{i}$,
	by Lemma~\ref{last3},
	$t_i \le t'$.
Since $r$ returns $(-,ts)$,
	 $r$ reads $(-,ts)$ from $Val[-]$ and so $r$ completes after $t'\ge t_i$.
Since $w$ starts after $r$ completes,
	$w$ starts after time $t_i$.
Since $w \in \B{j} \subseteq \sC{j}$,
	by line~\ref{C} of $\f$,
	$w$ is active at time $t_j$.
Thus, $t_i < t_j$
	and so $i < j$. 
By Observation~\ref{leaderofbatch1},
	$w'$ is before $w$ in $\seq{}$.
\end{proof}
Therefore,
	in both cases 1 and 2,
	$r$ is before $w$ in $\seq{}$.
\end{proof}

\begin{lemma}\label{last1}
If a write operation $w$ writes to $Val[-]$ before a read operation $r$ starts and $w, r \in \seq{}$,
	then $w$ is before $r$ in $\seq{}$.
\end{lemma}

\begin{proof}
Assume a write operation $w$ writes to $Val[-]$ before a read operation $r$ starts and $w, r \in \seq{}$.
Let $(-,ts)$ 
	denote the value that $r$ returns.
By Lemma~\ref{last},
	$ts \ge \ts{w}$
	and so $ts \ne [0 \ldots 0]$.
By Observation~\ref{obsreadtime},
	an operation $w'$ writes $(-,ts)$\XMP{this notation is a bit weird since w' writes ts instead of ts'} to $Val[-]$
	and by Corollary~\ref{wwi},
	 $w' \in \seq{}$.
By lines~\ref{followwrite1}--\ref{followwrite2} of $\f$,
	$r$ is after $w'$ in $\seq{}$.
Since $ts \ge \ts{w}$,
	there are two cases:

\textbf{Case 1:} $ts = \ts{w}$.
By Observation~\ref{onlyone},
	$w = w'$.
So $r$ is after $w = w'$ in $\seq{}$.

\textbf{Case 2:} $ts > \ts{w}$.
Since $r$ is after $w'$ in $\seq{}$, 
	to show $w$ is before $r$ in $\seq{}$,
	it is sufficient to show that $w$ is before $w'$ in $\seq{}$.

\begin{claim}
$w$ is before $w'$ in $\seq{}$.
\end{claim}

\begin{proof}
Since $w,w' \in \seq{}$,
	by Observation~\ref{leaderofbatch0},
 	there are $i$ and $j$
 	such that $w \in \B{i}$ and $w' \in \B{j}$.

\textbf{Case (a)}: $i \ne j$.
Since $w \in \B{i}$,
	 $w' \in \B{j}$,
	 $i \ne j$,
	 and $ts > \ts{w}$,
	 by Lemma~\ref{tsorder},
	 $j > i$ (otherwise $ts < \ts{w}$). 
Thus,
	by Observation~\ref{leaderofbatch1},
	$w$ is before $w'$ in $\seq{}$.

\smallskip

\textbf{Case (b)}: $i = j$.
Then $w, w'\in\B{i}$.
\begin{cclaim}\label{wrcc1}
$w' = w_i$.
\end{cclaim}
\begin{proof}
Since $ w \in \B{i}\subseteq \sC{i}$,
	by line~\ref{C} of $\f$,
	$w$ is active at time $t_i$.
Since $r$ starts after $w$ completes,
	$r$ starts after $t_i$.
Then by Lemma~\ref{last}
	$ts \ge \ts{w_i}$.
Since $w' \in \B{i}$,
	by Observation~\ref{leaderofbatch4},
	$ts \le \ts{w_i}$.
Thus, $ts = \ts{w_i}$ and by Observation~\ref{onlyone},
	$w' = w_i$.
\end{proof}
Since $ts \ne \ts{w}$,
	$w' \ne w$.
Thus, $w \ne w_i$.
Since $w \in \B{i}$, 
	by line~\ref{seqi1} of $\f$,
	$w$ is before $w_i=w'$ in $\seq{}$.

Thus,
	 in both cases (a) and (b),
	  $w$ is before $w'$ in $\seq{}$.
\end{proof}	
Therefore, in both cases 1 and 2, 
	$w$ is before $r$ in $\seq{}$.
\end{proof}

Note that a completed write operation $w$ writes to $Val[-]$ before it completes.
Thus, 
	Lemma~\ref{last1} immediately implies the following: 
\begin{corollary}\label{linearization3}
If a write operation $w$ completes before a read operation $r$ starts and $w, r \in \seq{}$,
	then $w$ is before $r$ in $\seq{}$.
\end{corollary}

\begin{lemma}\label{linearization4}
If a read operation $r$ completes before a read operation $r'$ starts and $r,r' \in \seq{}$,
	then $r$ is before $r'$ in $\seq{}$.
\end{lemma}

\begin{proof}
Assume a read operation $r$ completes before a read operation $r'$ starts and $r,r' \in \seq{}$.
Let $(-,ts)$ denote the value that $r$ returns
and $(-,ts')$ denote the value that $r'$ returns.

\textbf{Case A:} $ts' = ts$.
By lines~\ref{R}--\ref{S} of $\f$,
	$r$ and $r'$ are in the same sequence $\SR$ that is ordered by their start time.
Since $r'$ starts after $r$ completes,
	$r$ is before $r'$ in $\SR$. 
By lines~\ref{readinit2} and \ref{followwrite2},
	$r$ is before $r'$ in $\seq{}$.

\textbf{Case B:} $ts' \ne ts$.

\textbf{Subcase B.1}: $ts = [0, \ldots, 0]$.
By lines~\ref{readinit1}--\ref{readinit2} of $\f$,
	$r$ is before all the write operations in $\seq{}$.
Since $ts' \ne ts$,
	$ts' \ne [0, \ldots, 0]$.
By Observation~\ref{obsreadtime},
	 an operation $w'$ writes $(-,ts')$ to $Val[-]$
	 and by Corollary~\ref{wwi}, 
	$w' \in \seq{}$.
By lines~\ref{followwrite1}--\ref{followwrite2} of $\f$,
	$r'$ is after $w'$ in $\seq{}$.
Since $r$ is before $w'$ in $\seq{}$,
	$r$ is before $r'$ in $\seq{}$.

\textbf{Subcase B.2}:
	$ts > [0, \ldots, 0]$.
By Observation~\ref{obsreadtime},
	 an operation $w$ writes $(-,ts)$ to $Val[-]$
	and by Corollary~\ref{wwi}, 
	$w \in \seq{}$.
Since $r$ reads $(-,ts)$ from $Val[-]$
	and $r'$ starts after $r$ completes,
	$r'$ starts after $w$ writes $(-,ts)$ to $Val[-]$.
By Lemma~\ref{last},
	$ts' \ge ts$.
So $ts' \ge ts > [0, \ldots, 0]$.
By Observation~\ref{obsreadtime},
		an operation $w'$ writes $(-,ts')$ to $Val[-]$ and by Corollary~\ref{wwi}, 
	$w' \in \seq{}$.
Since $ts' \ne ts$, $w' \ne w$. 
By lines~\ref{followwrite1}--\ref{followwrite2} of $\f$,
	 $r'$ is after $w'$ before any subsequent write in $\seq{}$
	 and $r$ is after $w$ before any subsequent write in $\seq{}$. 
Since $r'$ starts after $w$ writes to $Val[-]$,
	by Lemma~\ref{last1},
	$w$ is before $r'$ in $\seq{}$.
Thus,
	$w$ is before $w'$ in $\seq{}$
	and so $r$ is before $w'$ in $\seq{}$.
Since $r'$ is after $w'$ in $\seq{}$,
	$r$ is before $r'$ in $\seq{}$.

Therefore,
	in both cases A and B,
	$r$ is before $r'$ in $\seq{}$.	
\end{proof}

From Lemma~\ref{linearization1}, 
	Lemma~\ref{linearization2},
	Corollary~\ref{linearization3},
	and Lemma~\ref{linearization4},
	we have the following:
\begin{corollary}\label{linearizationf}
For every two operations $o_1$ and $o_2$, 
	if $o_1$ completes before $o_2$ starts and $o_1,o_2 \in \seq{}$,
	then $o_1$ is before $o_2$ in $\seq{}$.
\end{corollary}

\begin{lemma}\label{l1}
$\seq{}$ contains all completed operations of $H$ and possibly some non-completed ones. 
\end{lemma}
\begin{proof}
By Corollary~\ref{wwi},
	$\seq{}$ contains all the completed write operations in $H$ and possibly some non-completed ones.
Let $r$ be any completed read operation in $H$.
Since $r$ is completed,
	 $r$ returns some value $(-,ts)$.
If $ts = [0,...0]$,
	by line~\ref{readinit2} of $\f$,
	$r$ is in $\seq{}$.
If $ts \ne [0,...0]$,
	by Observation~\ref{obsreadtime} and Corollary~\ref{wwi},
	there is a write operation $w$ in $\seq{}$ that writes $(v,ts)$.
So by lines~\ref{followwrite1} and \ref{followwrite2} of $\f$,
	$r$ is in $\seq{}$.
In both cases, $\seq{}$ contains the completed read operation $r$.
Thus, $\seq{}$ contains all completed operations of $H$ and possibly some non-completed ones.
\end{proof}

\begin{observation}\label{reg}
For any read operation $r$ in $\seq{}$,
		if no write precedes $r$ in $\seq{}$, then $r$ returns the initial value of the register;
		Otherwise, $r$ returns the value written by the last write that occurs before $r$ in $\seq{}$
\end{observation}

\begin{lemma}\label{fsatl2}\XMP{linearization -> linearization function}
$f$ is a linearization function of $\seth$.
\end{lemma}


\begin{proof}
Recall that:
	(1)~$\seq{}$ is the output of $\f$ ``executed'' on an arbitrary history
	$H \in \seth$ of $\Awsl$ (the implementation on MRMW registers), and
	(2)~$\f$ defines the linearization function $f$; in other words, $f(H) = \seq{}$.
Furthermore,
	$\seq{}$ satisfies properties~\ref{p1}, \ref{p2}, and \ref{p3} of Definition~\ref{defl},
	by Lemma~\ref{l1},
	 Corollary~\ref{linearizationf},
	and Observation~\ref{reg},
	respectively. 
Thus, $f$ is a linearization function of $\seth$.
\end{proof}



To prove that $f$ is a $\str$ function for the set of histories $\seth$ of $\Awsl$,
	it now suffices to show that $f$ also satisfies property~(P) of Definition~\ref{defwsl}, namely:

\begin{restatable}{lemma}{propp}\label{fsatp}
For any $G, H \in \seth$, 
	if $G$ is a prefix of $H$, 
	then the sequence of write operations in $f(G)$\XMP{f(G) here is SEQ} 
	is a prefix of the sequence of write operations in $f(H)$.
\end{restatable}

\begin{proof}
Consider two histories $G \in \seth$ and $H \in \seth$ such that $G$ is a prefix of $H$.
If $G=H$, the lemma trivially holds.
Henceforth we assume that $G$ is a proper prefix of $H$, and so $G$ is finite.

In the following we use superscripts $G$ and $H$ to distinguish the value of the variables of $\f$ (that defines the linearization function $f$) when it is applied to input $G$ or $H$.
For example,
	$\wseq{i}^G$ denotes the value of $\wseq{i}$ of $\f$ on input $G$,
		and $\wseq{i}^H$ denotes the value of $\wseq{i}$ of $\f$ on input $H$.
Let $w_1 , w_2, \ldots, w_k$ be the operations that write to $Val[-]$ in history $G\in \seth$.
Since $G$ is a prefix of $H$, 
	then $w_1 , w_2, \ldots, w_k$ are also the first $k$ operations that write to $Val[-]$ in history $H \in \seth$.
Furthermore,
	for $1\le i \le k$,
	$w_i$ writes to $Val[-]$ at the same time in $G$ and $H$,
	say at time $t_i$.
\begin{claim}\label{seqgh}
	$\wseq{k}^G = \wseq{k}^H$.
\end{claim}

\begin{proof}
We use induction to prove that for all $0 \le i \le k$,
	$\wseq{i}^G = \wseq{i}^H$.
\begin{itemize}
\item \textsc{Base Case:} $i = 0$.
By line~\ref{seq0} of $\f$,
	$\wseq{0}^G = \wseq{0}^H = ()$.
The claim holds.

\item \textsc{Inductive Step:}
assume for some $0 \le i < k$,
	$\wseq{i}^G = \wseq{i}^H$ (IH).

\textbf{Case 1}:
	$w_{i+1} \in \wseq{i}^G$.
By (IH),
	$w_{i+1} \in \wseq{i}^H$.
By line~\ref{seqi2} of $\f$,
	$\wseq{i+1}^G= \wseq{i}^G$ and $\wseq{i+1}^H= \wseq{i}^H$.
So by (IH),
	$\wseq{i+1}^G=\wseq{i}^G= \wseq{i}^H= \wseq{i+1}^H$.

\textbf{Case 2}:
	$w_{i+1}\notin \wseq{i}^G$.
By (IH),
	$w_{i+1}\notin \wseq{i}^H$.
Since $G$ is a prefix of $H$,
	the set of write operations that are active at time $t_{i+1}$ is the same in $G$ and $H$. 
So by (IH) and line~\ref{C} of $\f$,
	$\sC{i+1}^G= \sC{i+1}^H$.
Furthermore,
	for all $w \in \sC{i+1}^G= \sC{i+1}^H$,
	$\pts{w}{i(G)} = \pts{w}{i(H)}$.
By line~\ref{Bt} of $\f$,
	$\B{i+1}^G= \B{i+1}^H$.
So the sequence of operations $w \in \B{i+1}^G$ in increasing order of $\pts{w}{i(G)}$ is equal to the sequence of operations $w \in \B{i+1}^H$ in increasing order of $\pts{w}{i(H)}$.
Thus, by (IH) and line~\ref{seqi1} of $\f$,
	$\wseq{i+1}^G= \wseq{i+1}^H$.

In both cases, the claim holds.
\end{itemize}
\end{proof}
Thus,
	since $\wseq{k}^H$ is a prefix of $\wseq{}^H$\XMP{make a new observation?},
	$\wseq{k}^G$ is a prefix of $\wseq{}^H$.
Since $w_k$ is the last operation that writes to $Val[-]$ in $G$,
	$\wseq{}^G= \wseq{k}^G$.
So $\wseq{}^G$ is a prefix of $\wseq{}^H$.
Note that $\wseq{}^G$ and $\wseq{}^H$ are the
	sequences of write operations in $f(G)$ and $f(H)$, respectively.
So the lemma holds.
\end{proof}

\rmv{
Intuitively, the above lemma holds, because $\f$ scans the input history $H$ (of Algorithm 1) \emph{by increasing time},
	and it linearizes the write operations as follows:
	when some write to $Val[-]$ occurs, say at time $t_i$,
	\emph{without looking ``ahead'' at what occurs in $H$ after time $t_i$},
	$\f$ linearizes a (possibly empty) batch of write operations
	that started before time $t_i$, 
	by \emph{appending} them to the sequence of write operations that it previously linearized.
The proof of Lemma~\ref{fsatp} is straightforward and relegated to Appendix~\ref{app1}.
}

By Lemmas~\ref{fsatl2}--\ref{fsatp},
	the function $f$ defined by $\f$ is a $\str$ function for the set of histories $\seth$
	of $\Awsl$.
Therefore:

\algowsl*



\section{Proof of Theorem~\ref{lamportisl}}\label{lamport1}
We now prove Theorem~\ref{lamportisl} in Section~\ref{sectionlamport},
	namely, $\Al$ is a linearizable implementation of a MWMR register from SWMR registers.
Let $\seth$ be the set of histories of the $\Al$. 
Consider an arbitrary history $H \in \seth$.
We first note that the timestamps of write operations respect the causal order of write events (they are Lamport clocks for these events).

\begin{lemma}\label{lampo}

If operations $w$ and $w'$ write $(-,ts)$ and $(-,ts')$ to $Val[-]$, respectively, and $w$ writes to $Val[-]$ before $w'$ starts, then $ts < ts'$.
\end{lemma}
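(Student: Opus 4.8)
The plan is to first record a monotonicity fact about each base register. Since $Val[i]$ is SWMR and is written only by $p_i$, who executes its operations sequentially, whenever $p_i$ performs a write operation the value it reads from $Val[i]$ in the collect loop (lines~\ref{wcollectts1}--\ref{wcollectts3}) is exactly the timestamp of $p_i$'s own most recent prior write to $Val[i]$ (or the initial $\langle 0,i\rangle$ if there is none). Writing this timestamp as $ts_i$, in line~\ref{nsq} $p_i$ sets $\nsq \gets \max\{ts_1.sq,\dots,ts_n.sq\}+1 \ge ts_i.sq+1 > ts_i.sq$, and then in line~\ref{writeAl} it writes $(-,\langle \nsq, i\rangle)$ into $Val[i]$. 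Hence the sequence numbers written to $Val[i]$ over time are strictly increasing, and in particular the value of $Val[i].sq$ is non-decreasing over time.

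Now let $k$ be the writer of $w$ and $\ell$ the writer of $w'$, so that $ts = \langle ts.sq,\,k\rangle$ and $ts' = \langle ts'.sq,\,\ell\rangle$. By hypothesis, $w$ performs its write of $(-,ts)$ into $Val[k]$ (line~\ref{writeAl}) before $w'$ is invoked; since $w'$ reads every $Val[i]$, and in particular $Val[k]$, in its collect loop (lines~\ref{wcollectts1}--\ref{wcollectts3}) \emph{after} it is invoked, $w'$ reads $Val[k]$ strictly after $w$'s write to $Val[k]$. By the monotonicity fact above, the sequence number $w'$ reads from $Val[k]$ is at least $ts.sq$. Therefore, when $w'$ computes $\nsq$ in line~\ref{nsq} it obtains $\nsq \ge ts.sq + 1 > ts.sq$, so $ts'.sq = \nsq > ts.sq$. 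Because timestamps are ordered lexicographically with the sequence number as the dominant component, $ts'.sq > ts.sq$ immediately gives $ts < ts'$, irrespective of the process ids $k$ and $\ell$.

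I do not anticipate a genuine obstacle; the two points needing care are (i) justifying that $p_i$'s read of $Val[i]$ inside a write returns precisely the timestamp of $p_i$'s previous write to $Val[i]$ — this uses that $Val[i]$ is single-writer and that each process runs its operations one at a time — and (ii) deriving the temporal ordering ``$w'$ reads $Val[k]$ after $w$ writes $Val[k]$'' from the hypothesis that $w$'s write step precedes $w'$'s invocation; both are routine. This lemma is the crux of the linearizability argument for $\Al$: the remaining steps for Theorem~\ref{lamportisl} are to linearize the write operations in increasing timestamp order (a total order, as distinct writes get distinct timestamps with the pid breaking ties) and to insert each read just after the write whose value it returns, ordered among themselves by start time, then to check properties~\ref{p1}--\ref{p3} of Definition~\ref{defl} using Lemma~\ref{lampo} together with a ``a read returns a timestamp at least that of any write completing before it starts'' observation analogous to Lemma~\ref{last}.
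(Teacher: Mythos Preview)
Your proof is correct and follows essentially the same approach as the paper's: both argue that $w'$ reads $Val[k]$ after $w$'s write to it, use monotonicity of the sequence numbers stored in each $Val[i]$ to conclude $w'$ sees a sequence number at least $ts.sq$, and then use the $+1$ in line~\ref{nsq} to get $ts'.sq > ts.sq$ and hence $ts < ts'$. The paper simply asserts ``the timestamp in each $Val[-]$ are non-decreasing'' whereas you actually justify this from the single-writer property, which is a nice addition but not a different argument.
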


\begin{proof}
Suppose operations $\W{}$ and $\W{}'$ write $(-,ts)$ and $(-,ts')$ to $Val[-]$ respectively and
	$\W{}$ writes to $Val[-]$ before $\W{}'$ starts.
Then $\W{}'$ reads all $Val[-]$ after $\W{}$ writes $(-,ts)$ to $Val[-]$.
Since the timestamp in each $Val[-]$ are non-decreasing,
	$\W{}'$ reads $(-,ts'')$ from $Val[-]$ for some $ts'' \ge ts$.
By lines~\ref{nts}--\ref{writeAl} of $\Al$,
	$ts'.sq > ts''.sq$ and so $ts' > ts''$.
So $ts \le ts'' < ts'$.
\end{proof}

\begin{observation}\label{onlyone2}
The tuples $(v,ts)$ and $(v',ts')$ written to $Val[-]$ by two distinct write operations have distinct timestamps, i.e.,
	$ts \ne ts'$.
\end{observation}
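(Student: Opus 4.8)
The plan is a short case split on whether the two write operations overlap in time, exploiting the shape of the timestamps produced by $\Al$: the tuple that a write operation of $p_k$ stores into $Val[k]$ in line~\ref{writeAl} carries the timestamp $\nts=\langle\nsq,k\rangle$, whose second component is the writer's identity.

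First I would treat the case in which the two write operations $w$ and $w'$ are concurrent. Since a process executes its own operations sequentially, concurrent operations must belong to distinct processes, say $p_k$ and $p_{k'}$ with $k\ne k'$; then the pid components of $ts$ and $ts'$ differ, and so $ts\ne ts'$. Second, if $w$ and $w'$ are not concurrent, then one of them --- say $w$ --- precedes the other, and in particular $w$ performs its store into $Val[-]$ (line~\ref{writeAl}) before $w'$ is invoked. This is exactly the hypothesis of Lemma~\ref{lampo}, which therefore gives $ts<ts'$ and hence $ts\ne ts'$ (symmetrically if $w'$ precedes $w$). The two cases are exhaustive, so the observation follows.

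There is essentially no obstacle here; the only points worth a word of justification are (i) that the operations of a single process are totally ordered in time, so that concurrent writes necessarily come from different writers, and (ii) that a preceding write operation's store into $Val[-]$ genuinely occurs before the invocation of the later operation, so that Lemma~\ref{lampo} applies. Both are immediate from the model and from the code of $\Al$.
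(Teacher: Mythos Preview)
Your argument is correct. The paper itself states this as an \emph{observation} and gives no proof at all, treating it as immediate from the structure of the timestamps in $\Al$.

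Your decomposition (concurrent vs.\ non-concurrent) works, but it is slightly more roundabout than what the paper presumably has in mind. The most direct split is by \emph{writer identity}: if the two writes are by different processes $p_k$ and $p_{k'}$, the $pid$ components of their timestamps differ; if they are by the same process $p_k$, then they are non-concurrent, and the later one reads $Val[k]$ (which still holds at least the earlier write's $sq$) before setting $\nsq$ to strictly exceed it in line~\ref{nsq}, so the $sq$ components differ. This avoids invoking Lemma~\ref{lampo} altogether. Your route through Lemma~\ref{lampo} is perfectly valid---the lemma precedes the observation and does not depend on it---and has the minor virtue of reusing an already-proved fact; the direct split has the minor virtue of making the observation self-contained. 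Either way, the content is trivial, which is why the paper does not bother.
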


\begin{observation}\label{rwrelate}
If a read operation $\RD{}$ returns $(v,ts)$ such that $ts.sq \ne 0$,
	then there is a unique write operation~$\W{}$ that writes $(v,ts)$ to $Val[-]$.
\end{observation}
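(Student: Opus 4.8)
The plan is to prove existence and uniqueness separately: uniqueness will follow immediately from Observation~\ref{onlyone2}, while existence will follow from the atomicity of the base SWMR registers $Val[-]$ together with the fact that in $\Al$ each register $Val[j]$ is written only by $p_j$ and only in line~\ref{writeAl}.

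First I would establish existence. Since $\RD{}$ returns $(v,ts)$, by lines~\ref{tsmax}--\ref{returnmax} of $\Al$ this tuple is the value $(v_j,ts_j)$ that $\RD{}$ read from some register $Val[j]$ in line~\ref{collectts2} (namely, the index $j$ achieving the lexicographic maximum). I would then observe that only writer $p_j$ ever writes to $Val[j]$, and by lines~\ref{nsq}--\ref{writeAl} every tuple $p_j$ writes carries a sequence number $\nsq = \max\{\ldots\}+1 \ge 1$; moreover $Val[j]$ is initialized to $(0,\langle 0,j\rangle)$, whose sequence number is $0$. Hence, by the atomicity of the base register $Val[j]$, the value $(v,ts)$ that $\RD{}$ reads from it is either the initial value $(0,\langle 0,j\rangle)$ or the value of some write operation of $p_j$ that preceded the read. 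Since $ts.sq \ne 0$ by hypothesis, the read value is \emph{not} the initial value, so there must be a write operation $\W{}$ of $p_j$ that wrote $(v,ts)$ to $Val[j]$ in line~\ref{writeAl}. This gives a write operation writing $(v,ts)$ to $Val[-]$.

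Next I would establish uniqueness. Suppose two write operations $\W{}$ and $\W{}'$ both write $(v,ts)$ to $Val[-]$. If $\W{} \ne \W{}'$, then by Observation~\ref{onlyone2} the timestamps they write to $Val[-]$ are distinct, contradicting the fact that both write the same timestamp $ts$. Hence $\W{} = \W{}'$, so the write operation writing $(v,ts)$ is unique, completing the proof.

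There is no real difficulty here; the only step requiring care is the existence argument, where I must explicitly invoke the atomicity of the base SWMR register $Val[j]$ to conclude that a read returning a value with $sq \ne 0$ returns the value of some preceding write rather than the initial value, and then use that $p_j$ is the sole writer of $Val[j]$. Uniqueness is then a one-line consequence of Observation~\ref{onlyone2}.
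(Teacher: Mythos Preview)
Your proof is correct. The paper states this result as an observation without proof, treating it as immediate from the algorithm's structure; your argument is precisely the natural expansion of that implicit reasoning (existence from atomicity of $Val[j]$ and the initialization with sequence number $0$, uniqueness from Observation~\ref{onlyone2}).
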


\begin{observation}\label{readinit}
If a read operation $\RD{}$ returns $(-,ts)$ such that $ts.sq = 0$
	then $ts = \langle 0,n\rangle$.
\end{observation}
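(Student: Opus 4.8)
The plan is to argue that the only timestamps with sequence number $0$ that can ever appear in any register $Val[i]$ are the initial ones, $\langle 0,i\rangle$, and that among these the lexicographic maximum is $\langle 0,n\rangle$. First I would establish the auxiliary fact that every tuple $(v,ts)$ that a write operation writes into $Val[-]$ in line~\ref{writeAl} of $\Al$ satisfies $ts.sq \ge 1$. This follows from lines~\ref{nsq}--\ref{nts}: the writer sets $\nsq \gets \max\{ts_1.sq,\ldots,ts_n.sq\}+1$, and by a trivial induction on the order in which writes occur, every value stored in any $Val[i]$ has a non-negative sequence number (the initial value $(0,\langle 0,i\rangle)$ has $sq=0$, and every value subsequently written has $sq\ge 1$ by the inductive hypothesis applied to the values the writer read). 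Hence $\max\{ts_1.sq,\ldots,ts_n.sq\}\ge 0$ and so $\nsq\ge 1$. Consequently, the only tuple with sequence number $0$ that can ever occupy $Val[i]$ is its initial value $(0,\langle 0,i\rangle)$.

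Next I would use this to finish the argument. Suppose a read operation $\RD{}$ returns $(-,ts)$ with $ts.sq=0$. By lines~\ref{collectts1}--\ref{collectts2} of $\Al$, $\RD{}$ reads $(v_i,ts_i)$ from $Val[i]$ for each $1\le i\le n$, and by lines~\ref{tsmax}--\ref{returnmax} it returns $(v_j,ts_j)$ where $ts_j=\max\{ts_1,\ldots,ts_n\}$ in lexicographic order, so $ts=ts_j$. Since lexicographic order compares the $sq$ component first, and every $ts_i$ read has $ts_i.sq\ge 0$ (as noted above), $ts_j.sq=0$ forces $ts_i.sq=0$ for every $i$. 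By the auxiliary fact, each such $(v_i,ts_i)$ is the initial value of $Val[i]$, i.e.\ $ts_i=\langle 0,i\rangle$. Therefore $ts=\max\{\langle 0,1\rangle,\langle 0,2\rangle,\ldots,\langle 0,n\rangle\}=\langle 0,n\rangle$ (the maximum being determined by the $pid$ component since all $sq$ components are equal), which is what we wanted.

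I do not expect a real obstacle here; the statement is essentially bookkeeping about Lamport clocks. The only point requiring a little care is the induction showing that every written timestamp has $sq\ge 1$ (equivalently, that sequence numbers in the registers never become negative), and the observation that lexicographic comparison on pairs $\langle sq,pid\rangle$ is dominated by the first coordinate, so that a maximum with $sq=0$ certifies that all the compared timestamps have $sq=0$. Both are immediate from the code of $\Al$ and the definition of lexicographic order (Definition~\ref{tscompare}-style ordering already used for the read rule).
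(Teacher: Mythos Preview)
Your argument is correct. The paper states this as an Observation without proof, treating it as immediate from the code of $\Al$; your write-up simply spells out the details (that every timestamp written in line~\ref{writeAl} has $sq\ge 1$, so a returned $ts$ with $ts.sq=0$ forces every $ts_i$ to be the initial $\langle 0,i\rangle$, whose lexicographic maximum is $\langle 0,n\rangle$).
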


Observation~\ref{readinit} implies:
\begin{observation}\label{readinit0}
If a read operation $\RD{}$ returns $(-,ts)$ and a read operation $\RD{}'$ returns $(-,ts')$ such that $ts' > ts$,
	then $ts'.sq > 0$.
\end{observation}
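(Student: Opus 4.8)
The plan is to prove Observation~\ref{readinit0} directly by contradiction, using Observation~\ref{readinit} as the only real ingredient. Recall that in $\Al$ every timestamp has the form $\langle sq, pid \rangle$, that timestamps are compared lexicographically, and that sequence numbers are always non-negative (the initial ones are $0$ and every written one is at least $1$, since a write sets $\nsq$ to one more than a maximum of observed sequence numbers). Observation~\ref{readinit} tells us that whenever a read returns a timestamp whose sequence number is $0$, that timestamp must in fact be $\langle 0, n \rangle$, the lexicographic maximum among the initial timestamps.

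First I would assume, for contradiction, that $ts'.sq = 0$. Applying Observation~\ref{readinit} to the read $\RD{}'$ (which returns $(-,ts')$), this forces $ts' = \langle 0, n \rangle$. Next I would invoke the hypothesis $ts' > ts$, which in lexicographic order means $ts < \langle 0, n \rangle$. Since every sequence number is non-negative, there are only two possibilities for $ts.sq$: if $ts.sq \ge 1$, then the first lexicographic component of $ts$ strictly exceeds that of $\langle 0, n \rangle$, giving $ts > \langle 0, n \rangle$ and contradicting $ts < \langle 0, n \rangle$; hence we must have $ts.sq = 0$.

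Finally I would apply Observation~\ref{readinit} a second time, now to the read $\RD{}$, to conclude $ts = \langle 0, n \rangle = ts'$. This contradicts $ts' > ts$, which requires $ts' \ne ts$. The contradiction establishes that $ts'.sq > 0$, as claimed. The argument is entirely routine and I do not expect any genuine obstacle; the only points to handle carefully are the direction of the lexicographic comparison (so that $ts.sq \ge 1$ really does give $ts > \langle 0, n\rangle$) and the two distinct applications of Observation~\ref{readinit}, one to each read.
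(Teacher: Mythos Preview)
Your proof is correct and follows the paper's approach: the paper states only that Observation~\ref{readinit} implies Observation~\ref{readinit0}, without spelling out the argument, and your contradiction argument is a valid expansion of exactly that implication.
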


\begin{definition}\label{Lin}
Let $f$ be a function that maps the arbitrary history $H \in \seth$ of $\Al$
	to a sequential history $f(H) = \lS$ such that the following holds:

\begin{enumerate}[label=(\roman*)]
\item \label{allcw}  $\lS$ contains:

(a) all the write operations that write to $Val[-]$ (line~\ref{writeAl} of $\Al$) in $H$, and

(b) all the read operations that complete, i.e., return some value $(v,ts)$ (line~\ref{returnmax} of $\Al$) in~$H$.

\item \label{worder} If two write operations $\W{}$ and $\W{}'$ write
	$(-,ts)$ and $(-,ts')$ to $Val[-]$ in $H$ such that $ts <ts'$,
	then $\W{}$ is before $\W{}'$ in $\lS$.

\item\label{rorder1}  If a read operation $r$ returns some $(-,ts)$ with $ts.sq = 0$ in $H$,
then $r$ occurs in $\lS$ before every write operation $w$ in $\lS$.

\item\label{rorder2} If a read operation $r$ returns some $(-,ts)$ with $ts.sq \neq 0$ in $H$,
then $r$ occurs in $\lS$ as follows:

 (a) after the unique write operation $w$ that writes $(-,ts)$ in $\lS$
     (the operation $w$ is well-defined by Observation~\ref{rwrelate}), and
     
 (b) before every other subsequent write operation in $\lS$.

\item\label{rorder3}  If two read operations $r$ and $r'$ read some $(-,ts)$, and $r$ completes before $r'$ starts in $H$,
then $r$ occurs before $r'$ in $\lS$.

\end{enumerate}

\end{definition}


We now show that $f(H) = \lS$ satisfies properties~\ref{p1}, \ref{p2}, and \ref{p3} of Definition~\ref{defl},
	and so $f$ is a linearization function for the set of histories $\seth$ of $\Al$.

\begin{lemma}\label{l1_2}
$\lS$ contains all completed operations of $H$ and possibly some non-completed ones.
\end{lemma}

\begin{proof}
This follows immediately by~\ref{allcw} of Definition~\ref{Lin} and the fact that every completed write operation writes to $Val[-]$ (line~\ref{writeAl} of $\Al$) before it completes in $H$.
\end{proof}

By Lemma~\ref{lampo} and~\ref{worder} of Definition~\ref{Lin}, we have:
	
\begin{corollary}\label{wAw}
If a write operation $\W{}$ writes to $Val[-]$ before a write operation $\W{}'$ starts and $\W{},\W{}'\in \lS$
	then $\W{}$ is before $\W{}'$ in $\lS$.
\end{corollary}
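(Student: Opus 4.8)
The plan is to reduce the claim to a comparison of the two write operations' timestamps. Since $\W{},\W{}'\in\lS$ and, by property~\ref{allcw} of Definition~\ref{Lin}, the write operations appearing in $\lS$ are exactly those that execute line~\ref{writeAl} of $\Al$ (i.e., that write some tuple to a register $Val[-]$), both $\W{}$ and $\W{}'$ write tuples $(-,ts)$ and $(-,ts')$ to $Val[-]$ in the history $H$. Hence the corollary will follow once I show $ts<ts'$ and then invoke the timestamp-ordering clause in the definition of the linearization.

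First I would appeal to Lemma~\ref{lampo}: its hypothesis --- that $\W{}$ writes to $Val[-]$ before $\W{}'$ starts --- is precisely the hypothesis of the corollary, and its conclusion is exactly $ts<ts'$. Then I would apply property~\ref{worder} of Definition~\ref{Lin} to the two write operations $\W{}$ and $\W{}'$ with $ts<ts'$; this yields immediately that $\W{}$ occurs before $\W{}'$ in $\lS$, which is the claim.

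I do not expect any real obstacle here: the corollary is a one-step packaging of Lemma~\ref{lampo} (which establishes that the timestamps produced by $\Al$ behave like Lamport clocks with respect to the ``writes-to-$Val[-]$-before-starts'' relation) together with the way the function $f$ was defined to order write operations by increasing timestamp. The only point worth a sentence of care is the remark that $\W{}'$ genuinely writes to $Val[-]$ --- a still-pending write that never reaches line~\ref{writeAl} would not be placed in $\lS$ --- but this is forced by $\W{}' \in \lS$ and property~\ref{allcw} of Definition~\ref{Lin}.
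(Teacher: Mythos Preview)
Your proposal is correct and matches the paper's approach exactly: the paper derives this corollary in one line from Lemma~\ref{lampo} together with property~\ref{worder} of Definition~\ref{Lin}. Your additional remark that $\W{}'\in\lS$ forces $\W{}'$ to have written to $Val[-]$ (via property~\ref{allcw}) is a worthwhile clarification that the paper leaves implicit.
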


\begin{corollary}\label{ww}
If a write operation $\W{}$ completes before a write operation $\W{}'$ starts and $\W{},\W{}'\in \lS$,
	then $\W{}$ is before $\W{}'$ in $\lS$.
\end{corollary}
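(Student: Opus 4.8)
The plan is to derive Corollary~\ref{ww} immediately from Corollary~\ref{wAw}, using only the structural fact that in $\Al$ a write operation stores its value into a base register $Val[-]$ before it returns. First I would note that a completed write operation $\W{}$ of $\REG$ necessarily executes line~\ref{writeAl} of $\Al$ (the step that writes to $Val[k]$) strictly before its response; hence $\W{}$ writes to $Val[-]$ before it completes in $H$. Combining this with the hypothesis that $\W{}$ completes before $\W{}'$ starts, it follows that $\W{}$ writes to $Val[-]$ before $\W{}'$ starts. Since in addition $\W{},\W{}' \in \lS$, I would then apply Corollary~\ref{wAw} to the pair $\W{},\W{}'$ and conclude that $\W{}$ is before $\W{}'$ in $\lS$, which is exactly the claim.

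There is essentially no obstacle here: Corollary~\ref{ww} is just Corollary~\ref{wAw} with a slightly weaker hypothesis, and passing from ``$\W{}$ completes'' to ``$\W{}$ writes to $Val[-]$'' is immediate from the code of $\Al$. The one point I would make explicit is that the assumption ``$\W{}$ completes before $\W{}'$ starts'' already guarantees that $\W{}$ reaches line~\ref{writeAl}; alternatively, one can observe that by part~\ref{allcw} of Definition~\ref{Lin} a write operation belongs to $\lS$ precisely because it writes to $Val[-]$, so the membership $\W{} \in \lS$ itself supplies this fact. This corollary, together with its read--write, write--read, and read--read counterparts, is what will be used to verify property~\ref{p2} of Definition~\ref{defl} for the sequential history $\lS = f(H)$ in the remainder of the proof of Theorem~\ref{lamportisl}.
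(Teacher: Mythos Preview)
Your proposal is correct and matches the paper's approach: the paper states Corollary~\ref{ww} immediately after Corollary~\ref{wAw} without any explicit proof, treating it as an obvious consequence, and your derivation via ``a completed write has already executed line~\ref{writeAl}'' is exactly the intended step. Your additional remark that membership $\W{}\in\lS$ already forces $\W{}$ to have written to $Val[-]$ (by part~\ref{allcw} of Definition~\ref{Lin}) is a nice alternative justification that the paper does not spell out.
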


\begin{lemma}\label{rw}
If a read operation $\RD{}$ completes before a write operation $\W{}$ starts and $\RD{},\W{} \in \lS$,
	then $\RD{}$ is before $\W{}$ in $\lS$.
\end{lemma}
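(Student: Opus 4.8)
The plan is to split on the timestamp $(-,ts)$ returned by the read operation $\RD{}$, matching the two cases that appear in clauses~\ref{rorder1} and~\ref{rorder2} of Definition~\ref{Lin}. If $ts.sq = 0$, then clause~\ref{rorder1} places $\RD{}$ in $\lS$ before \emph{every} write operation of $\lS$; since $\W{}\in\lS$ is a write operation, $\RD{}$ is before $\W{}$ in $\lS$ and we are done. So the real work is in the case $ts.sq \neq 0$.

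Suppose $ts.sq \neq 0$. By Observation~\ref{rwrelate} there is a unique write operation $\W{}'$ that writes $(-,ts)$ to $Val[-]$; by clause~\ref{allcw} of Definition~\ref{Lin} we have $\W{}'\in\lS$, and by clause~\ref{rorder2}, $\RD{}$ occurs in $\lS$ immediately after $\W{}'$ and before every write operation of $\lS$ that comes after $\W{}'$. Hence it suffices to show that $\W{}'$ precedes $\W{}$ in $\lS$. For this I would show that $\W{}'$ writes its tuple to $Val[-]$ before $\W{}$ starts, and then invoke Corollary~\ref{wAw} (with $\W{}',\W{}\in\lS$); note that this hypothesis already forces $\W{}'\neq\W{}$, since a write cannot start after it has performed its write to $Val[-]$. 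Combining ``$\W{}'$ before $\W{}$ in $\lS$'' with clause~\ref{rorder2} then yields that $\RD{}$ is before $\W{}$ in $\lS$, which is what we want.

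The key step — and the main obstacle — is to establish that $\W{}'$ writes to $Val[-]$ before $\RD{}$ completes (then, since $\RD{}$ completes before $\W{}$ starts, $\W{}'$ writes to $Val[-]$ before $\W{}$ starts, as needed). The argument I would give: since $\RD{}$ returns $(-,ts)$, by lines~\ref{collectts2} and~\ref{returnmax} of $\Al$ the read $\RD{}$ obtained $(-,ts)$ from some register $Val[k]$ at a point before $\RD{}$ completes; because $ts.sq \neq 0$, the tuple $(-,ts)$ is not the initial value of $Val[k]$, so it was placed in $Val[k]$ by the line~\ref{writeAl} step of some write operation, and by the uniqueness in Observation~\ref{rwrelate} (see also Observation~\ref{onlyone2}) that write operation is $\W{}'$. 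Since $Val[k]$ is a SWMR register written sequentially by a single process, the fact that $Val[k]$ holds $(-,ts)$ when $\RD{}$ reads it means $\W{}'$ executed its write to $Val[k]$ no later than that read, hence before $\RD{}$ completes. This closes the gap and completes the proof.
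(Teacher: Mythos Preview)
Your proof is correct and follows essentially the same approach as the paper's: both split on whether the read's timestamp has $sq=0$, handle the $sq=0$ case immediately via clause~\ref{rorder1}, and in the $sq\neq 0$ case identify the unique write $\W{}'$ that produced the returned tuple, argue that $\W{}'$ wrote to $Val[-]$ before $\RD{}$ completed (hence before $\W{}$ started), and invoke Corollary~\ref{wAw} to conclude $\W{}'$ precedes $\W{}$ in $\lS$. Your justification for why $\W{}'$'s write to $Val[-]$ precedes $\RD{}$'s completion is more explicit than the paper's one-line version, but the underlying reasoning is identical.
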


\begin{proof}
Assume a read operation $\RD{}$ completes before a write operation $\W{}$ starts and $\RD{},\W{} \in \lS$.
Since $\W{} \in \lS$,
	by~\ref{allcw},
	$\W{}$ writes some $(-,ts)$ to $Val[-]$.
Let $(-,ts')$ denote the value that $\RD{}$ returns.

\textbf{Case 1:} $ts'.sq = 0$.
By~\ref{rorder1} of Definition~\ref{Lin},
	$\RD{}$ is before all the write operations in $\lS$.
So  $\RD{}$ is before $\W{}$ in $\lS$.

\textbf{Case 2:} $ts'.sq \ne 0$.
By~\ref{rorder2} of Definition~\ref{Lin},
	there is a unique write operation $\W{}' \in \lS$ that writes $(-,ts')$ to $Val[-]$
	such that $\RD{}$ is after $\W{}'$ and before any subsequent write operations in $\lS$.
Since $\RD{}$ reads $(-,ts')$ from $Val[-]$ and
	$\W{}$ starts after $\RD{}$ completes,
	$\W{}$ starts after $\W{}'$ writes $(-,ts')$ to $Val[-]$.
By Corollary~\ref{wAw},
	$\W{}'$ is before $\W{}$ in $\lS$.
So $\RD{}$ is before $\W{}$ in~$\lS$.
\end{proof}

\begin{lemma}\label{wr}
If a write operation $\W{}$ completes before a read operation $\RD{}$ starts and $\W{},\RD{} \in \lS$,
	then $\W{}$ is before $\RD{}$ in $\lS$.
\end{lemma}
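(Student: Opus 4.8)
The plan is to track the timestamp that $\W{}$ writes and compare it with the timestamp that $\RD{}$ returns, then invoke the defining properties of $f$ in Definition~\ref{Lin}. First I would note that, since $\W{}\in\lS$ is a write operation, property~\ref{allcw} of Definition~\ref{Lin} (together with the fact that every write operation in $\lS$ writes to $Val[-]$) gives that $\W{}$ writes some tuple $(-,ts_w)$ to $Val[-]$, say into $Val[k]$; moreover, by lines~\ref{nsq}--\ref{writeAl} of $\Al$, $ts_w.sq\ge 1$, so in particular $ts_w.sq\ne 0$. Let $(-,ts')$ denote the value that $\RD{}$ returns.

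Next I would establish the key monotonicity fact, which is the analogue for $\Al$ of Lemma~\ref{last}: because $\RD{}$ starts after $\W{}$ completes, $\RD{}$ reads $Val[k]$ (in lines~\ref{collectts1}--\ref{collectts3}) strictly after $\W{}$ has written $(-,ts_w)$ into $Val[k]$; since $Val[k]$ is written only by $p_k$ and each new sequence number that $p_k$ forms strictly exceeds every sequence number it has seen (this is exactly the argument used in the proof of Lemma~\ref{lampo}), the successive timestamps stored in $Val[k]$ are non-decreasing, so $\RD{}$ reads some $(-,ts'')$ from $Val[k]$ with $ts''\ge ts_w$. As $\RD{}$ returns the lexicographic maximum over all $Val[-]$ (lines~\ref{tsmax}--\ref{returnmax}), we get $ts'\ge ts''\ge ts_w$.

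Then I would split into two cases. If $ts'=ts_w$: by Observation~\ref{onlyone2} the unique write operation writing $(-,ts')$ is $\W{}$ itself, and since $ts'.sq=ts_w.sq\ne 0$, property~\ref{rorder2} of Definition~\ref{Lin} places $\RD{}$ after $\W{}$ in $\lS$, as desired. If $ts'>ts_w$: since $ts_w.sq\ge 1$, lexicographic comparison forces $ts'.sq\ne 0$, so by Observation~\ref{rwrelate} there is a unique write operation $\W{}'$ that writes $(-,ts')$ to $Val[-]$, and $\W{}'\in\lS$ by property~\ref{allcw}. Property~\ref{rorder2} then puts $\RD{}$ after $\W{}'$ and before any write operation subsequent to $\W{}'$ in $\lS$, while property~\ref{worder} (applied to $ts_w<ts'$) puts $\W{}$ before $\W{}'$ in $\lS$; chaining these, $\W{}$ is before $\RD{}$ in $\lS$. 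In both cases $\W{}$ precedes $\RD{}$ in $\lS$, which is the claim.

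The main obstacle is making precise the monotonicity of the timestamps stored in each $Val[k]$ — that a later read of $Val[k]$ never observes a timestamp smaller than one written there earlier. This is the only part that looks into the code of $\Al$ rather than into Definition~\ref{Lin}; it is essentially the same reasoning already carried out in Lemma~\ref{lampo}, so it is routine, but it is the crux of the argument. Once that is in hand, the rest is a direct, mechanical application of properties~\ref{worder} and~\ref{rorder2} of $f$ and of Observations~\ref{onlyone2} and~\ref{rwrelate}.
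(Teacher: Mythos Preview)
Your proposal is correct and follows essentially the same approach as the paper: establish $ts'\ge ts_w$ via monotonicity of the timestamps in $Val[k]$, then split on $ts'=ts_w$ versus $ts'>ts_w$ and invoke properties~\ref{worder} and~\ref{rorder2} of Definition~\ref{Lin} together with Observations~\ref{onlyone2} and~\ref{rwrelate}. The only cosmetic difference is that in the strict-inequality case the paper invokes Observation~\ref{readinit0} to get $ts'.sq\ne 0$, whereas you obtain it directly from $ts_w.sq\ge 1$ and the lexicographic order; both are fine.
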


\begin{proof}
Assume a write operation $\W{}$ completes before a read operation $\RD{}$ starts and $\W{},\RD{} \in \lS$.
Since $\W{}$ completes,
	$\W{}$ writes some $(-,ts)$ to $Val[-]$.
 Let  $(-,ts')$ denote the value that $\RD{}$ returns.
Since $\W{}$ completes before $\RD{}$ starts, 
	$\RD{}$ reads all $Val[-]$ after $\W{}$ writes $(-,ts)$ to $Val[-]$.
Since the timestamp in each $Val[-]$ are non-decreasing,
	$\RD{}$ reads $(-,ts'')$ from $Val[-]$ for some $ts'' \ge ts$.
By line~\ref{tsmax} of $\Al$,
	$ts'' \le ts'$.
So $ts \le ts'' \le ts'$.

\textbf{Case 1:} $ts = ts'$.
By Observation~\ref{onlyone2},
	$\RD{}$ returns the value that $\W{}$ writes.
So by~\ref{rorder2} of Definition~\ref{Lin},
	$\RD{}$ is after $\W{}$ in $\lS$.

\textbf{Case 2:} $ts < ts'$.
By Observation~\ref{readinit0},
	$ts'.sq > 0$.
By~\ref{rorder2} of Definition~\ref{Lin},
	some write operation $\W{}' \in \lS$ writes $(-,ts')$ to $Val[-]$ 
	and $\RD{}$ is after $\W{}'$ in $\lS$.
Since $ts < ts'$,
	by~\ref{worder} of Definition~\ref{Lin},
	$\W{}$ is before $\W{}'$ in $\lS$.
Since $\RD{}$ is after $\W{}'$,
	$\W{}$ is before $\RD{}$ in $\lS$.	
\end{proof}

\begin{lemma}\label{rr}
If a read operation $\RD{}$ completes before a read operation $\RD{}'$ starts in $H$ and $\RD{},\RD{}'\in \lS$,
	then $\RD{}$ is before $\RD{}'$ in $\lS$.
\end{lemma}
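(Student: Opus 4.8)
The plan is to mirror the case analysis used in the proof of Lemma~\ref{wr}, now applied to two read operations. First I would note that since $\RD{}$ and $\RD{}'$ both return values in $H$, property~\ref{allcw} of Definition~\ref{Lin} already guarantees $\RD{},\RD{}' \in \lS$ (so the hypothesis ``$\RD{},\RD{}' \in \lS$'' is automatic). Let $(-,ts)$ and $(-,ts')$ denote the values returned by $\RD{}$ and $\RD{}'$ respectively. The key preliminary step is to establish $ts' \ge ts$, exactly the monotonicity observation that underlies Lemma~\ref{wr}: since $\RD{}$ completes before $\RD{}'$ starts, for each $i$ the read of $Val[i]$ by $\RD{}'$ (line~\ref{collectts2} of $\Al$) occurs strictly after the read of $Val[i]$ by $\RD{}$; because the timestamps stored in each $Val[i]$ are non-decreasing over time (as in Lemma~\ref{lampo}), $\RD{}'$ reads from $Val[i]$ a timestamp at least as large as the one $\RD{}$ read. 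Taking $j$ such that $ts$ is the timestamp $\RD{}$ read from $Val[j]$, the lexicographic maximum $ts'$ that $\RD{}'$ returns in line~\ref{tsmax} satisfies $ts' \ge ts$.

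Next I would split into cases on $ts$ versus $ts'$. If $ts = ts'$, then $\RD{}$ and $\RD{}'$ read the same value, so by property~\ref{rorder3} of Definition~\ref{Lin} they appear in $\lS$ ordered by start time, and since $\RD{}$ completes before $\RD{}'$ starts, $\RD{}$ is before $\RD{}'$ in $\lS$. Otherwise $ts < ts'$; by Observation~\ref{readinit0} this forces $ts'.sq > 0$, so by property~\ref{rorder2} there is a unique write operation $\W{}'$ writing $(-,ts')$, and $\RD{}'$ appears immediately after $\W{}'$ in $\lS$ — hence it suffices to show $\RD{}$ is before $\W{}'$ in $\lS$. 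If $ts.sq = 0$, property~\ref{rorder1} places $\RD{}$ before every write operation in $\lS$, in particular before $\W{}'$. If $ts.sq \ne 0$, property~\ref{rorder2} places $\RD{}$ immediately after the unique writer $\W{}$ of $(-,ts)$ and before any subsequent write in $\lS$; since $ts < ts'$, property~\ref{worder} puts $\W{}$ before $\W{}'$ in $\lS$, so $\W{}'$ is a write subsequent to $\W{}$, whence $\RD{}$ is before $\W{}'$, which is before $\RD{}'$.

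I do not expect a genuinely hard step here: this is essentially the two-read analogue of Lemma~\ref{wr} (and of Lemma~\ref{linearization4} for the other algorithm), and once the preliminary inequality $ts' \ge ts$ is in hand, the ordering clauses~\ref{worder}--\ref{rorder3} of Definition~\ref{Lin} do all the work; the only point needing a little care is that very monotonicity argument. Finally, Lemma~\ref{rr} together with Lemma~\ref{l1_2}, Corollaries~\ref{ww} and~\ref{wAw}, and Lemmas~\ref{rw} and~\ref{wr} shows that $f$ respects real-time precedence (property~\ref{p2} of Definition~\ref{defl}); combined with the read-consistency built into Definition~\ref{Lin} (property~\ref{p3}) and Lemma~\ref{l1_2} (property~\ref{p1}), this yields that $f$ is a linearization function for $\seth$, completing the proof of Theorem~\ref{lamportisl}.
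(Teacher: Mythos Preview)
Your proposal is correct and follows essentially the same approach as the paper's proof: establish $ts \le ts'$ via the monotonicity of timestamps in $Val[-]$, then split into the cases $ts = ts'$ (handled by~\ref{rorder3}) and $ts < ts'$ (handled via Observation~\ref{readinit0} and the subcases $ts.sq = 0$ and $ts.sq \ne 0$ using~\ref{rorder1},~\ref{rorder2},~\ref{worder}). The paper's monotonicity step is phrased slightly differently (it only tracks the single register $Val[j]$ from which $\RD{}$ obtained $ts$, rather than all $Val[i]$), but the argument and case structure are otherwise identical.
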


\begin{proof}
Assume a read operation $\RD{}$ completes before a read operation $\RD{}'$ starts in $H$ and $\RD{},\RD{}'\in \lS$.
Let $(-,ts)$ and $(-,ts')$ denote the values that
	 $\RD{}$ and $\RD{}'$ return, respectively.
Since $\RD{}$ completes before $\RD{}'$ starts,	
	$\RD{}'$ reads all $Val[-]$ after $\RD{}$ reads $(-,ts)$ from $Val[-]$.
Since the timestamps in each $Val[-]$ are non-decreasing,
	$\RD{}'$ reads $(-,ts'')$ from $Val[-]$ for some $ts'' \ge ts$.
By line~\ref{tsmax} of $\Al$,
	$ts'' \le ts'$.
So $ts \le ts'' \le ts'$.

\textbf{Case 1:} $ts = ts'$.
Since $\RD{}$ completes before $\RD{}'$ starts,
	 $\RD{}$ is before $\RD{}'$ in $\RS$.
So by~\ref{rorder3} of Definition~\ref{Lin},
	$\RD{}$ is before $\RD{}'$ in $\lS$.

\textbf{Case 2:} $ts < ts'$.
By Observation~\ref{readinit0},
	 $ts'.sq > 0$.
By~\ref{rorder2} of Definition~\ref{Lin},	
	there is a write operation $\W{}' \in \lS$ that writes $(-,ts')$ to $Val[-]$ 
	such that $\RD{}'$ is after $\W{}'$ in $\lS$.

\textbf{Subcase 2.1}: $ts.sq = 0$.
	By~\ref{rorder1} of Definition~\ref{Lin},	
	$\RD{}$ is before all the write operations in $\lS$.
Then $\RD{}$ is before $\W{}'$ in $\lS$.
So $\RD{}$ is before $\RD{}'$ in $\lS$.

\textbf{Subcase 2.2}: $ts.sq \ne 0$.
By~\ref{rorder2} of Definition~\ref{Lin},
	some write operation $\W{} \in \lS$ writes $(-,ts)$ to $Val[-]$ 
	and $\RD{}$ is after $\W{}$ and before any subsequent write operations in $\lS$.
Since $ts < ts'$,
	by~\ref{worder} of Definition~\ref{Lin},
	$\W{}$ is before $\W{}'$ in $\lS$.
So $\RD{}$ is before $\W{}'$ in $\lS$.
Since $\RD{}'$ is after $\W{}'$ in $\lS$,
	$\RD{}$ is before $\RD{}'$ in $\lS$.
\end{proof}

By Corollary~\ref{ww}, Lemma~\ref{rw}, Lemma~\ref{wr},
	and Lemma~\ref{rr}, we have:
\begin{corollary}\label{l2}
If an operation $o$ completes before an operation $o'$ starts in $H$ and $o,o'\in \lS$,
	then $o$ is before $o'$ in $\lS$.
\end{corollary}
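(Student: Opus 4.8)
The plan is to prove this by a straightforward case analysis on whether each of $o$ and $o'$ is a read or a write operation. By property~\ref{allcw} of Definition~\ref{Lin}, every operation appearing in $\lS$ is either an operation that writes to $Val[-]$ or a read operation that returns some value $(v,ts)$; hence each of $o$ and $o'$ falls into one of these two categories, which gives four cases in total.

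Each of the four cases is handled directly by one of the preceding results, since the hypothesis ``$o$ completes before $o'$ starts'' is exactly the hypothesis required there. If $o$ and $o'$ are both writes, apply Corollary~\ref{ww}; if $o$ is a read and $o'$ is a write, apply Lemma~\ref{rw}; if $o$ is a write and $o'$ is a read, apply Lemma~\ref{wr}; and if $o$ and $o'$ are both reads, apply Lemma~\ref{rr}. In every case the conclusion is that $o$ occurs before $o'$ in $\lS$, which is precisely what the corollary asserts.

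There is essentially no obstacle here: the only thing to verify is that these four cases are exhaustive, and that is immediate from property~\ref{allcw} of Definition~\ref{Lin}. The substantive work has already been carried out in establishing Corollary~\ref{ww} and Lemmas~\ref{rw}, \ref{wr}, and \ref{rr} (each of which in turn rests on the timestamp monotonicity of the $Val[-]$ registers and on Lemma~\ref{lampo}); this corollary merely collects them into the single statement needed to verify property~\ref{p2} of Definition~\ref{defl}. Together with Lemma~\ref{l1_2} (property~\ref{p1}) and the read-consistency built into Definition~\ref{Lin} (property~\ref{p3}), it then follows that $f$ is a linearization function for $\seth$, completing the proof of Theorem~\ref{lamportisl}.
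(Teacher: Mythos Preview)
Your proposal is correct and takes essentially the same approach as the paper: the paper's proof of this corollary is simply the one-line citation ``By Corollary~\ref{ww}, Lemma~\ref{rw}, Lemma~\ref{wr}, and Lemma~\ref{rr},'' and you have merely made the underlying four-way case analysis explicit. Your additional remarks about exhaustiveness via property~\ref{allcw} and about how this corollary feeds into the verification of Definition~\ref{defl} are accurate context but go beyond what the paper records for this step.
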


By~\ref{rorder1} and~\ref{rorder2} of Definition~\ref{Lin}, we have:
\begin{observation}\label{l3}
For any read operation $r$ in $\lS$,
		if no write operation precedes $r$ in $\lS$, then $r$ reads the initial value of the register; 
		otherwise, $r$ reads the value written by the last write operation that occurs before $r$ in $\lS$.
\end{observation}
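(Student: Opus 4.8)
The plan is to fix an arbitrary read operation $r$ in $\lS$ and reason about the value $(v,ts)$ that it returns; since $r \in \lS$ it is a completed read, so it returns some value at line~\ref{returnmax} of $\Al$. I would then split the argument according to whether the sequence number $ts.sq$ is $0$ or not, mirroring the two placement rules~\ref{rorder1} and~\ref{rorder2} of Definition~\ref{Lin}.

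First I would handle the case $ts.sq = 0$. Here Observation~\ref{readinit} gives $ts = \langle 0,n\rangle$. I would additionally note that no write operation ever installs a timestamp with sequence number $0$: line~\ref{nsq} of $\Al$ sets $\nsq$ to a maximum sequence number plus $1$, hence $\nsq \ge 1$. Consequently the only tuples with $sq = 0$ ever present in any $Val[-]$ are the initial ones, so $v = 0$ is the initial value of $\REG$. Property~\ref{rorder1} of Definition~\ref{Lin} then places $r$ before every write operation in $\lS$, so in this case no write precedes $r$ and $r$ returns the initial value.

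Next I would handle the case $ts.sq \ne 0$. By Observation~\ref{rwrelate} there is a unique write operation $w$ that writes $(v,ts)$ to $Val[-]$, and by property~\ref{rorder2} of Definition~\ref{Lin} the read $r$ is placed after $w$ and before every subsequent write in $\lS$. Thus $w$ precedes $r$, and because $r$ comes before any write following $w$, the operation $w$ is in fact the last write before $r$ in $\lS$; moreover $r$ returns exactly the value $v$ that $w$ wrote.

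Finally I would assemble the stated ``if/otherwise'' form from these two cases. If no write precedes $r$ in $\lS$, then the second case cannot apply (it would exhibit a preceding write $w$), so $ts.sq = 0$ and, by the first case, $r$ returns the initial value. If instead some write precedes $r$, then the first case cannot apply (it would force no write to precede $r$), so $ts.sq \ne 0$ and the second case identifies the last preceding write $w$ whose written value $v$ is exactly what $r$ read. The only nonroutine point --- which I expect to be the main (and still minor) obstacle --- is justifying that a read with $ts.sq = 0$ genuinely returns the register's initial value $0$ rather than some value installed by a write; this is precisely what the $\nsq \ge 1$ remark supplies, complementing Observation~\ref{readinit}, which speaks only about the timestamp and not the value.
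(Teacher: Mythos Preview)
Your proposal is correct and follows the same approach as the paper, which simply cites properties~\ref{rorder1} and~\ref{rorder2} of Definition~\ref{Lin} as the justification for this observation. Your additional argument that line~\ref{nsq} forces $\nsq \ge 1$, so a returned tuple with $ts.sq = 0$ must carry the initial value $0$, fills in a detail the paper leaves implicit.
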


By Lemma~\ref{l1_2}, Corollary~\ref{l2} and Observation~\ref{l3},
	$f(H) = \lS$ satisfies properties~\ref{p1}, \ref{p2}, and \ref{p3} of Definition~\ref{defl}.
So:

\begin{lemma}
$f$ is a linearization function for the set of histories $\seth$ of $\Al$.
\end{lemma}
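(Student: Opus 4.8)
The plan is to show that, for every history $H \in \seth$ of $\Al$, the sequential history $\lS = f(H)$ specified in Definition~\ref{Lin} meets the three defining conditions of a linearization function listed in Definition~\ref{defl} --- properties~\ref{p1}, \ref{p2}, and~\ref{p3} --- and to do so by simply invoking the results already established for $\lS$.

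First I would note that $\lS$ is indeed a well-defined sequential history. By clause~\ref{worder} of Definition~\ref{Lin} the write operations of $H$ are linearly ordered by their timestamps, and these timestamps are pairwise distinct by Observation~\ref{onlyone2}; clauses~\ref{rorder1}--\ref{rorder3} then place each completed read in a definite position relative to this chain of writes (before all writes if it returns a timestamp with $sq = 0$, otherwise immediately after the unique write whose value it returns, with reads that return a common value further ordered among themselves by start time). These constraints are mutually consistent, so they determine a genuine sequential arrangement of exactly the operations named in clause~\ref{allcw}.

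Next I would verify the three properties, each of which is already proved. Property~\ref{p1}, that $\lS$ contains all completed operations of $H$ and possibly some non-completed ones, is exactly Lemma~\ref{l1_2}. Property~\ref{p2}, that an operation completing before another starts in $H$ must occur earlier in $\lS$, is exactly Corollary~\ref{l2}, which was itself obtained by combining the four case lemmas covering the write/write, read/write, write/read, and read/read combinations (Corollary~\ref{ww}, Lemma~\ref{rw}, Lemma~\ref{wr}, Lemma~\ref{rr}). Property~\ref{p3}, the read-consistency condition, is exactly Observation~\ref{l3}, which follows immediately from clauses~\ref{rorder1} and~\ref{rorder2} of Definition~\ref{Lin}. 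Putting these three together shows that $\lS = f(H)$ satisfies Definition~\ref{defl} for every $H \in \seth$, hence $f$ is a linearization function for $\seth$.

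There is no real obstacle left at this point: all the substantive reasoning has already been absorbed into the earlier lemmas, the most delicate being the read--read precedence argument of Lemma~\ref{rr} (where monotonicity of the per-register timestamps and the lexicographic-max rule must be combined carefully). The only item in this final step that merits a sentence of care is the well-definedness of $\lS$ noted above, which is why I would dispatch it first; after that the proof is a one-line assembly of Lemma~\ref{l1_2}, Corollary~\ref{l2}, and Observation~\ref{l3}.
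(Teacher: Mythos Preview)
Your proposal is correct and follows essentially the same approach as the paper: both verify properties~\ref{p1}, \ref{p2}, and~\ref{p3} of Definition~\ref{defl} by invoking Lemma~\ref{l1_2}, Corollary~\ref{l2}, and Observation~\ref{l3}, respectively. Your additional paragraph on the well-definedness of $\lS$ is a reasonable clarification that the paper leaves implicit, but otherwise the arguments coincide.
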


Thus:

\lampol*

\section{Proof of Theorem~\ref{swequal}}\label{proofsw}

We now prove Theorem~\ref{swequal} in Section~\ref{swsection},
	namely, \emph{any} linearizable implementation of SWMR registers is $\sly$
	(this holds for message-passing, shared-memory, and hybrid systems).
Thus, the well-known ABD implementation of SWMR registers in message-passing systems
	is not only linearizable; it is actually write strongly-linearizable.
(This implementation however is \emph{not} strongly linearizable~\cite{abdnotsl}.)

Consider an arbitrary implementation $\swA$ of a SWMR register.
Let $\seth$ be the set of histories of $\swA$.
Since $\swA$ implements a single-writer register, 
	the following holds:
\begin{observation}\label{swob}
In any history $H\in \seth$,
\begin{enumerate}
	\item \label{swob11}there are no concurrent write operations, and 
	\item \label{swob12}there is at most one incomplete write operation.
\end{enumerate}
\end{observation}

By part~\ref{swob11} of Observation~\ref{swob} and property~\ref{p2} of Definition~\ref{defl},
	we have the following:
\begin{observation}\label{swob2}
For any history $H\in \seth$ and any linearization function $f$ of $\seth$,
	the write operations in $f(H)$ are totally ordered by their start time in $H$.
\end{observation}

\begin{lemma}
If $\swA$ is linearizable,
	then $\swA$ is $\sly$.
\end{lemma}

\begin{proof}
Assume $\swA$ is linearizable.
By Definition~\ref{L-I},
	 there is a linearization function $f$ for $\seth$.
Consider a function $f^*$ that is modified from $f$ as follows:
for any history $H$ and its linearization $f(H)$,
	if the last operation $o$ in $f(H)$ is a write operation that is incomplete in $H$, 
	then we obtain $f^*(H)$ by removing $o$ from $f(H)$;
	otherwise, $f^*(H)$ equals $f(H)$.

\begin{claim}\label{ifwinthencor}
If a write operation $w$ is in $f^*(H)$,
	then $w$ is completed or read by some read operation in $H$.
\end{claim}

\begin{proof}
Assume, for contradiction,
	a write operation $w \in f^*(H)$ is incomplete and not read by any read operation in $H$.
Since $w \in f^*(H)$,
	$w$ is in $f(H)$ such that $w$ is not the last operation in $f(H)$ ($\star$).
Since $w$ is incomplete,
	by Observation~\ref{swob},
	$w$ is the last write operation in $H$.
By Observation~\ref{swob2},
	$w$ is the last write operation in $f(H)$.
Furthermore,
	since $w$ is not read by any read operation in $H$,
	no read operation is after $w$ in $f(H)$.
Thus, $w$ is the last operation in $f(H)$,
	which contradicts ($\star$).
\end{proof}

\begin{claim}\label{ifcorthenwin}
If a write operation $w$ is completed or read by some read operation in $H$,
	then $w$ is in $f^*(H)$,
\end{claim}

\begin{proof}
Assume a write operation $w$ is completed or read by some read operation in $H$.

\textbf{Case 1:} $w$ is completed in $H$.
Since $f$ is a linearization function for $\seth$,
	by property~\ref{p1} of Definition~\ref{defl}, 
	$w$ is in $f(H)$.
Since $f^*(H)$ removes only the incomplete operation from $f(H)$,
	$w$ is in $f^*(H)$.

\textbf{Case 2:} $w$ is read by some read operation $r$ in $H$.
Since $f$ is a linearization function for $\seth$,
	by property~\ref{p3} of Definition~\ref{defl},
	$w$ is before $r$ in $f(H)$ and so $w$ is not the last operation in $f(H)$.
Since $f^*(H)$ removes only the last operation from $f(H)$,
	$w$ is in $f^*(H)$.
	
Thus, in both cases 1 and 2,
	 $w$ is in $f^*(H)$.
\end{proof}

\begin{claim}
$f^*$ is a linearization function of $\mathcal{H}$.
\end{claim}
\begin{proof}
Consider any history $H\in \mathcal{H}$.
Since $f^*(H)$ removes from $f(H)$ only the operation that is incomplete in $H$,
	$f^*(H)$ still satisfies properties~\ref{p1} and \ref{p2} of Definition~\ref{defl}.
Since $f^*(H)$ removes only the last operation from $f(H)$,
	$f^*(H)$ still satisfies property~\ref{p3} of Definition~\ref{defl}.
Therefore, 
	$f^*$ is a linearization function of $\mathcal{H}$.
\end{proof}

\begin{claim}
$f^*$ satisfies property (P) of Definition~\ref{defwsl}. 
\end{claim}
\begin{proof}
Consider histories $G, H\in \mathcal{H}$ such that $G$ is a prefix of $H$.
Let $W_G$ and $W_H$ denote the write sequences in $f^*(G)$ and $f^*(H)$ respectively.
By Claim~\ref{ifwinthencor},
	 all the operations in $W_G$ are completed in $G$ or read by some read operations in $G$.
Since $G$ is a prefix of $H$,
	all operations in $W_G$ are also completed in $H$ or read by some read operations in $H$.
So By Claim~\ref{ifcorthenwin},
	$W_H$ contains all the operations in $W_G$.
By Observation~\ref{swob2},
	the write operations in $W_G$ and $W_H$ are totally ordered by their start time in $G$ and $H$, respectively.
Then since $G$ is a prefix of $H$ and $W_H$ contains all the operations in $W_G$ ,
	$W_G$ is a prefix of $W_H$. 
So $f$ satisfies property (P) of Definition~\ref{defwsl}.
\end{proof}
Thus,
	 $f$ is a $\str$ function for $\seth$ and $\swA$ is $\sly$.
\end{proof}

\swe*

\end{appendices}

\end{document}